\newcommand{\zz}{\mathbb{Z}^+}
\theoremstyle{theorem}
\newtheorem{theorem}{Theorem}
\newtheorem{lemma}{Lemma}
\newtheorem{proposition}{Proposition}
\theoremstyle{definition}
\newtheorem{definition}{Definition}
\newtheorem{example}{Example}
\theoremstyle{remark}
\newtheorem{remark}{Remark}
\newcommand{\tx}{\mathrm{s}}
\newcommand{\rx}{\mathrm{r}}
\newcommand{\conv}{\mathrm{conv}}
\newcommand{\cl}{\mathrm{cl}}
\newcommand{\dom}{\mathrm{dom}}
\newcommand{\mc}[1]{\mathcal{#1}}
\newcommand{\cv}[1]{\mathbf{#1}}
\newcommand{\cR}{{\mathcal{R}}}
\newcommand{\dR}{\widetilde{\mathcal{R}}}
\newcommand{\pleq}{\preccurlyeq}
\newcommand{\pgeq}{\succcurlyeq}
\tikzstyle{src}=[circle,draw=gray!80!black,fill=gray!10,thick,inner
\tikzstyle{dst}=[circle,draw=gray!80!black,fill=gray!10,thick,inner
\tikzstyle{mid}=[circle,draw=gray!80,fill=gray!10,thick,inner
\title{Wireless Network Scheduling with Discrete Propagation Delays: Theorems and Algorithms}
 \author{\IEEEauthorblockN{Shenghao~Yang,~Jun~Ma~and~Yanxiao~Liu}%
\thanks{This paper was presented in part at INFOCOM 2021~\cite{jma21scheduling}.}
\thanks{S.~Yang is with the School of Science and Engineering, The Chinese University of Hong Kong, Shenzhen, Shenzhen, China. Email: shyang@cuhk.edu.cn}
\thanks{J.~Ma is with QIANHAI FOF, Shenzhen, China, and with the School of Management, Xiamen University, Xiamen, China. Email: jma@qhfof.com}
\thanks{Y.~Liu is with the Department of Information Engineering, The Chinese University of Hong Kong, Hong Kong, China. Email: yanxiaoliu@link.cuhk.edu.hk}
\thanks{This work was funded in part by the Shenzhen Science and Technology Innovation Committee (Grant JCYJ20180508162604311).}}
\begin{document}

\maketitle
\begin{abstract}
  The literature provides evidence that considering signal propagation
  delays can significantly enhance the scheduling rate region of
  wireless networks. This paper focuses on the link scheduling problem
  in networks where signal delays between nodes are multiples of a
  time interval. To model such networks, a directed hypergraph is
  employed, along with an integer matrix that specifies the delays.
  The link scheduling problem is closely connected to the independent
  sets of the periodic hypergraph induced by the network
  model. However, due to the infinite number of vertices, it is
  impractical to enumerate the independent sets of the periodic
  hypergraph using generic graph algorithms. To tackle this challenge,
  a graphical approach is proposed in this paper. The link scheduling
  rate region is characterized using a finite directed graph called a
  scheduling graph, which is derived from the network model. A
  collision-free schedule of the network corresponds to a path in the
  scheduling graph, and the rate region is determined by the convex
  hull of the rate vectors associated with the cycles in the
  scheduling graph. Although existing cycle enumeration algorithms can
  be employed to calculate the rate region, their computational
  complexity becomes prohibitively high as the size of the scheduling
  graph grows exponentially with the number of network links. To
  address this issue, the dominance property of a special scheduling
  graph called the step-$T$ scheduling graph is investigated. This
  property allows the utilization of specific subgraphs of the
  step-$T$ scheduling graph to characterize the scheduling rate
  region, achieving a reduction in both the number of cycles and their
  lengths. For common problems such as calculating the rate region and
  maximizing a weighted sum of the scheduling rates, algorithms
  leveraging the dominance property are developed.  These algorithms
  can be more efficient than using generic graph algorithms
  directly on the scheduling graphs.
\end{abstract}

\section{Introduction}

Wireless communication media, such as radio, light, and sound, all
have nonzero signal propagation delays between two communicating
devices separated by a nonzero distance. In the existing theory of
wireless communications, these signal propagation delays are typically
regarded as a factor that can potentially generate
interference~\cite{proakis2003intersymbol}. However, studies on
underwater acoustic communications and interference channels have
observed that wireless communication networks can actually benefit
from these signal propagation
delays~\cite{hsu2009st,guan2011mac,grokop2011interference,
  chitre2012throughput,anjangi16,bai2017throughput,ma2019hybrid}.  To
gain a better understanding of the utilization of propagation delays,
we present a theoretical framework for studying network scheduling
when considering these signal propagation delays.

\subsection{Background and Related Results}

In most terrestrial radio-based wireless communication systems, such
as the 5G cellular network, guard intervals are employed in network
scheduling to mitigate the interference caused by signal propagation
delays.  The inclusion of guard intervals does not significantly
degrade system performance since the duration for transmitting signals
is much greater than the signal propagation delay between devices.
For communication within a few kilometers, the propagation delay of
radio waves typically falls within the range of tens of
microseconds. In contrast, the signal frame length is typically a few
milliseconds. This scheduling approach, characterized by a long signal
frame length, is referred to as \emph{framed scheduling} and
constitutes a focal point of research in wireless network scheduling
for terrestrial radio-based
communications~\cite{hajek1988link,ephremides1990scheduling,
  tassiulas92stability,sharma2006complexity}.

In underwater acoustic communications, the propagation delay of sound
can be significantly longer, measured in seconds. For instance, the
sound speed in underwater environments is approximately $1,500$ meters
per second, resulting in a delay of around $2$ seconds for sound to
propagate over a distance of $3$ kilometers.  If framed scheduling is
adopted in this scenario, the frame length should be on the order of
tens or hundreds of seconds~\cite{le2013pmac}.  Researchers in
underwater acoustic networks have been motivated to address delays in
the network scheduling
problem~\cite{hsu2009st,guan2011mac,chitre2012throughput,anjangi16,bai2017throughput,ma2019hybrid}.
They have observed substantial performance advantages, such as
improved energy consumption and throughput, by allowing smaller frame
lengths that are comparable to the propagation delay between
communication devices.

Researchers have demonstrated that in a network consisting of $K$ pairs
of closely located communication devices, framed scheduling allows
only one device to transmit a signal at a given timeslot without
collision. By carefully considering the delays, it is feasible to
devise scheduling schemes where all $K$ pairs can transmit
simultaneously without generating collisions \cite{chitre2012throughput}. Moreover, recent work~\cite{fan22isit} has provided examples
with relaxed delay constraints to illustrate the unbounded advantages
of scheduling with propagation delays, particularly when the network
size is large.

The advantage of utilizing propagation delays has also been discovered in the study of the time-domain interference alignment approach for multi-user interference channels~\cite{grokop2011interference}. 
Though the time-domain interference alignment and network
scheduling are not equivalent problems,\footnote{For instance, in time-domain interference alignment, the timeslot size corresponds to the sampling time interval, and each timeslot contains a single sample value~\cite{grokop2011interference}. In contrast, in the network scheduling problem, each timeslot is typically considered as a radio frame containing a sequence of sample values.} it is possible to transform a network schedule
into an achievable scheme for the interference
channel~\cite{grokop2011interference}.  Using this approach, it is shown
in~\cite{grokop2011interference} that there exists a non-vanishing rate
for each user when the number of users tends to infinity. Additionally, some papers
discussing underwater acoustic networks also refer to scheduling with
propagation delays as interference
alignment~\cite{zeng2014shark,zhao2021message}.

\begin{table*}[tb]
  \centering
  \caption{Comparison of the signal propagation delay and the OFDM symbol length in both underwater acoustic communication and territory radio communication.
     The OFDM subcarrier spacing for underwater acoustic communication refers to~\cite{wan2015ofdm}.
    The $15$ kHz OFDM subcarrier spacing is used in both 4G LTE and 5G NR, and the $15\times 2^6$ kHz OFDM subcarrier spacing is used in 5G NR. The OFDM symbol length does not include the cyclic prefix. 
  }
  \label{tab:1}
  \begin{tabular}{lrrrrr}
    \toprule
    & propagation & transmission  & propagation  & OFDM subcarrier  & OFDM symbol  \\
    & speed (km/s) & range (km)      & delay (s) & spacing (kHz) & length (s) \\
    \midrule
    underwater acoustic & $1.5$ & $3$ & $2$ & $0.005$  & $0.2$ \\
    4G LTE & $3\times 10^5$ & $3$ & $1 \times 10^{-5}$ & $15$ & $6.67\times 10^{-5}$ \\
    5G NR & $3\times 10^5$ & $3$ & $1 \times 10^{-5}$  & $15\times 2^{6}$  & $1.04\times 10^{-6}$ \\
    \bottomrule
  \end{tabular}
\end{table*}

The benefit of using propagation delays can be achieved in terrestrial
radio communications when a sufficiently large bandwidth is
utilized. We can consider orthogonal frequency-division multiplexing
(OFDM) as an example, which is employed in many modern wireless
communication systems. In OFDM, a frame is typically composed of
multiple OFDM symbols. A comparison of OFDM numerology in different
systems is presented in Table~\ref{tab:1}.  In underwater acoustic
communications, the OFDM symbol length can often be much shorter than
the typical signal propagation delay. However, for 4G LTE with a $15$
kHz OFDM subcarrier spacing, the OFDM symbol length is more than $6$
times the typical signal propagation delay. In 5G NR, larger OFDM
subcarrier spacings up to $15\times 2^6$ kHz are proposed for
bandwidths up to $400$ MHz in millimeter-wave frequencies. In this
case, the OFDM symbol length can be approximately $1/10$ of the
typical signal propagation delay.  Furthermore, wireless communication
in the frequency range of $100$ GHz to $10$ THz has been discussed for
the next generation of cellular networks~\cite{o2019perspective}. In
such scenarios, the bandwidth can reach tens of gigahertz, and the
OFDM symbol length can be several nanoseconds.

Although the study of wireless networks with propagation delays shows
promise, it is still in its preliminary stages. Early
works~\cite{hsu2009st,guan2011mac} have utilized mixed integer linear
programming models to capture collision constraints and derive
heuristic algorithms. In cases where delays are integers, the
scheduling problem with propagation delays is formulated as a weighted
directed graph. In this graph, the vertices represent communication
links, directed edges model collision relations between two links, and
the weight of an edge denotes the corresponding
delay~\cite{hsu2009st,grokop2011interference}.  For complete weighted
directed graphs (where any two links can generate collisions with each
other), existing works~\cite{grokop2011interference,
  chitre2012throughput} have discovered that the network scheduling
problem exhibits a periodic property. Dynamic programming approaches
have been employed to maximize the (weighted) total scheduling
rate. However, these approaches suffer from high computation costs due
to the exponential state space and do not provide an explicit
characterization of the scheduling rate region.

Without considering propagation delays, the network scheduling problem
can be formulated using a
graph or a
hypergraph~\cite{arikan1984some,hajek1988link,ephremides1990scheduling,tassiulas92stability,jain2003impact,sharma2006complexity,lin2006,wu2007scheduling,chaporkar2008throughput,mceliece1994performance,sarkar1998hypergraph,li2012maximal,zhang2016radio,ganesan2021some}.  In this formulation,
 a vertex represents a network link and an edge represent the
collision relation among the links. When considering a \emph{protocol or binary
  collision model}~\cite{hajek1988link,tassiulas92stability}, a graph can be used to capture the pair-wise collision
relation among the network links. In a more practical \emph{physical
  Signal-to-Interference-and-Noise Ratio (SINR)
  model}~\cite{tse2005fundamentals}, a group of links can collectively
generate a collision with another link, which is described using a
hypergraph. Although the network links can be directed, the graphical representation of collision is usually undirected.

The graphical model of collisions plays a crucial role in network scheduling research. It allows for the explicit characterization of the scheduling rate region using the independent sets of the graph or hypergraph, enabling the analysis of the wireless network's performance based on graphical properties.
However, when it comes to scheduling with delays, a graphical theory specifically addressing this aspect is currently lacking but highly desired. In this paper, our objective is to fill this gap by proposing a graphical approach that characterizes the scheduling rate region with delays.

\subsection{Main Contributions}

For a wireless network, if all the signal propagation delays are
multiples of a fixed timeslot length, we say the delays are
\emph{discrete}.  In this paper, we study wireless networks scheduling
with discrete propagation delays, which serves as an initial step
towards understanding networks with general real number delays.  For a
network with general delays, the rate region can be approximated by
the rate region of a network with discrete delays~\cite{fan22isit}.  A
general approach is also presented in~\cite{fan22isit} to approximate
a network with general delays using one that has discrete delays, with
a controlled performance difference. Additionally, the problem of
time-domain interference alignment can
also be approximated as a problem with discrete signal propagation
delays~\cite{grokop2011interference}.

We propose a comprehensive network model that incorporates a matrix to
represent delays and a \emph{directed hypergraph} to describe
collision relations. In the case of a binary collision model, the
directed hypergraph simplifies to a directed graph. When all delays
are zero, the model reduces to one without considering delays,
allowing for the omission of edge direction without impacting the
network scheduling problem. However, in the presence of general delays, edge
direction becomes crucial, and the network scheduling problem is
connected to independent sets within the \emph{periodic (undirected)
  hypergraph} induced by our network model.

Despite the connection between the network scheduling problem and
independent sets within the periodic hypergraph, finding a complete
solution remains elusive due to the infinite nature of the periodic
hypergraph. Notably, independent sets can have unbounded sizes, and
there exists an infinite number of them. Consequently, existing
approaches can only provide approximations of the scheduling rate
region, and the computational cost is high, as demonstrated in prior
research~\cite{grokop2011interference, chitre2012throughput}.  In this
paper, we address the challenges associated with the infinite number
of independent sets in the periodic hypergraph, and our main
contributions are twofold: i) exact and explicit characterizations of
the scheduling rate region, and ii) efficient algorithms for
calculating the rate region and maximizing a weighted sum of the link
rates.

We show that the scheduling rate region of a network can be achieved
using collision-free, periodic schedules. To provide an explicit
characterization of this scheduling rate region, we adopt a graphical
approach. For our network model, we establish a series of directed
graphs called \emph{scheduling graphs}. Each scheduling graph has two
parameters: a blocklength $T$ and a step size $Q$ (where
$1\leq Q\leq T$). We show in general that a collision-free schedule is
equivalent to a path within a scheduling graph with $T\geq 2D^*$,
where $D^*$ is a parameter derived from the delay
matrix. Consequently, the scheduling rate region is the convex hull of
the rate vectors associated with the cycles of the scheduling graph,
and hence is a polytope.\footnote{In this paper, a cycle in a directed
  graph has no repeated vertices (which is also called a simple
  circuit) and hence the total number of cycles of a finite graph is
  finite.} In the case of binary collision, the characterization of
the scheduling rate region can be achieved using a scheduling graph
with $T\geq D^*$.  It is worth noting that the scheduling graphs,
regardless of whether they are induced by graph-based or
hypergraph-based network models, share common properties, except for
the variation in the bounds imposed on $T$. These common properties
allow for a unified study of scheduling related problems based on the
scheduling graphs.

We further study scheduling-related algorithms, specifically focusing
on computing the rate region and maximizing a weighted sum of the link
rates. Based on the characterization of the scheduling rate
region, we explore various approaches to address these computational
problems. As a straightforward approach, one can employ a backtracking
algorithm, such as Johnson's algorithm~\cite{johnson1975finding}, to
enumerate cycles within a scheduling graph and consequently obtain the
rate region. However, the computational cost of this approach becomes
prohibitively high as the network size increases. This can be
attributed to two main factors: the exponential growth of vertices and
edges in a scheduling graph relative to the number of network links,
and the exponential growth in the number of cycles as
the size of the scheduling graph increases.

To simplify the characterization of the rate region, we investigate an
additional property of the scheduling problem. Specifically, we
introduce a dominance property for the step-$T$ scheduling graph,
where the step size $Q = T$. This property allows us to leverage
subgraphs of the step-$T$ scheduling graph to characterize the
scheduling rate region, achieving a reduction in both the number of
cycles and their respective lengths. To illustrate the benefits of the
dominance property in the step-$T$ scheduling graph, we provide an
example involving a sequence of step-$T$ scheduling graphs. In these
scheduling graphs, the numbers of edges and vertices are exponential in
the number of links, but the corresponding reduced scheduling graphs
possess a constant number of cycles with constant lengths.

Based on the dominance property, we develop two algorithms for
calculating the scheduling rate region. The first algorithm enables
the calculation of the entire rate region by enumerating only the
cycles present in a reduced scheduling graph. This approach proves to
be more efficient than enumerating all cycles in the original
scheduling graph. The second algorithm takes an incremental approach,
specifically targeting a subset of the scheduling rate region
characterized by cycles up to a certain length. Numerical evaluations
demonstrate that this algorithm outperforms the direct enumeration of
cycles up to a specific length, particularly in scenarios involving
large network sizes.

To solve a maximization problem on the scheduling rate region, the
straightforward approach involves two steps: computing the scheduling
rate region or a subset thereof, and then maximizing the objective
function within the feasible rate vectors obtained.  However, this
approach can become impractical for larger networks due to the substantial
computational cost associated with calculating the rate region. To
address this challenge, we propose an algorithm that leverages the
insights from the dominance property. This algorithm maximizes a
linear function without the need to compute the entire rate region,
resulting in significantly lower computation costs compared to the
straightforward approach.

Last, our characterization of the independent sets in periodic
hypergraphs holds potential for various applications in operational
research problems~\cite{karypis1997multilevel,alon1999independent} as
well as transportation systems~\cite{harrod2011modeling}. The insights
gained from our research can be leveraged to optimize decision-making
processes in these domains.

\subsection{Paper Organization}

The remainder sections of the paper are organized as follows. In Sec.~\ref{sec:model},
we present the network model and introduce the fundamental properties of the periodic graph induced by the network model. Additionally, we extend the isomorphism and connectivity properties of periodic graphs to periodic hypergraphs.
Sec.~\ref{sec:rr} focuses on basic theoretical results. We provide a proof that the scheduling rate region can be attained through collision-free, periodic schedules, and establish the convexity of the scheduling rate region. Moreover, we explore how the isomorphism and connectivity properties of periodic hypergraphs can simplify the rate region problem.
Moving on to Sec.~\ref{sec:framed}, we characterize the achievable rates
using scheduling with guard intervals.

Our main results are presented from Sec.~\ref{sec:sg} to Sec.~\ref{sec:algm}. 
In Sec.~\ref{sec:sg}, we introduce the concept of scheduling graphs and demonstrate that a collision-free schedule is equivalent to a directed path in a scheduling graph. We explore the use of cycles and paths within a scheduling graph to characterize the scheduling rate region effectively. Additionally, in Sec.~\ref{sec:bin}, we enhance certain results specifically for the binary collision model. 
Moving on to Sec.~\ref{sec:alg}, we investigate the dominance property of step-$T$ scheduling graphs. By analyzing this property, we derive refined characterizations of the rate region and develop algorithms to compute the scheduling rate region. 
In Sec.~\ref{sec:algm}, we study how to maximize a linear function over the scheduling rate region.

Lastly, Sec.~\ref{sec:con} serves as the concluding remarks, where we discuss possible extensions of our results and future research directions.  To facilitate understanding and reference, we have compiled a list of notations used throughout the paper in 
Table~\ref{tab:notation}.

\begin{table}
  \caption{Some notations used in the paper, listed in the alphabetical order.}\label{tab:notation}
  \centering %
  \begin{tabular}{lp{12cm}l}
    \toprule
    Notation & Explanation & Section \\
    \midrule %
    $\conv \mc A$ & the convex hull of a set $\mc A$ & \ref{sec:pssg} \\
    $\cl(P)$ & the cycle generated from a path $P=(A_0,\ldots,A_k)$ in $(\mc M_T, \mc E_T)$ & \ref{sec:dom} \\
    $D_{\mc L}$ & the link-wise delay matrix & \ref{sec:linkmodel} \\
    $D^*_{\mc N}$, $D^*$ &  the character of network $\mc N$ & \ref{sec:sregion}  \\
    $\dom \mc A$ & the collection of all $B\in (\mc R^+)^{m\times n}$ that is dominated by some elements in $\mc A$ & \ref{sec:dom} \\
    $\mc E^*$ & $\mc E^* = M^*_{2T}$ & \ref{sec:sgcal} \\
    $\mc I(l)$ & the collision set of a link $l$ & \ref{sec:dnm} \\
    $\mc I$ & the collision profile, $\mc I = (\mc I(l),l\in \mc L)$ & \ref{sec:dnm} \\
    $\mc L$ & the set of communication links & \ref{sec:dnm}\\
    $\max_{\pgeq}\mc A$ & the set of maximal elements of the
partially ordered set $(\mc A, \pgeq)$ & \ref{sec:alg} \\
    $(\mc M_T, \mc E_{T,Q})$ & the scheduling graph with vertex set $\mc M_T$ and edge set $\mc E_{T,Q}$ & \ref{sec:sgcs}\\
    $(\mc M_T, \mc E_{T})$ & the step-$T$ scheduling graph, $\mc E_T = \mc E_{T,T}$ & \ref{sec:sgcs}\\
    $\mc M^*_L$ & $ \{B: (B,B')\in \mc E^* \text{ for certain } B'\}$  & \ref{sec:sgcal}  \\
    $\mc M^*_R$ & $\{B': (B,B') \in \mc E^* \text{ for certain } B\}$  & \ref{sec:sgcal} \\
    $\mc N$ &  the (link-wise) network model, $\mc N = (\mc L, \mc I, D_{\mc L})$ & \ref{sec:linkmodel}\\
    $\mc N^\infty$ & the  periodic hypergraph induced by $\mc N$ & \ref{sec:linkmodel}\\
    $\mc N_{L,K}^{\text{line}}$ & the uniform line network of $L$ hops with the $K$-hop collision model & \ref{sec:linkmodel} \\
    $\mc N^T$ & the  subgraph of $\mc N^{\infty}$ of $T$ columns & \ref{sec:framed} \\
    $Q$ & the step size of a scheduling graph & \ref{sec:sgcs}\\
    $R_P$ & the rate vector of a closed path in a scheduling graph & \ref{sec:pssg}\\
    $R_S^{\mc N}$, $R_S$ & the rate vector of schedule $S$ for network $\mc N$ & \ref{sec:sregion}  \\
    $\cR^{\mc N}$, $\cR$ &  the (scheduling) rate region of network $\mc N$ & \ref{sec:sregion}  \\
        $\mc R^{(\mc M_T,\mc E_{T,Q})}$ & the convex hull of the rate vectors of all the cycles in $(\mc M_T, \mc E_{T,Q})$ & \ref{sec:pssg} \\
    $\mc R_k$ & the subset of $\mc R^{(\mc M_T,\mc E_{T})}$ generated by the cycles of $(\mc M_T, \mc E_{T})$ up to length $k$  & \ref{sec:dom} \\ 
    $\dR^{\mc N^T}$ & the   convex hull of the rate vectors of all the independent sets of $\mc N^T$ & \ref{sec:framed} \\
    $\mathbb R$, $\mathbb R^+$ & $\mathbb R$ is the set of real numbers, $\mathbb R^+$ is the set of non-negative real numbers & \ref{sec:model} \\
    $S(l,t)$ & the entry of a schedule $S$ indexed by the link $l$ and time $t$ & \ref{sec:model} \\
    $S[T,Q,k]$ & the submatrix of a schedule $S$ of $T$ columns, starting from the $kQ$ column & \ref{sec:sgcs} \\
    $T$ & the blocklength of a scheduling graph & \ref{sec:sgcs}\\
    $(\mc V,\mc F)$ & the reduced scheduling graph &\ref{sec:aep} \\
    $\mathbb{Z}$, $\zz$ & $\mathbb{Z}$ is the set of integers, $\zz$ is set of nonnegative integers & \ref{sec:model}\\
    $\pleq$, $\pgeq$ & the partial order relation defined on $\mathbb{R}$, and can be applied on matrices of the same size component-wisely & \ref{sec:rr} \\
    $\land$ & the minimum function of two real numbers, and can be applied on two matrices of the same size component-wisely & \ref{sec:alg} \\
    $\cv 1$ & a column vector with all entries $1$ & \ref{sec:amp} \\
    \bottomrule
  \end{tabular}
\end{table}

\section{Hypergraph Network Model and Periodic Hypergraph}
\label{sec:model}

We propose a general network model that consists of a matrix that
specifies the delays and a \emph{directed hypergraph} that describes
the collision relations. Since the matrix contains only integer
values, this model is also known as a \emph{discrete network
  model}. We formulate the scheduling problem and demonstrate its
relationship to the periodic hypergraph induced by the network model.

Let $\mathbb{Z}$ denote the set of integers and $\zz$ represent the
set of nonnegative integers. Similarly, let $\mathbb{R}$ denote the
set of real numbers, and $\mathbb{R}^+$ correspond to the set of
non-negative real numbers.

\subsection{Discrete Network Model}
\label{sec:dnm}

We start with a node-based network model, but as we progress, we will discover that utilizing only network links is adequate for solving the network scheduling problem.

Suppose time is slotted and each timeslot is indexed by an integer
$t \in \mathbb{Z}$. Consider a network of $N$ nodes indexed by
$1, 2, \ldots, N$. Each node has the capability to both transmit and
receive a specific communication signal within a timeslot. The signal
transmitted by node $i$ at timeslot $t$ propagates to node $j$ at time
$t + D(i,j)$, where $D(i,j) \in \mathbb{Z}^+$ represents the signal
propagation delay from node $i$ to node $j$.  The transmission of node
$i$ in the timeslot $t$ does not affect the reception of node $j$ in
any other timeslots.  The matrix $D=(D(i,j), 1\leq i, j \leq N)$ is
called the \emph{delay matrix} of the network.

In our network model, a \emph{(communication) link} is represented by an ordered pair $(s, r)$, where $1 \leq s \neq r \leq N$, indicating the transmitting and receiving nodes, respectively. Links are directional, meaning that $(i, j)$ and $(j, i)$ are considered distinct links.
Let $\mathcal{L}$ denote a finite set of all the links. For a given link $l$, we use $\tx_l$ and $\rx_l$ to denote the transmitting node and the receiving node of $l$, respectively. Each link can be in one of two states: active or inactive. A link $l$ is considered \emph{active} in a timeslot $t$ if the transmitting node $\tx_l$ sends a signal in timeslot $t$ intended to be received by node $\rx_l$ in timeslot $t + D(\tx_l, \rx_l)$. Conversely, a link $l$ is deemed \emph{inactive} in a timeslot $t$ if no signal is transmitted by node $\tx_l$ during that timeslot.

\begin{example}[Uniform line networks]\label{exam:line}
  A line network consisting of $L$ hops comprises $L+1$ nodes and the link set defined as:
  \begin{equation*}
    \mc L = \{l_i\triangleq (i,i+1): i = 1,\ldots,L\}. 
  \end{equation*}
  In this network, the delay matrix $D$ is defined such that $D(i, j) = |i - j|$ for $1 \leq i, j \leq L+1$. Throughout this paper, we will utilize this network as an example for various definitions and results.
\end{example}

To incorporate the constraints of link activation, we assign to each link $l$ a subset $\mc I(l)$ of $2^{\mc L}$, called the \emph{collision set} of $l$. 
Each subset of links in the collision set $\mc I(l)$ has the potential to impact the reception of node $\rx_l$. In general, when link $l$ is active in timeslot $t$, we declare that a \emph{collision occurs} if there exists a subset $\theta \in \mc I(l)$ such that each link $l'\in \theta$ is also active in timeslot $t+D(\tx_l,\rx_l)-D(\tx_{l'},\rx_l)$, i.e., the signal transmitted by $\tx_{l'}$ propagates to $\rx_l$ in the timeslot $t+D(\tx_l,\rx_l)$. In other words, when all the links in $\theta$ are active in specific timeslots such that their signals simultaneously propagate to node $\rx_l$ at the same timeslot $t+D(\tx_l,\rx_l)$, the transmission of link $l$ in timeslot $t$ fails due to a collision.

The collision set defined above is flexible and inclusive of various collision scenarios. To specifically model the scenario where two links $l$ and $l'$ with the same transmitting node (i.e., $\tx_l = \tx_{l'}$) cannot be active simultaneously, we can define the collision sets with ${l'} \in \mathcal{I}(l)$ and ${l} \in \mathcal{I}(l')$. 
To model the constraint of half-duplex communication, where a node cannot transmit and receive signals simultaneously, the collision set $\mathcal{I}(l)$ should include all non-empty subsets of $\{l': \tx_{l'} = \rx_l\}$.

\begin{example}[]\label{exam:coll}
  For the line network in Example~\ref{exam:line} with $L=4$, the collision sets of the links can be defined as follows:
  \begin{IEEEeqnarray*}{rClrCl}
    \mc I(l_1) & = & \{\{l_2\}\}, &
    \mc I(l_2) & = & \{\{l_3\}, \{l_1,l_4\}\},\\
    \mc I(l_3) & = & \{\{\l_4\}\},\quad &
    \mc I(l_4) & = & \emptyset.
  \end{IEEEeqnarray*}
  From these collision sets, we can observe that $\{l_{i+1}\} \in \mathcal{I}(l_i)$ for $i=1,2,3$. This implies that nodes $2, 3, 4$ cannot transmit and receive signals simultaneously. Additionally, the set $\{l_1,l_4\}$ is in the collision set of link $l_2$. This means that if link $l_1$ is active in timeslot $t-1$ and link $l_4$ is active in timeslot $t$, a collision will occur if link $l_2$ is active in timeslot $t$.
\end{example}

In the network model described above, we define
$\mc I = (\mc I(l),l\in \mc L)$ as the \emph{collision profile}.  The
collision relation among links can be represented as a \emph{directed
  hypergraph}, denoted as $(\mc L, \mc I)$, with the vertex set
$\mc L$ and the directed edge set
$\{(l,\theta): l\in \mc L, \theta\in \mc I(l)\}$. It is worth noting
that in a general directed hypergraph with the vertex set $\mc L$, an
edge belongs to $2^{\mc L}\times 2^{\mc L}$
\cite{gallo1993directed}. However, our directed hypergraph
$(\mc L, \mc I)$ is a special case where the tail of an edge must be a
singleton. Therefore, we represent an edge in the hypergraph of our
network model as $(l,\theta)\in \mc L \times 2^{\mc L}$.  The relation
of the hypergraph model and the physical model has been discussed
in~\cite{grokop2011interference}. For the sake of completeness, we
provide details in the Appendix on how to transform the results
obtained to a wireless network in the physical model.

When all the collision sets $\mc I(l)$ consist only of singletons
(i.e., for any $\theta \in \mc I(l)$, $|\theta|=1$), the collision
model and the network model are said to be \emph{binary}. For a binary
collision model, we can represent $\mc I(l)$ as a subset of $\mc L$ to
simplify the notation, and $(\mc L, \mc I)$ becomes a directed graph.
For scheduling with propagation delays, the network model studied in
\cite{grokop2011interference,chitre2012throughput} is a binary model
with $\mc I(l) = \mc L\setminus \{l\}$. While our primary focus is on
the general hypergraph model, we will demonstrate that some of our
results can be further improved for the binary collision model.

\begin{example}[Line network with the $K$-hop collision model] \label{ex:1}
For the line network defined in Example~\ref{exam:line}, we consider a \emph{binary collision model} called the $K$-hop model, where the reception of a node can only have collisions from nodes within $K$ hops distance~\cite{sharma2006complexity}.
For each link $l_i$ with $i = 1,\ldots,L$, the collision set $\mc I(l_i)$ of the $K$-hop model is defined as:
\begin{equation}\label{eq:exi}
\mc I(l_i) = \{l_j: j\neq i, |j-i-1| \leq K\}.
\end{equation}
When $K\geq 1$, it can be observed that for $i=1,\ldots,L-1$, $l_{i+1} \in \mc I(l_i)$. This implies that node $i+1$ is half-duplex, meaning it cannot transmit and receive signals simultaneously. In the case of $L=4$ and $K=1$, the collision sets are as follows:
  \begin{equation}
    \label{eq:k1}
    \begin{IEEEeqnarraybox*}{rCl.rCl} 
      \mc I(l_1) & = & \{l_2,l_3 \},\quad &
      \mc I(l_2) & = & \{l_3,l_4\}, \\
      \mc I(l_3) & = & \{l_4\},  &
      \mc I(l_4) & = & \emptyset.
    \end{IEEEeqnarraybox*}
  \end{equation}
  Note that the collision relation among links is not necessarily symmetric. In the example where $L=4$ and $K=1$, link $l_3$ may generate collisions for $l_1$, but $l_1$ does not generate collisions for $l_3$. This can be understood by examining the corresponding network nodes: link $l_1$ represents the communication from node $1$ to node $2$, while $l_3$ represents the communication from node $3$ to node $4$. In the $K$-hop model with $K=1$, the transmission of node $3$ can affect the reception of node $2$, but the transmission of node $1$ cannot affect the reception of node $4$. 

\end{example}

When all delays are set to $0$, the network model we defined corresponds to a model without considering delays. In the literature, undirected hypergraphs (or graphs) are commonly used to model collisions in such scenarios~\cite{arikan1984some,hajek1988link,ephremides1990scheduling,tassiulas92stability,lin2006,wu2007scheduling,chaporkar2008throughput,
  mceliece1994performance,sarkar1998hypergraph,li2012maximal,zhang2016radio,ganesan2021some}.
The reason for the collision model being undirectional is that the link scheduling does not depend on the direction in this special case.
However, when considering general delays, it becomes necessary to use a \emph{directed} hypergraph (or graph) to accurately model collisions. The necessity of using a directed hypergraph will be further elaborated after the link scheduling problem is formulated.

\subsection{Link-wise Network Model and Link Schedule}
\label{sec:linkmodel}

To simplify the network model, we define the $|\mc L|\times |\mc L|$ \emph{link-wise delay matrix}  $D_{\mc L}$ with
\begin{equation*}
  D_{\mc L}(l,l') = D(\tx_l,\rx_l)-D(\tx_{l'},\rx_l).
\end{equation*}
The definition of the link-wise delay matrix does not depend on the collision profile.
Collision can be determined using $D_{\mc L}$. Specifically, if a link $l$ is active in a timeslot $t$, it has a collision if for a certain $\theta \in \mc I(l)$, every link $l'\in \theta$ is also active in the timeslot $t+D_{\mc L}(l,l')$. By utilizing the link-wise delay matrix $D_{\mc L}$, it is not necessary to directly refer to the network nodes when verifying collisions.

We define the (link-based) network model as $\mathcal{N}\triangleq (\mc L, \mc I, D_{\mc L})$. It is sufficient for us to use this link-based model in the following discussion. In the network model, $(\mc L, \mc I)$ represents a directed hypergraph of finite size. The
entries of $D_{\mc L}$ are integers and can be negative. 
If $l'\in \cup_{\theta\in \mc I(l)} \theta$, then the value $D_{\mc L}(l,l')$ is required in collision checking and we consider the $(l,l')$ entry of $D_{\mc L}$ as \emph{relevant}.
On the other hand, if the $(l,l')$ entry of $D_{\mc L}$ is \emph{not relevant}, meaning $l'\notin \cup_{\theta\in \mc I(l)} \theta$, then $D_{\mc L}(l,l')$ is not involved in collision checking. When the  $(l,l')$ entry of $D_{\mc L}$ is not relevant, we mark this entry as $*$ in a link-wise delay matrix.
When the context is clear, we also call $D_{\mc L}$ the delay matrix.

\begin{example}\label{ex:2} %
  Following Example~\ref{ex:1}, the link-wise delay matrix $D_{\mc L}$ of the $L$-length, $K$-hop collision line network is given by:
  \begin{equation}\label{eq:exd}
    D_{\mc L}(l_i,l_j) = D(i,i+1) - D(j,i+1) = 1 - |j-i-1|.
  \end{equation}
  The network is denoted as $\mc N_{L,K}^{\text{line}}=(\mc L, \mc I, D_{\mc L})$, where $\mc L=\{l_1,\ldots,l_L\}$, $\mc I$ is defined in \eqref{eq:exi}, and $D_{\mc L}$ is defined in \eqref{eq:exd}.
  The graphical representation of $\mc N_{4,1}^{\text{line}}$ and $\mc N_{4,2}^{\text{line}}$ are
  shown in Fig.~\ref{fig:line41}. %
\end{example}

\begin{figure}[tb]
  \centering
  \subfigure[$\mc N_{4,1}^{\text{line}}$]{
  \begin{tikzpicture}[link/.style={circle, draw, thin,fill=cyan!20}]
    \foreach \place/\x in {{(-3,0)/1},{(-1,0)/2}, {(1,0)/3},{(3,0)/4}}
    \node[link,label=below:$l_\x$] (u\x) at \place {};
    
    \draw[->] (u1) -- (u2) node[midway,auto] {$1$};
    \draw[->] (u2) -- (u3) node[midway,auto] {$1$};
    \draw[->] (u3) -- (u4) node[midway,auto] {$1$};
    \draw[->, bend left=45] (u1) to node[above] {$0$} (u3);
    \draw[->, bend left=45] (u2) to node[above] {$0$} (u4);
  \end{tikzpicture}
  }
  \subfigure[$\mc N_{4,2}^{\text{line}}$]{
    \begin{tikzpicture}[link/.style={circle, draw, thin,fill=cyan!20}]
    \foreach \place/\x in {{(-3,0)/1},{(-1,0)/2}, {(1,0)/3},{(3,0)/4}}
    \node[link,label=below:$l_\x$] (u\x) at \place {};
    
    \draw[->, bend left=30] (u1) to node[above] {$1$} (u2);
    \draw[->, bend left=30] (u2) to node[above] {$-1$} (u1);
    \draw[->, bend left=30] (u2) to node[above] {$1$} (u3);
    \draw[->, bend left=30] (u3) to node[above] {$-1$} (u2);
    \draw[->, bend left=30] (u3) to node[above] {$1$} (u4);
    \draw[->, bend left=30] (u4) to node[above] {$-1$} (u3);    
    
    \draw[->, bend left=50] (u1) to node[above] {$0$} (u3);
    \draw[->, bend left=50] (u2) to node[above] {$0$} (u4);
    \draw[->, bend right=35] (u1) to node[above] {$-1$} (u4);
  \end{tikzpicture}
  }
  \caption{The graphical representation of $\mc N_{4,1}^{\text{line}}$ and $\mc N_{4,2}^{\text{line}}$. In these graphs, as well as the following graphical representation of our discrete network models, the vertices in a graph represent links in the network. The number on an edge $(l,l')$ is the value of $D_{\mc L}(l,l')$.} \label{fig:line41}
\end{figure}

One fundamental question related to a discrete network 
$\mc N =(\mathcal{L},\mathcal{I},D_{\mc L})$ is the efficiency of link activation scheduling. 
A \emph{(link) schedule} $S$ is a matrix of binary digits indexed by pairs $(l,t)\in \mathcal{L}\times \mathbb{Z}$, where
$S(l,t)=1$ indicates that $l$ is active in timeslot $t$, and $S(l,t)=0$ indicates that link $l$ is inactive in timeslot $t$. 

\begin{definition}[Collision-free schedule]\label{def:cf}
For a given schedule $S$ and a pair $(l,t)\in \mathcal{L}\times \mathbb{Z}$, we say that $S(l,t)$ has a \emph{collision} in the network $\mc N=(\mathcal{L},\mathcal{I},D_{\mc L})$ if for a certain $\theta \in  \mc I(l)$, we have $S(l',t+D_{\mc L}(l,l')) = 1$ for every $l'\in\theta$. On the other hand, if for all $\theta \in  \mc I(l)$,  $S(l',t+D_{\mc L}(l,l')) = 0$ for a certain $l'\in \theta$, we say $S(l,t)$ is \emph{collision-free}.
A schedule $S$ is said to be \emph{collision-free} if $S(l,t)$ is collision-free for all $(l,t) \in \mc L\times \mathbb Z$ with $S(l,t)=1$.  
\end{definition}

\begin{definition}[Periodic hypergraph]
  Consider a network $\mc N=(\mathcal{L},\mathcal{I},D_{\mc L})$. The \emph{periodic (undirected) hypergraph} induced by $\mc N$, denoted by $\mc N^{\infty}$, has the vertex set $\mathcal{L}\times \mathbb{Z}$. In $\mc N^{\infty}$,
  a subset $\{(l_i,t_i):i=1,\ldots,k\} \subset \mathcal{L}\times \mathbb{Z}$ is an edge if and only if there exists $j \in \{1,\ldots,k\}$ such that $\{l_i:i \in\{1,\ldots,k\}, i\neq j\} \in \mc I(l_j)$ and $t_i = t_{j} + D_{\mc L}(l_{j},l_i)$ for all $i\neq j \in \{1,\ldots,k\}$. In other words,
  an edge is always of the form $\{(l,t),(l',t+D_{\mc L}(l,l')):l'\in \theta \}$ for some $\theta \in \mc I(l)$. 
\end{definition}

When all the collision sets are binary, $\mc N^{\infty}$ becomes a graph with edges of the form $\{(l,t),(l',t+D(l,l'))\}$ for all $l'\in \mc I(l)$. See Fig.~\ref{fig:periodic}-(a) for an illustration of the periodic graph induced by $\mc N_{4,1}^{\text{line}}$.
For a general hypergraph with the vertex set $\mc V$ and edge set $\mc E\subset 2^{\mc V}$, a subset $\mc A$ of $\mc V$ is said to be \emph{independent} if for any $\mc U\in \mc E$, $\mc U \nsubseteq \mc A$.
The following theorem establishes the relation between a collision-free schedule of $\mc N$ and an independent set of $\mc N^{\infty}$.

\begin{theorem}\label{thm:ip}
  A schedule $S$ is collision-free for a network $\mc N=(\mathcal{L},\mathcal{I},D_{\mc L})$ if and only if the set $\{(l,t)\in \mathcal{L}\times \mathbb{Z}:S(l,t)=1\}$ is an {independent set} in $\mc N^{\infty}$.
\end{theorem}
\begin{IEEEproof}
  For a schedule $S$, let $\mc A = \{(l,t)\in \mathcal{L}\times \mathbb{Z}:S(l,t)=1\}$.
  For any edge $e$ of $\mc N^{\infty}$, $e =\{(l,t),(l',t+D_{\mc L}(l,l')):l'\in \theta \}$ for some $\theta \in \mc I(l)$. When $S$ is collision-free, either $S(l,t)=0$, or $S(l,t)=1$ and there exists $l'\in \theta$ such that $S(l',t+D_{\mc L}(l,l')) = 0$. Therefore, $e$ is not a subset of $\mc A$ and hence $\mc A$ is an independent set of $\mc N^{\infty}$.

  Suppose $S$ is not collision-free. Then there exists $(l,t) \in \mc L\times \mathbb Z$ with $S(l,t)=1$, and a certain $\theta\in \mc I(l)$ such that $S(l',t+D_{\mc L}(l,l')) = 1$ for all $l'\in \theta$. We observe in this case that $e=\{(l,t),(l',t+D_{\mc L}(l,l')):l'\in \theta \}$ is an edge of $\mc N^{\infty}$ and $e\subset \mc A$. 
Therefore, $\mc A$ is not an independent set in $\mathcal{N}^{\infty}$.
\end{IEEEproof}

\begin{figure}[t]
	\centering
  \subfigure[]{
    \begin{tikzpicture}[link/.style={circle, draw, thin,fill=cyan!20},scale=0.45]

      \foreach \x in {1,2,3,4}
      {
        \foreach \y in {0,1,2,3,4,5}
        {
          \node[link] (a\x\y) at (2.3*\y,-1.8*\x) {};
        }
        \node[left=1mm of a\x0] {$l_{\x}$};
        \node[right=2mm of a\x5] {$\ldots$};
      }

      \foreach \y in {0,1,2,3,4,5}
      {
        \draw[ bend left=45] (a1\y) to (a3\y);
        \draw[ bend left=45] (a2\y) to (a4\y);
      }

      \foreach \y/\z in {0/1,1/2,2/3,3/4,4/5}
      {
        \draw[] (a1\y) -- (a2\z);
        \draw[] (a2\y) -- (a3\z);
        \draw[] (a3\y) -- (a4\z);
      }
    \end{tikzpicture}
  } \quad
  \subfigure[]{
    \begin{tikzpicture}[link/.style={circle, draw, thin,fill=cyan!20},scale=0.45]

      \foreach \y in {0,1,2,3,4,5}
      {
        \foreach \x in {1,2,3,4}
        {
          \node[link] (a\x\y) at (2.3*\y,-1.8*\x) {};
        }
        \draw[ bend left=45] (a1\y) to (a3\y);
        \draw[ bend left=45] (a2\y) to (a4\y);
        \draw[] (a1\y) -- (a2\y);
        \draw[] (a2\y) -- (a3\y);
        \draw[] (a3\y) -- (a4\y);
      }

      \foreach \x in {1,2,3,4}
      {
        \node[left=1mm of a\x0] {$l_{\x}$};
        \node[right=2mm of a\x5] {$\ldots$};
      }
    \end{tikzpicture}
    }
    \caption{Illustration of the periodic graphs. (a) is the periodic graph induced by $\mc N_{4,1}^{\text{line}}$. (b) is the periodic graph induced by a network that shares the same link set and collision profile as $\mc N_{4,1}^{\text{line}}$, but has $\cv 0$ as the delay matrix.}
  \label{fig:periodic}
\end{figure}

Theorem~\ref{thm:ip} gives an equivalence relation between a collision-free schedule $S$ of $\mc N$ and an independent set $I$ of $\mc N^\infty$. Specifically, the support of $S$ forms an independent set in $\mc N^\infty$, and the indicator function of $I$, represented as a binary matrix, serves as a collision-free schedule for $\mc N$.
Based on this equivalence relation, we can use one representation to refer to the other interchangeably.

Denote by $\cv 0$ the matrix with all the entries
$0$. The network $(\mc L, \mc I, \cv 0)$ has a special periodic graph
where for each $t\in \mathbb{Z}$, the set $\{(l,t),l\in \mc L\}$ forms a
component that is isomorphic to $(\mc L, \mc I)$, with the edge directions ignored. 
Fig.~\ref{fig:periodic}-(b) illustrates the periodic graph of the
network generated by replacing the delay matrix in
$\mc N_{4,1}^{\text{line}}$ as $\cv 0$. The independent sets of the
periodic graph of $(\mc L, \mc I, \cv 0)$ can be completely
characterized by the independent sets of
$(\mc L, \mc I)$, where the edge directions are ignored. This is the reason why the scheduling problem without considering delays does not require the edge directions in
$(\mc L, \mc I)$.

The equivalence between a collision-free schedule of $\mc N$ and an
independent set of $\mc N^{\infty}$ cannot provide an explicit and
exactly solution to the scheduling problem in general when
$D_{\mc L} \neq \cv 0$. This is because the periodic graph has
\emph{infinitely} many vertices, which means that not only can an
independent set have an unbounded size, but also the number of
independent sets is infinite. In Sec.~\ref{sec:framed}, we will
discuss how existing approaches to independent sets can only
approximate the optimal scheduling solutions, and the corresponding
computation cost is high. While some properties of periodic graphs, such as isomorphism and connectivity, have been studied in the literature~\cite{orlin1984some, beckenbach2019matchings}, the independent set problem has not been well understood.
In Sec.~\ref{sec:rr}, we will formally define the scheduling rate region problem.
In Sec.~\ref{sec:sg}, we will further investigate the properties of the
periodic hypergraph to enable an exact and explicit solution of the scheduling problem.

\subsection{Useful Properties of Periodic Hypergraphs}
\label{sec:decom}

Here, we will briefly introduce the isomorphism and connectivity properties
of periodic graphs and discuss their extension to periodic
hypergraphs. In Sec.~\ref{sec:sim}, we will leverage these properties to
simplify the scheduling rate region problem.

\subsubsection{Isomorphism}
A \emph{vertex assignment} for a network $\mc N = (\mc L, \mc I, D_{\mc L})$ is an integer-valued vector $\mathbf{b}=(b_l,l\in \mc L)$. Each vertex assignment $\mathbf{b}$ induces a new link-wise delay matrix $D_{\mc L}^{\mathbf{b}} =  (D_{\mc L}^{\mathbf{b}} (l,l'))$ where
\begin{equation}\label{eq:db}
D_{\mc L}^{\mathbf{b}}(l,l') =  D_{\mc L}(l,l')+b_{l}-b_{l'},
\end{equation}
and hence a new network $\mc N_{\mathbf{b}} = (\mc L, \mc I, D_{\mc L}^{\mathbf{b}})$. According to \cite{orlin1984some}, if $(\mc L, \mc I)$ is a graph, $\mc N^{\infty}$ and $\mc N_{\mathbf{b}}^{\infty}$ are isomorphic with respect to the bijection $f: \mc L\times \mathbb Z \rightarrow \mc L\times \mathbb Z$ with $f(l,t) = (l,t+b_l)$. In other words, $\mc N_{\mathbf{b}}^{\infty}$ is obtained by shifting all the vertices in the row $l$ of $\mc N^{\infty}$ by $b_l$.
The mapping is still an isomorphism when $(\mc L, \mc I)$ is a hypergraph as the argument in \cite{orlin1984some} involves only the delay matrix.

\subsubsection{Connectivity}

In an undirected graph, two vertices are said to be connected if there exists a path between these two vertices. 
Exploring the connectivity of $\mc N^{\infty}$ can potentially simplify the  scheduling problem by considering each component of $\mc N^{\infty}$ individually. We first discuss the connectivity when $(\mc L, \mc I)$ is a graph, which has been studied in \cite{orlin1984some}.
Let $g_{\mc N}$ be the greatest common divisor of $D_{\mc L}(l,l')$ for all $l\in \mc L$ and $l'\in \mc I(l)$. Under the condition that $D_{\mc L}\neq \cv 0$,  $g_{\mc N}$ is well-defined. 
Then $D_{\mc L}/g_{\mc N}$ is an integer matrix. According to \cite{orlin1984some}, $\mc N^{\infty}$ has $g_{\mc N}$ components isomorphic to the periodic graph of $(\mc L, \mc I, D_{\mc L}/g_{\mc N})$. 
Fig.~\ref{fig:connectivity_eg} illustrates a periodic graph with three components. 
In contrast to the case $D_{\mc L} = \cv 0$, the components of $\mc N^{\infty}$  in general have an infinite size. 

\begin{figure}[t]
	\centering

    \begin{tikzpicture}[link/.style={circle, draw, thin,fill=cyan!20},scale=0.5]

      \foreach \x in {1,2,3,4}
      {
        \foreach \y in {0,3}
        {
          \node[link] (a\x\y) at (2.3*\y,-1.8*\x) {};
        }
        \foreach \y in {1,4}
        {
          \node[link,fill=red!35] (a\x\y) at (2.3*\y,-1.8*\x) {};
        }
        \foreach \y in {2,5}
        {
          \node[link,fill=gray!90] (a\x\y) at (2.3*\y,-1.8*\x) {};
        }
        \node[left=1mm of a\x0] {$l_{\x}$};
        \node[right=2mm of a\x5] {$\ldots$};
      }

      \foreach \y in {0,1,2,3,4,5}
      {
        \draw[] (a1\y) to (a2\y);
        \draw[] (a2\y) to (a3\y);
        \draw[] (a3\y) to (a4\y);
      }

      \foreach \y/\z in {0/3,1/4,2/5}
      {
        \draw[] (a1\y) -- (a3\z);
        \draw[] (a2\y) -- (a4\z);
      }
    \end{tikzpicture}
    \caption{An periodic graph with three components. Each component is illustrated by a different color (gray scale).}
  \label{fig:connectivity_eg}
\end{figure}

If $(\mc L, \mc I)$ is a hypergraph, for $l\in \mc L$, define 
$	\mc I'(l) = \bigcup_{\theta\in \mc I(l)} \theta$. 
	Let $\mc I' = (\mc I'(l), l\in \mc L)$. Then $\mc N' = (\mc L, \mc I', D_{\mc L})$ is a new network with a binary collision model. By~\cite{beckenbach2019matchings}, two vertices in $\mc N^\infty$ are connected if and only if the two corresponding vertices in $(\mc N')^\infty$ are connected. Let $g_{\mc N'}$ be the greatest common divisor of $D_{\mc L}(l,l')$ for all $l\in \mc L$ and $l'\in \mc I'(l)$. We know that $\mc N^{\infty}$ has $g_{\mc N}\triangleq g_{\mc N'}$ components isomorphic to the periodic graph of $(\mc L, \mc I, D_{\mc L}/g_{\mc N})$.

\section{Scheduling Rate Region}
\label{sec:rr}

When delays are all $0$, an independent set of $(\mc L,\mc I)$ with the edge directions ignored represents an achievable scheduling rate vector, and the set of all maximal independent sets determines the scheduling rate region~\cite{ephremides1990scheduling, tassiulas92stability,sharma2006complexity}. However, for networks with general delays, the concepts of achievable scheduling rate vectors and the scheduling rate region need to be extended to account for the characteristics of general periodic hypergraphs.
In this section, we will formally define the schedule rate vector and the scheduling rate region for a general network with delays. We will discuss some fundamental properties of the scheduling rate region, and study a class of special schedules known as guarded schedules. 

For two real matrices $A$ and $B$ of the same
size, we write $A \pleq B$ if all the entries of $A$ are not larger
than the corresponding entries of $B$ at the same position. We
similarly define $A \pgeq B$ to indicate that all entries of $A$ are not smaller than the corresponding entries of $B$ at the same position. For a matrix $A$ and a scalar $a$, we
write $A+a$ to denote the matrix obtained by adding $a$ to each entry of $A$. 
We similarly define $A-a$ to be the matrix obtained by subtracting $a$ from each entry of $A$.

\subsection{Scheduling Rate Vector}
\label{sec:sregion}

For a network $\mc N = (\mc L,\mc I,D_{\mc L})$, we denote for each schedule $S$ and  link $l$
\begin{equation}\label{eq:sr}
R_{S}^{\mc N}(l) = \lim_{T\rightarrow \infty} \frac{1}{T} \sum_{t=0}^{T-1} \iota(S(l,t)=1, S(l,t) \text{ is collision-free}),
\end{equation}
where $\iota(A_1,A_2,\ldots)$ is the indicator function with a value $1$ if the sequence of conditions $A_i$ are all true, and $0$ otherwise.
To maintain consistency with network scheduling conventions, we only consider $S(l,t)$ with $t\geq 0$ when defining $R_S^{\mc N}(l)$. 
If the limit on the right-hand side of \eqref{eq:sr} exists, we say that $R_{S}^{\mc N}(l)$ exists.
When $R_{S}^{\mc N}(l)$ exists, we call $R_{S}^{\mc N}(l)$ the \emph{(scheduling) rate of link $l$}.
If $R_S^{\mc N}(l)$ exists for all $l\in \mc L$, 
we call $R_S^{\mc N} = (R_S^{\mc N}(l),l\in \mc L)$ the \emph{rate vector} of $S$ for $\mc N$. We may omit the superscript in $R_S^{\mc N}$ and $R_S^{\mc N}(l)$ when the network $\mc N$ is implied.

\begin{definition}[Scheduling rate region]\label{def:1}
  For a network $\mc N = (\mc L,\mc I,D_{\mc L})$,
  a rate vector $R=(R(l),l\in \mc L)$ is said to be \emph{achievable} if for any $\epsilon>0$, there exists a schedule $S$ such that $R_S\pgeq R-\epsilon$. The set $\cR^{\mc N}$ of all the achievable rate vectors is called the \emph{scheduling rate region} of $\mathcal{N}$.
\end{definition}

Define the \emph{character} of the network $\mc N$ as
\begin{equation}\label{eq:dstar}
D^*_{\mc N} = \max_{l\in \mc L}\max_{\theta\in \mc I(l)} \max_{l'\in \theta}|D_{\mc L}(l,l')|.
\end{equation}
In other words, $D^*_{\mc N}$ is 
the maximum relevant delay in $D_{\mc L}$.
When $\mc N$ is known from the context, we also write $D^*_{\mc N}$ as $D^*$.

\begin{example}\label{ex:3}%
  For $\mc N_{L,K}^{\text{line}}$ defined in Example~\ref{ex:2}, by \eqref{eq:exi} and \eqref{eq:exd}, 
  \begin{IEEEeqnarray*}{rCl}
    D^* %
    & = &  \max_{1\leq i\neq j \leq L, |j-i-1| \leq K} |1 - |j-i-1|| \\
    & = & \max\{\min\{L,K\}-1,1\}.
  \end{IEEEeqnarray*}
  So, when $K=1$, $D^* = 1$, and when $L\geq K\geq 2$, $D^* = K-1$.
\end{example}

\begin{definition}[Periodic schedule]
A schedule $S$ is considered \emph{periodic} if there exists a positive integer $T_p$ such that $S(l,t) = S(l,t+T_p)$ for all $(l,t) \in \mc L\times \mathbb Z$. The positive integer $T_p$ in this context is called a \emph{period} of the schedule. 
\end{definition}

Similar to Definition~\ref{def:1}, a rate vector $R$ is considered
\emph{achievable by collision-free, periodic schedules} if for any $\epsilon >0$, there exists a collision-free, periodic schedule $S$ such that
$R_S\pgeq R-\epsilon$.
Although the scheduling rate region $\cR^{\mc N}$ is defined for general schedules, 
the following lemma states that collision-free, periodic schedules achieve the rate region $\cR^{\mc N}$. 
Our result directly implies the special cases observed in the existing papers \cite{grokop2011interference,chitre2012throughput} when the network has a binary collision with $\mc I(l) = \mc L\setminus \{l\}$.
Note that we do not limit the period of the periodic schedules in Lemma~\ref{prop:cfr}. In the next section, we will further enhance the result by showing that we only need a finite set of periodic schedules to completely characterize $\cR^{\mc N}$.

\begin{lemma}\label{prop:cfr}
  For a network $\mc N$, the  rate  region $\cR^{\mc N}$  can be achieved using only collision-free, periodic schedules.
\end{lemma}

\begin{remark}
  Our proof is based on a constructive approach. First, for any given schedule, we can always find a collision-free schedule with the same rate vector by setting all the entries corresponding to collisions to $0$. Second, for any collision-free schedule with a rate vector $R$, we can construct a periodic schedule using a segment of the schedule from time $0$ to $T-1$, such that the rate vector of the periodic schedule converges to $R$ as $T$ tends to infinity.
\end{remark}

\begin{IEEEproof}%
	Fix $R\in \cR$ and $\epsilon>0$. By Definition~\ref{def:1}, there exists a schedule $S$ such that
	\begin{equation}
	\label{eq:pd1}
	R_{S}(l)\geq R(l)-\epsilon/2, \text{ for all } l\in \mc L.
	\end{equation}
	 Define a schedule $S'$ such that
	\begin{equation*}
	S'(l,t) =
	\begin{cases}
	1 & \text{$S(l,t)=1$ and is collision-free} \\
	0 & \text{otherwise}.
	\end{cases}
	\end{equation*}
	We see that $S'$ is collision-free and $R_{S'} = R_S$. 
	
	By the definition in \eqref{eq:sr}, there exists a sufficiently large $T_0$ such that for all $T\geq T_0$ and all $l\in \mc L$,
	\begin{equation}\label{eq:xid}
	\left |R_{S'}(l) - \frac{1}{T}\sum_{t=0}^{T-1} S'(l,t) \right| \leq \frac{\epsilon}{4}.
	\end{equation}
	Fix any $T^*\geq \max\{T_0+ D^*, 4D^*/\epsilon\}$. Define a schedule $S^*$ with period $T^*$: %
	\begin{equation*}
	S^*(l,t) =
	\begin{cases}
	S'(l,t) & t = 0,1,\ldots, T^*-1-D^{*},\\
	0 & t = T^*-D^*,\ldots,T^*-1.
	\end{cases}
	\end{equation*}
	Now we argue that $S^*$ is collision-free.
	
	Fix $(l,t)$ with $S^*(l,t)=1$. According to the definition of $S^*$, there exists $t_0 \in \{0,1,\ldots,T^*-1-D^*\}$ such that $t=kT^* + t_0$. We show that $S^*(l,t)$ is collision-free by contradiction. Assume there exists $\theta\in \mc I(l)$ such that $S^*(l',t+D_{\mc L}(l,l')) = 1$ for every $l'\in\theta$, i.e., $S^*(l,t)$ has a collision. For $l'\in \theta$, let $t' = t+D_{\mc L}(l,l')$. 
	As $l'\in \theta \in \mc I(l)$, we have $|D_{\mc L}(l,l')| \leq D^*$, and hence $kT^*-D^*\leq t'\leq kT^*+T^*-1$. We discuss the possible range of $t'$ in three cases: 
	\begin{enumerate}
        \item When $kT^*-D^* \leq t' < kT^*$, by the definition of $S^*$, $S^*(l',t') = 0$.
        \item When $(k+1)T^*-D^* \leq t'\leq (k+1)T^*-1$, by the definition of $S^*$, $S^*(l',t') = 0$.
        \item When $kT^*\leq t' \leq (k+1)T^*-1-D^*$, write $t' = kT^*+ t_0'$.
	Due to the periodical property of $S^*$, 
	$S^*(l',t') = S^*(l',t_0')$.
	As $t_0' \in \{0,1,\ldots,T^*-1-D^*\}$, by the definition of $S^*$,
	$S^*(l',t_0') = S'(l',t_0')$.
	Similarly, we have $S'(l,t_0) = S^*(l,t_0)= S^*(l,t) = 1$. As $S'(l,t_0)$ is collision-free, for certain $l'\in \theta$, $S'(l',t_0') = 0$, i.e., $S^*(l',t')=0$.
      \end{enumerate}
      Therefore, for all the three cases of $t'$, we get a contradiction to the assumption that $S^*(l,t)$ has a collision.

	As $S^*$ is periodic and collision-free, we further have
	\begin{IEEEeqnarray*}{rCl}
		R_{S^*}(l) & = & \frac{1}{T^*}\sum_{t=0}^{T^*-1}S^*(l,t) \\
		& = & \left(1- \frac{D^*}{T^*}\right) \frac{1}{T^*-D^*} \sum_{t=0}^{T^*-1-D^*}S'(l,t) \\
		& \geq & \left(1- {D^*}/{T^*}\right)(R_{S'}(l) -\epsilon/4) \\
		& \geq & R_{S'}(l) -\epsilon/4 - D^*/T^* \\
		& \geq & R_{S'}(l) - \epsilon/2 \\
		& \geq & R(l) - \epsilon,
	\end{IEEEeqnarray*}
	where the first inequality follows from $T^*\geq T_0+ D^*$ and \eqref{eq:xid}, the third inequality follows from $T^* \geq 4D^*/\epsilon$, and the last inequality is obtained by substituting \eqref{eq:pd1}.
	The proof of the theorem is complete.
  \end{IEEEproof}

The convexity is another fundamental property of the scheduling rate region $\cR^{\mc N}$. In the next section, we will further show that $\cR^{\mc N}$ is a polytope with a finite vertices. 

\begin{lemma}\label{lemma:convex}
  The rate region $\cR^{\mc N}$ of a network $\mc N$ is convex.
\end{lemma}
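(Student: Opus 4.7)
The plan is to prove convexity by a standard time-sharing construction: given two achievable rate vectors $R_1, R_2 \in \cR^{\mc N}$ and $\lambda \in [0,1]$, I would fix $\epsilon > 0$ and build one schedule whose rate vector componentwise dominates $\lambda R_1 + (1-\lambda) R_2 - \epsilon$, which by Definition~\ref{def:1} yields membership in $\cR^{\mc N}$.

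First, invoking Theorem~\ref{prop:cfr}, I pick collision-free periodic schedules $S_1, S_2$ with periods $T_1, T_2$ satisfying $R_{S_i} \pgeq R_i - \epsilon/2$. For positive integers $n_1, n_2$ to be chosen later, I construct a new periodic schedule $S$ of period $T = n_1 T_1 + n_2 T_2 + 2 D^*$ by concatenating $n_1$ copies of one period of $S_1$, a guard of $D^*$ all-zero timeslots, $n_2$ copies of one period of $S_2$, another $D^*$-wide zero guard, and then repeating. Rational approximation of $\lambda$ by $n_1 T_1 / (n_1 T_1 + n_2 T_2)$ is always possible to arbitrary precision.

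The key step is verifying that $S$ is collision free. By Definition~\ref{def:cf} together with \eqref{eq:dstar}, a collision check triggered by $S(l,t) = 1$ only inspects $S(l',t+D_{\mc L}(l,l'))$ at offsets of magnitude at most $D^*$. For any $t$ inside the $S_1$ block and any $\phi \in \mc I(l)$, either every offset $t+D_{\mc L}(l,l')$ with $l'\in \phi$ remains inside the $S_1$ block, in which case the collision-freeness of the periodic $S_1$ supplies some $l'\in \phi$ with $S_1(l',\cdot)=0$, or else at least one offset falls into the adjacent guard where $S$ is identically zero; either situation prevents $\phi$ from causing a collision. The same argument handles the $S_2$ block. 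The guard width $D^*$ is precisely what makes this separation argument work, and this is the main obstacle to the construction.

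Finally, counting collision-free active timeslots per period of $S$ gives
\[
 R_S(l) = \frac{n_1 T_1\, R_{S_1}(l) + n_2 T_2\, R_{S_2}(l)}{n_1 T_1 + n_2 T_2 + 2 D^*}.
\]
Choosing $n_1, n_2$ so that $n_1 T_1 / (n_1 T_1 + n_2 T_2)$ is within $\epsilon/4$ of $\lambda$ and $n_1 T_1 + n_2 T_2$ is much larger than $D^*$ pushes $R_S$ within $\epsilon/2$ of $\lambda R_{S_1} + (1-\lambda) R_{S_2}$, which in turn is $\pgeq \lambda R_1 + (1-\lambda) R_2 - \epsilon/2$. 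Thus $R_S \pgeq \lambda R_1 + (1-\lambda) R_2 - \epsilon$, establishing convexity.
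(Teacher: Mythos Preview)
Your proof is correct and is essentially identical to the paper's: both invoke Theorem~\ref{prop:cfr} to obtain collision-free periodic $S_1,S_2$, concatenate $n_1$ (resp.\ $k_1$) periods of $S_1$, a $D^*$-wide zero guard, $n_2$ (resp.\ $k_2$) periods of $S_2$, another $D^*$ guard, and then argue that the guard width $D^*$ together with \eqref{eq:dstar} isolates the two blocks so the resulting schedule is collision free. The only cosmetic difference is that the paper fixes $k_2=\lceil\frac{1-\alpha}{\alpha}\frac{T_1}{T_2}k_1\rceil$ explicitly, whereas you appeal to rational approximation of $\lambda$; also, the paper offloads the collision-free verification to ``similar to the proof of Theorem~\ref{prop:cfr}'' while you spell out the dichotomy (all offsets in block vs.\ at least one in guard) directly.
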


\begin{remark}
  Our proof is based on a constructive approach. For any two collision-free, periodic schedules, we can construct a new periodic schedule that has a rate vector close to a convex combination of the rate vectors of the two original schedules.
\end{remark}

\begin{IEEEproof}%
	Fix $R_1$ and $R_2$ in $\cR^{\mc N}$. Let $R = \alpha R_1 + (1-\alpha)R_2$ where $0< \alpha < 1$. The lemma is proved by showing $R \in \cR^{\mc N}$.
	Fix $\epsilon >0$. By Lemma~\ref{prop:cfr}, there exists a collision-free schedule $S_1$ of period $T_1$ such that $R_{S_1}\pgeq R_1-\epsilon/2$, and a collision-free schedule $S_2$ of period $T_2$ such that $R_{S_2}\pgeq R_2-\epsilon/2$.
	
	For a positive integer $k_1$, let $k_2 = \lceil \frac{1-\alpha}{\alpha}\frac{T_1}{T_2}k_1 \rceil$. Construct a schedule $S$ of period $k_1T_1+k_2T_2+2D^*$ such that
	$S(l,t) = S_1(l,t)$ for $t \in \{0,1,\ldots, k_1T_1-1\}$, $S(l,t) = S_2(l,t-k_1T_1-D^*)$ for  $t \in k_1T_1+D^*+\{0,1, \ldots, k_2T_2-1\}$, and $S(l,t) = 0$ for other values of $t$ in the first period. 
	Similar to the proof of Lemma~\ref{prop:cfr}, we can argue that the schedule $S$ is collision-free. The rate vector $R_S$ satisfies
	\begin{IEEEeqnarray*}{rCl}
		R_S & = & \frac{k_1T_1R_{S_1} + k_2T_2R_{S_2}}{k_1T_1+k_2T_2+2D^*} \\
		& \pgeq & \frac{k_1T_1R_{S_1} + \frac{1-\alpha}{\alpha}T_1k_1R_{S_2}}{k_1T_1+ \frac{1-\alpha}{\alpha}T_1k_1 + T_2+2D^*} \\
		& = & \frac{\alpha R_{S_1}+(1-\alpha)R_{S_2}}{1+\alpha(T_2+2D^*)/(T_1k_1)} \\
		& \pgeq & \frac{R-\epsilon/2}{1+\alpha(T_2+2D^*)/(T_1k_1)} \\
		& = & R - \frac{R\alpha(T_2+2D^*)/(T_1k_1)+\epsilon/2}{1+\alpha(T_2+2D^*)/(T_1k_1)}.
	\end{IEEEeqnarray*}
	Therefore, when $k_1$ is sufficiently large, $R_S \pgeq R-\epsilon$, and hence $R\in \cR^{\mc N}$. 
  \end{IEEEproof}

\subsection{Simplification by Isomorphism and Connectivity}
\label{sec:sim}

In Section~\ref{sec:decom}, we have discussed the concepts of isomorphism and connectivity of periodic hypergraphs. Now, we will demonstrate how these properties can be utilized to simplify the problem of scheduling rate region. Our discussion is self-contained, as we solely rely on the properties of schedules and rate regions introduced earlier.

Consider a network $\mc N=(\mc L,\mc I, D_{\mc L})$ and a vector $\mathbf{b} = (b_l,l\in \mc L)$. For a collision-free schedule $S$ of $\mc N$, we define a schedule $S^{\mathbf{b}}$ as
\begin{equation*}
  S^{\mathbf{b}}(l,t) = S(l,t+b_l),
\end{equation*}
which has the same rate vector as $S$. 
Then by Definition~\ref{def:cf}, we can verify that $S^{\mathbf{b}}$ is collision-free for 
$\mc N_{\mathbf{b}} = (\mc L,\mc I, D_{\mc L}^{\mathbf{b}})$, where $D_{\mc L}^{\mathbf{b}}$ is defined in \eqref{eq:db}. Due to symmetry, we can similarly argue that a collision-free schedule of $\mc N_{\mathbf{b}}$ induces a collision-free schedule of $\mc N$ of the same rate vector.
The above discussion is summarized as follows: 

\begin{proposition}\label{lemma:rate}
	For a network $\mc N$ and a vertex assignment $\mathbf{b}$,   $\mc R^{\mc N}=\mc R^{\mc N_{\mathbf{b}}}$.
\end{proposition}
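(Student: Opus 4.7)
The plan is to upgrade the isomorphism of $\mc N^{\infty}$ and $\mc N_{\mathbf b}^{\infty}$ (cited from \cite{orlin1984some} just before the proposition) into a rate-preserving bijection between collision-free schedules of $\mc N$ and collision-free schedules of $\mc N_{\mathbf b}$. Two inclusions $\cR^{\mc N}\subset \cR^{\mc N_{\mathbf b}}$ and the reverse would then follow symmetrically, since the vertex assignment $-\mathbf b$ inverts $\mathbf b$.

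First, given a schedule $S$ for $\mc N$, define $S^{\mathbf b}$ for $\mc N_{\mathbf b}$ by $S^{\mathbf b}(l,t)=S(l,t-b_l)$ for every $(l,t)\in\mc L\times\mathbb Z$; this is the schedule version of the vertex bijection $f(l,t)=(l,t+b_l)$. I would then verify directly from Definition~\ref{def:cf} that $S$ is collision-free for $\mc N$ iff $S^{\mathbf b}$ is collision-free for $\mc N_{\mathbf b}$: fixing $(l,t)$ and $\phi\in\mc I(l)$, one has for each $l'\in\phi$ the identity
\begin{equation*}
S^{\mathbf b}(l',(t+b_l)+D_{\mc L}^{\mathbf b}(l,l'))=S(l',t+D_{\mc L}^{\mathbf b}(l,l')+b_l-b_{l'})=S(l',t+D_{\mc L}(l,l')),
\end{equation*}
using $D_{\mc L}^{\mathbf b}(l,l')=D_{\mc L}(l,l')+b_l-b_{l'}$. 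Hence a collision of $S^{\mathbf b}$ at $(l,t+b_l)$ over $\phi$ corresponds exactly to a collision of $S$ at $(l,t)$ over $\phi$, so the two collision-freeness conditions are equivalent.

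Next I would show $R_{S^{\mathbf b}}^{\mc N_{\mathbf b}}(l)=R_{S}^{\mc N}(l)$ for every link $l$. By the collision-freeness equivalence above, $S^{\mathbf b}(l,t)=1$ is collision-free in $\mc N_{\mathbf b}$ exactly when $S(l,t-b_l)=1$ is collision-free in $\mc N$. Therefore
\begin{equation*}
\frac{1}{T}\sum_{t=0}^{T-1}\iota\bigl(S^{\mathbf b}(l,t)=1\text{ collision-free in }\mc N_{\mathbf b}\bigr)=\frac{1}{T}\sum_{t=-b_l}^{T-1-b_l}\iota\bigl(S(l,t)=1\text{ collision-free in }\mc N\bigr).
\end{equation*}
The right-hand side differs from the $\frac{1}{T}\sum_{t=0}^{T-1}$ average in \eqref{eq:sr} by at most $2|b_l|/T$, which vanishes as $T\to\infty$. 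Hence one limit exists iff the other does, and the limits coincide.

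Finally, I would deduce the proposition. Pick $R\in\cR^{\mc N}$. By Theorem~\ref{prop:cfr} it is achieved (up to $\epsilon$) by a collision-free periodic schedule $S$; the shifted schedule $S^{\mathbf b}$ is then collision-free for $\mc N_{\mathbf b}$ with the same rate vector, so $R\in\cR^{\mc N_{\mathbf b}}$. The reverse inclusion follows by applying the same argument with $-\mathbf b$ in place of $\mathbf b$, since $(\mc N_{\mathbf b})_{-\mathbf b}=\mc N$. The only non-routine checks are the algebraic verification that $f$ maps hyperedges of $\mc N^{\infty}$ to hyperedges of $\mc N_{\mathbf b}^{\infty}$ (handled by the identity for $D_{\mc L}^{\mathbf b}$) and the boundary adjustment between $t\ge 0$ and $t\ge -b_l$ in the rate limit, but both are clean; the proof is essentially a translation of the isomorphism from the vertex level to the level of schedules and their rates.
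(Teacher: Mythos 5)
Your overall strategy is exactly the one the paper intends: the paper offers no separate proof of this proposition, treating it as an immediate consequence of the row-shift isomorphism between $\mc N^{\infty}$ and $\mc N_{\mathbf b}^{\infty}$, and your write-up is precisely the fleshed-out version of that argument (shift schedules, check collision-freeness is preserved, check the rate limit is unchanged up to a vanishing boundary term, conclude by symmetry via $-\mathbf b$). The pieces about the $O(|b_l|/T)$ boundary correction and the use of Theorem~\ref{prop:cfr} are fine.

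However, the central displayed identity is wrong as written, because of a sign mismatch with the paper's convention $D_{\mc L}^{\mathbf b}(l,l')=D_{\mc L}(l,l')+b_l-b_{l'}$. With your definition $S^{\mathbf b}(l,t)=S(l,t-b_l)$ you get
\begin{equation*}
S^{\mathbf b}\bigl(l',(t+b_l)+D_{\mc L}^{\mathbf b}(l,l')\bigr)=S\bigl(l',t+D_{\mc L}^{\mathbf b}(l,l')+b_l-b_{l'}\bigr)=S\bigl(l',t+D_{\mc L}(l,l')+2(b_l-b_{l'})\bigr),
\end{equation*}
which is not $S(l',t+D_{\mc L}(l,l'))$ unless $b_l=b_{l'}$. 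The correct intertwining map for the paper's sign convention is $S^{\mathbf b}(l,t)=S(l,t+b_l)$, i.e., the vertex bijection $(l,t)\mapsto(l,t-b_l)$ rather than $(l,t)\mapsto(l,t+b_l)$; then a collision of $S^{\mathbf b}$ at $(l,t-b_l)$ over $\phi$ matches a collision of $S$ at $(l,t)$ over $\phi$ exactly. (The paper's own statement of the bijection $f(l,t)=(l,t+b_l)$ appears to carry the same sign slip, which you inherited.) Because $\mathbf b$ is an arbitrary integer vector and $-\mathbf b$ is again a vertex assignment, the conclusion of the proposition is unaffected, but the identity as displayed is false and must be corrected; once the sign is flipped, the rest of your argument goes through unchanged.
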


Though $\mc N$ and $\mc N_{\mathbf{b}}$ are equivalent in terms of rate region, they may have different characters (see Example~\ref{ex:iad}). 
Note that the character of a network may affect the complexity for the rate region calculation according to the characterization in Sec.~\ref{sec:rr}.  
	Therefore, it is possible to use isomorphism to simplify the calculation of the rate region. In an extreme case, if $D_{\mc L}^{\mathbf{b}}$ becomes $\cv 0$, the problem is resolved. 

\begin{example}\label{ex:iad}
	Consider a network $\mc N = (\mc L,\mc I,D_{\mc L})$ with the link set $\mc L=\{l_1,l_2,l_3,l_4\}$, the collision sets
	\begin{IEEEeqnarray*}{rClrCl}
		\mc I(l_1) &=& \{l_2,l_3,l_4\}, &\quad 
		\mc I(l_2) &=& \{l_1,l_3,l_4\},\\
		\mc I(l_3) &=& \{l_2, l_4\}, &
		\mc I(l_4) &=& \{l_3\},
	\end{IEEEeqnarray*}
	and the link-wise propagation delay matrix
	\begin{equation*}
	D_{\mc L} =
	\begin{bmatrix}
	* &0&-2&-4\\
	0&*&0&-2\\
	*&0&*&0\\
	*&*&0 &*
	\end{bmatrix}.
	\end{equation*} 
	The character $D^*_{\mc N} = 4$. 	
	For the vertex assignment $\mathbf b=(4,3,2,1)$, the link-wise  delay matrix becomes
	\begin{equation*}
	D^{\mathbf b}_{\mc L} =
	\begin{bmatrix}
	* &1&0&-1\\
	-1&*&1&0\\
	*&-1&*&1\\
	*&*&-1 &*
	\end{bmatrix}.
	\end{equation*} 
	The character of $\mc N_{\mathbf b}=(\mc L,\mc I,D^{\mathbf b}_{\mc L})$ is $1$. 
\end{example}

Consider a network $\mc N = (\mc L, \mc I, D_{\mc L})$ with $D_{\mc L}\neq \cv 0$. Let $g$ be the greatest common divisor of $D_{\mc L}(l,l')$ for all $l\in \mc L$ and $l'\in \cup_{\theta\in \mc I(l)}\theta$. As we have discussed in Sec.~\ref{sec:decom}, $\mc N^\infty$ has $g$ isomorphic components. We prove that the rate region of 
${\mc N}$ is the same as the rate region of $(\mc L, \mc I, D_{\mc L}/g)$. Our proof also gives the connection of the schedules for $\mc N$ and $(\mc L, \mc I, D_{\mc L}/g)$.

\begin{proposition}
	\label{conn}
Consider a network $\mc N = (\mc L, \mc I, D_{\mc L})$ with $D_{\mc L}\neq \cv 0$. Let $g$ be the greatest common divisor of $D_{\mc L}(l,l')$ for all $l\in \mc L$ and $l'\in \cup_{\theta\in \mc I(l)}\theta$. Then,
	$\mc R^{\mc N} = \mc R^{(\mc L, \mc I, D_{\mc L}/g)}$.
\end{proposition}
\begin{IEEEproof}
  As $D_{\mc L}\neq \cv 0$, $g>0$. 
  For a collision-free schedule $S$ of $\mc N$, we define schedules $S_1, \ldots, S_g$ as:
  \begin{equation*}
    S_i(l,t) = S(l,tg+i).
  \end{equation*}
  Let's verify that $S_i$ is collision-free for $(\mc L, \mc I, D_{\mc L}/g)$.
  Suppose $S_i(l,t)=S(l,tg+i) = 1$. As $S$ is collision-free for $\mc N$, we have for any $\theta\in \mc I(l)$, there exists $l'\in \theta$ such that $S(l,tg+i+D_{\mc L}(l,l')) = S_i(l,t+D_{\mc L}(l,l')/g) =0$. Therefore, $S_i$ is collision-free for $(\mc L, \mc I, D_{\mc L}/g)$. Hence $R_S = \frac{1}{g}(R_{S_1}+\cdots+R_{S_g}) \in \mc R^{(\mc L, \mc I, D_{\mc L}/g)}$, and $\mc R^{\mc N} \subset \mc R^{(\mc L, \mc I, D_{\mc L}/g)}$.

  To prove $\mc R^{\mc N} \supset \mc R^{(\mc L, \mc I, D_{\mc L}/g)}$, consider a collision-free schedule $S'$ of $(\mc L, \mc I, D_{\mc L}/g)$. Define a schedule $S''$ for $\mc N$ such that,
  \begin{equation*}
    S''(l,ig+j) = S'(l,i), i=0,1,\ldots, j=0,1,\ldots,g-1.
  \end{equation*}
  To verify that $S''$ is collision-free for $\mc N$, consider $(l,t)$
  such that $S''(l,t)=1$. Write $t=ig+j$, where $i$ and $j$ are integers such that $i\geq 0$ and $0\leq j < g$. So $S'(l,i)=S''(l,t)=1$. Since $S'$ is collision-free for $(\mc L, \mc I, D_{\mc L}/g)$, for any $\theta\in \mc I(l)$, there exists $l'\in \theta$ such that $S''(l', ig+D_{\mc L}(l,l')+j) = S'(l',i+D_{\mc L}(l,l')/g) =0$. Hence 
  $S''(l,t)$ is collision-free for $\mc N$ as for any $\theta\in \mc I(l)$, there exists $l'\in \theta$ such that $S''(l',t+D_{\mc L}(l,l')) = 0$.
  The proof is completed as $S''$ and $S'$ have the same rate vector.
\end{IEEEproof}

In the following example, we illustrate how to combine isomorphism and connectivity to simplify a network. 

\begin{example}%
	\label{ex:9}
	Consider a network $\mc N = (\mc L, \mc I, D_{\mc L})$ with the link set $\mc L = \{l_1,l_2,l_3,l_4\}$, the collision sets
	\begin{IEEEeqnarray*}{rClrCl}
		\mc I(l_1) &=& \{l_2,l_3\}, & \quad 
		\mc I(l_2) &=& \{l_3,l_4\}, \\
		\mc I(l_3) &=& \{ l_4\}, &
		\mc I(l_4) &=& \emptyset,
	\end{IEEEeqnarray*}
	and  the link-wise propagation delay
	\begin{equation*}
	D_{\mc L} =
	\begin{bmatrix}
	* &1&5&*\\
	*&*&1&5          \\
	*&*&*&1\\
	*&*&*&*
	\end{bmatrix}.
	\end{equation*}
	This network has $D^{*}_{\mc N} = 5$. Given a vertex assignment $\mathbf{b}=[0,1,2,3]$, we get a new network  $\mc N_{\mathbf b}=(\mc L,\mc I,D^{\mathbf b}_{\mc L})$, where
	\begin{equation*}
	D^{\mathbf{b}}_{\mc L} =
	\begin{bmatrix}
	* &0&3&*\\
	*&*&0&3\\
	*&*&*&0\\
	*&*&*&*
	\end{bmatrix}.
  \end{equation*}
  The periodic graph induced by $\mc N_{\mathbf b}$ is shown in Fig.~\ref{fig:connectivity_eg}. 
	As the greatest common divisor of the relevant entries of $D^{\mathbf b}_{\mc L}$ is $3$, we have  $\mc R^{\mc N} = \mc R^{\mc N_{\mathbf b}} = \mc R^{\mc N'}$, where $\mc N'= (\mc L, \mc I, D^{\mathbf b}_{\mc L}/3)$.
	The character $D^*_{\mc N'}$ is $1$.	
\end{example}

\subsection{Scheduling with Guard Intervals}
\label{sec:framed}

Lastly in this section, we discuss the classical approach known as guarded scheduling, which involves using guard intervals to prevent collisions. While it is not necessary to be familiar with guarded scheduling in order to proceed with our approach of characterizing the scheduling rate region, this discussion can provide additional insights into the connection and distinction between  scheduling with and without delays.

Consider a network $\mc N= (\mc L, \mc I, D_{\mc L})$ with character $D^*$. 
We fix an integer $T_F\geq D^*+1$, which is called the \emph{frame length}. All timeslots $t$, $t\geq 0$, are grouped into \emph{frames}, each consisting of $T_F$ consecutive timeslots. For instance, frame $k$ ($k=0,1,\ldots$) includes timeslots $kT_F+i$, $i=0,1,\ldots,T_F-1$.
In the context of this frame structure, a schedule $S$ is considered a \emph{guarded schedule} if the last $D^*$ timeslots in each frame remain inactive. More precisely, for any frame $k$, the timeslots $kT_F + i$, $i = T_F - D^*, \ldots, T_F - 1$ are inactive. The last $D^*$ timeslots in each frame are referred to as the \emph{guard interval}.
We illustrate a guarded schedule as follows, where $T_F=5$ and $D^*=2$: 
\begin{equation*}\arraycolsep=3.6pt%
  \left[ 
  \begin{array}{c|ccccc|ccccc|ccccc|c}
    \cdots & x & x & x & 0 & 0 & x & x & x & 0 & 0 & x & x & x & 0 & 0 & \cdots
    \\
    \cdots & x & x & x & 0 & 0 & x & x & x & 0 & 0 & x & x & x & 0 & 0 & \cdots
    \\
    \cdots & x & x & x & 0 & 0 & x & x & x & 0 & 0 & x & x & x & 0 & 0 & \cdots
    \\
    \cdots & x & x & x & 0 & 0 & x & x & x & 0 & 0 & x & x & x & 0 & 0 & \cdots
  \end{array}\right]
\end{equation*}

One notable property of a guarded schedule is that it eliminates inter-frame collisions by utilizing the guard interval. This characteristic allows us to analyze the schedule in each frame independently. %
To achieve high frame efficiency, the frame length $T_F$ is typically chosen to be significantly larger than the character $D^*$. 

\subsubsection{Rate Region Approximation}

In the context of framed scheduling, the schedule of each frame,
excluding the guard interval, can be regarded as an independent set.
Consider guarded scheduling with a frame length of $T + D^*$.  Define
$\mc N^{T}$ as the subgraph of $\mc N^{\infty}$ induced by the vertex
set $\mathcal{L}\times \mc \{0,1,\ldots,T-1\}$.  An independent set of
$\mc N^{T}$ can be represented by a binary $|\mc L|\times T$ matrix.
For instance, the empty set is an independent set in $\mc N^T$
represented by the all zero $|\mc L|\times T$ matrix.  For the guarded
scheduling, each frame is collision-free if and only if the schedule
of the first $T$ time slots of the frame represents an independent set
of $\mc N^T$.

The rate vector of an independent set of $\mc N^{T}$ is
the vector obtained by summing the columns of the corresponding matrix
presentation and normalizing the result by $T$.  We define
$\dR^{\mc{N}^T}$ as the convex hull of rate vectors associated with
all independent sets of $\mathcal{N}^{T}$. Consequently, the
achievable rate region using guarded scheduling with a frame length of
$T + D^*$ is given by $\frac{T}{T + D^*}\dR^{\mc{N}^T}$, which is a
subset of $\cR^{\mc{N}}$.

\begin{proposition}\label{thm:rr1}
For a discrete network $\mc N$, $\cR^{\mc N}$ is equal to the closure of $\cup_{T=1,2,\ldots}\frac{T}{T+D^*}\dR^{\mc N^T}$.
\end{proposition}

\begin{remark}
  This characterization of the rate region $\cR^{\mc N}$ in this
  proposition involves the union of infinitely many sets, making it
  non-explicit. As the frame length $T$ increases, the approximation
  of $\mc R^{\mc N}$ by
  $\cup_{t=1,2,\ldots,T}\frac{t}{t+D^*}\dR^{\mc N^t}$ becomes more
  accurate. Nevertheless, calculating $\dR^{\mc N^T}$ using generic
  algorithms for enumerating maximal independent sets of $\mc N^T$ can
  become computationally expensive as $T$ grows. This computational complexity arises because a graph with $n$ vertices can have up to $3^{n/3}$
  maximal independent sets~\cite{moon1965cliques}.
  Therefore, although the approximation becomes more accurate with larger $T$, the computational cost of obtaining the exact characterization of $\mathcal{R}^{\mathcal{N}}$ can be prohibitive due to the exponential growth in the number of maximal independent sets for larger graphs.
\end{remark}

\begin{IEEEproof}%
  As $\cup_{T=1,2,\ldots}\frac{T}{T+D^*}\dR^{\mc N^T}\subset \cR^{\mc N}$, we only need to show $\cR^{\mc N} \subset \cup_{T=1,2,\ldots}\frac{T}{T+D^*}\dR^{\mc N^T}$.
  For any $R\in \cR^{\mc N}$ and $\epsilon>0$, by Lemma~\ref{prop:cfr}, there exists a collision-free, periodic schedule $S$ such that
  \begin{equation*}
    R_{S}(l)\geq R(l)-\epsilon \text{ for every } l\in \mc L.
  \end{equation*}
  Let $T_0\geq D^*/\epsilon$ be a period of $S$.
  We have $R_S\in \dR^{\mc N^{T_0}}$ and hence
  \begin{equation*}
    \frac{1}{1+\epsilon} (R-\epsilon) \in \frac{T_0}{T_0+D^*}\dR^{\mc N^{T_0}}.
  \end{equation*}
  As the above holds for any $\epsilon>0$, $R$ is in the closure of $\cup_{T=1,2,\ldots} \frac{T}{T+D^*}\dR^{\mc N^T}$.
\end{IEEEproof}

\subsubsection{Framed Scheduling}

Framed scheduling is a special type of guarded scheduling where each link is either active or inactive simultaneously for all the timeslots within a frame, except for the guard interval. Framed scheduling is motivated by the network scheduling schemes
extensively used in the existing wireless networks.
We illustrate a framed schedule as follows, where $T_F=5$ and $D^*=2$: 
\begin{equation*}\arraycolsep=3.6pt%
  \left[ 
  \begin{array}{c|ccccc|ccccc|ccccc|c}
    \cdots & 1 & 1 & 1 & 0 & 0 & 0 & 0 & 0 & 0 & 0 & 0 & 0 & 0 & 0 & 0 & \cdots
    \\
    \cdots & 0 & 0 & 0 & 0 & 0 & 1 & 1 & 1 & 0 & 0 & 1 & 1 & 1 & 0 & 0 & \cdots
    \\
    \cdots & 1 & 1 & 1 & 0 & 0 & 0 & 0 & 0 & 0 & 0 & 1 & 1 & 1 & 0 & 0 & \cdots
    \\
    \cdots & 0 & 0 & 0 & 0 & 0 & 1 & 1 & 1 & 0 & 0 & 0 & 0 & 0 & 0 & 0 & \cdots
  \end{array}\right]
\end{equation*}

Consider a network
$\mc N= (\mc L, \mc I, D_{\mc L})$. Recall that
$(\mc L, \mc I, \cv 0)$ is a network with the all-$0$ delay matrix and
$\cR^{(\mc L, \mc I, \cv 0)}$ can be characterized by the independent
sets of $(\mc L, \mc I)$ while ignoring the direction of edges. In other words,
$\cR^{(\mc L,\mc I,\cv 0)}$ is the convex hull of the indicator
vectors of all the independent sets of $(\mc L, \mc I)$ with the edge
directions ignored. We prove that $\cR^{(\mc L,\mc I,\cv 0)}$ is the achievable
rate region of framed scheduling for $\mc N$ when $T_F \geq 3D^*+1$.

\begin{lemma}\label{lemma:cf1}
	A framed schedule $S$ with a frame length $T_F\geq 3D^*+1$ is collision-free if and only if for any link $l$ that is active in a frame, for all $\theta\in \mc I(l)$, there exists a certain $l'\in \theta$ that is inactive in the same frame. 
\end{lemma}
\begin{IEEEproof}%
  A framed schedule is also a guarded schedule, where a collision can only be generated by links within the same frame. Therefore, the sufficiency of the lemma holds (even without the condition that $T_F\geq 3D^*+1$). 
  
  To prove the necessary condition, consider that link $l$ is active in the first frame, and for a certain $\theta\in \mc I(l)$, all $l'\in \theta$ are active in the first frame. As $|D_{\mc L}(l,l')|\leq D^*$ for all $l'\in \theta$, we have $D^*+D_{\mc L}(l,l')\in \{ 0,1,\ldots, 2D^*\}$ for all $l'\in \theta$. As $T_F \geq 3D^*+1$, for all $l'\in \theta$, $S(l',D^*+D_{\mc L}(l,l'))=1$, and hence $S(l,D^*)$ has a collision.
\end{IEEEproof}

Based on the above lemma, the following statement is straightforward. 

\begin{proposition}\label{thm:framed}
	Consider a network $(\mc L,\mc I, D_{\mc L})$.
	For a framed schedule of frame length $T_F\geq 3D^*+1$, if its rate vector exists, the rate vector is in $(1-D^*/T_F) \cR^{(\mc L,\mc I,\cv 0)}$. Moreover, any rate vector in $\cR^{(\mc L,\mc I,\cv 0)}$ can be achieved by collision-free, framed schedules.
\end{proposition}
\begin{IEEEproof}[Proof outline]
First, the scheduling rate within each frame is in $(1-D^*/T_F) \cR^{(\mc L,\mc I,\cv 0)}$. Therefore, if the average rate of all the frames converge, it must be also in $(1-D^*/T_F) \cR^{(\mc L,\mc I,\cv 0)}$. Second, an independent set of $(\mc L,\mc I)$ with edge direction ignored can be used to design a collision-free framed schedule, and hence the achievable part can be shown using a large $T_F$. 
\end{IEEEproof}

For a network with a binary collision profile (i.e., $\mc I(l)$ is a subset of $\mc L$ for all $l\in \mc L$), the above discussion for framed scheduling can be improved by relaxing the condition $T_F \geq 3D^*+1$ to $T_F\geq 2D^*+1$ in Lemma~\ref{lemma:cf1} and Proposition~\ref{thm:framed}. The proof of the  necessary condition of Lemma~\ref{lemma:cf1} can be modified as follows: Consider a certain $l'\in \mc I(l)$ is active in the first frame. As $|D_{\mc L}(l,l')|\leq D^*$, there exists $t_0\in \{ 0,1,\ldots, D^*\}$ such that $t_0+D_{\mc L}(l,l')\in \{ 0,1,\ldots, D^*\}$. As $T_F \geq 2D^*+1$, 
 $S(l',t_0+D_{\mc L}(l,l'))=1$, and hence  $S(l,t_0)$ has a collision.

\section{Scheduling Graphs and Rate Region}
\label{sec:sg}

The characterization of the scheduling rate region $\mc R^{\mc N}$
using guarded scheduling (Proposition~\ref{thm:rr1}) cannot be exactly
computed in finite time. In this section, we provide an explicit
characterization of $\mc R^{\mc N}$ (Theorem~\ref{thm:reg-1} and
Theorem \ref{thm:reg-3} below) that enables the computation of this
region in finite time.  Our approach leverages the periodic structure
of $\mc N^{\infty}$.

We need some further concepts about directed graphs: In a directed graph $\mc G$, a \emph{path} of length $k$ is a sequence of vertices $v_0,v_1,\ldots, v_k$ where $(v_i,v_{i+1})$ ($i=0,1,\ldots, k-1$) is a directed edge in $\mc G$. A path of length $0$ is a vertex, while a path of length $1$ is an edge. A path $(v_0,v_1,\ldots, v_k)$ is said to be \emph{closed} if $v_k=v_0$. A path $(v_0,v_1,\ldots)$ of infinite length is said to be \emph{periodic} if there exists a positive integer $T$ such that $v_i = v_{i+T}$ for any $i\geq 0$, where each such value of $T$ is called a period. 
For a periodic path with a period of $T$, the sub-path $(v_0,v_1,\ldots, v_{T})$ is closed. A \emph{cycle} in $\mc G$ is a closed path $(v_0,v_1,\ldots,v_k)$ where $v_i \neq v_j$ for any $0\leq i \neq j \leq k-1$. In other words, the only repeated vertices in the cycle are the first and the last vertices. A cycle of length $k$ is also called a $k$-cycle.

\subsection{Scheduling Graphs and Collision-free Schedules}
\label{sec:sgcs}

Recall that a schedule $S$ is a matrix with columns indexed by
$t\in\mathbb{Z}$. We begin by dividing $S$ into submatrices, which are
formed by consecutive columns, and then proceed to verify whether $S$
is collision-free using these submatrices. For integers $T$, $Q$, and
$k$ satisfying $T\geq 1$ and $1\leq Q\leq T$, we denote $S[T,Q,k]$ as
the submatrix of $S$ with columns $kQ, kQ+1, \ldots, kQ+T-1$.  We
refer to $T$ as the \emph{blocklength}, $Q$ as the \emph{step size},
and $k$ as the \emph{block index}. The definition is illustrated in
Fig.~\ref{fig:la}. Note that $S[T,Q,0]=S[T,T,0]$
for all $Q$. When $Q<T$, there is overlap between $S[T,Q,k]$ and
$S[T,Q,k+1]$. However, $S[T,T,k]$ and $S[T,T,k+1]$ are disjoint but
adjacent.

\begin{figure}
  \centering
  \subfigure[{$S[3,2,k]$ for $k=0,1,2$}]{
  \begin{tikzpicture}[font=\footnotesize,link/.style={circle, draw, thin,fill=cyan!20},scale=0.55,
    abox/.style={draw=red!50!black,dotted,thick,inner sep=4},
    bbox/.style={draw=green!80!black,dashed,thick,inner sep=4}
    ]

      \foreach \x in {1,2,3,4}
      {
        \foreach \y in {0,1,2,3,4,5,6}
        {
          \node[link] (a\x\y) at (2.3*\y,-1.8*\x) {};
        }
        \node[left=2mm of a\x0] {$l_{\x}$};
        \node[right=3mm of a\x6] {$\ldots$};
      }

      \foreach \y in {0,1,2,3,4,5,6}
      {
        \draw[->, bend left=45] (a1\y) to (a3\y);
        \draw[->, bend left=45] (a2\y) to (a4\y);
      }

      \foreach \y/\z in {0/1,1/2,2/3,3/4,4/5,5/6}
      {
        \draw[->] (a1\y) -- (a2\z);
        \draw[->] (a2\y) -- (a3\z);
        \draw[->] (a3\y) -- (a4\z);
      }

      \node[abox,fit=(a10) (a42),label=below:{$S[3,2,0]$}] {};
      \node[bbox,fit=(a12) (a44),label=below:{$S[3,2,1]$}] {};
      \node[abox,fit=(a14) (a46),label=below:{$S[3,2,2]$}] {};
    \end{tikzpicture}}

  \subfigure[{$S[2,2,k]$ for $k=0,1,2$}]{
    \begin{tikzpicture}[font=\footnotesize,link/.style={circle, draw, thin,fill=cyan!20},scale=0.55,
    abox/.style={draw=red!50!black,dotted,thick,inner sep=4},
    ]

      \foreach \x in {1,2,3,4}
      {
        \foreach \y in {0,1,2,3,4,5,6}
        {
          \node[link] (a\x\y) at (2.3*\y,-1.8*\x) {};
        }
        \node[left=2mm of a\x0] {$l_{\x}$};
        \node[right=3mm of a\x6] {$\ldots$};
      }

      \foreach \y in {0,1,2,3,4,5,6}
      {
        \draw[->, bend left=45] (a1\y) to (a3\y);
        \draw[->, bend left=45] (a2\y) to (a4\y);
      }

      \foreach \y/\z in {0/1,1/2,2/3,3/4,4/5,5/6}
      {
        \draw[->] (a1\y) -- (a2\z);
        \draw[->] (a2\y) -- (a3\z);
        \draw[->] (a3\y) -- (a4\z);
      }

      \node[abox,fit=(a10) (a41),label=below:{$S[2,2,0]$}] {};
      \node[abox,fit=(a12) (a43),label=below:{$S[2,2,1]$}] {};
      \node[abox,fit=(a14) (a45),label=below:{$S[2,2,2]$}] {};
    \end{tikzpicture}
    }
    \caption{Illustration of the associated part in the periodic graph of  $S[T,Q,k]$.}\label{fig:la}
\end{figure}

For a positive integer $T$ and a network $\mc N$, an
$|\mc L|\times T$ binary matrices $A$ is considered collision-free for $\mc N$ if
$A = S'[T,T,0]$ for some collision-free schedule $S'$, or equivalently, $A$ represents an independent set of $\mc N^{T}$.  Fix an integer $Q$ with
$1\leq Q \leq T$.  If a schedule $S$ is collision-free, then
$S[T,Q,k]$, $k=0,1,\ldots$ are all collision-free. Conversely,
we will show that for sufficiently large $T$, a schedule $S$ is
collision-free if $(S[T,Q,k],S[T,Q,k+1])$, $k=0,1,\ldots$ satisfy a certain condition. To present this condition, we adopt a graphical approach that enables us to leverage results from graph theory conveniently.
A pair of matrices $(A_1,A_2)$, where
$A_i\in \{0,1\}^{|\mc L|\times t_i}$, is also regarded as a matrix obtained
by juxtaposing them. 

\begin{definition}[Scheduling graph]
\label{def:schedule_graph}
  For a network $\mc N$ and integers $1\leq Q\leq T$, a \emph{scheduling graph} is a directed graph, denoted by $(\mc M_{T}, \mc E_{T,Q})$, defined as follows: The vertex set $\mc M_{T}$ consists of all $|\mc L|\times T$ binary matrices that are collision-free for $\mc N$. The edge set $\mc E_{T,Q}$ includes all pairs of  vertices $(A,B)$ such that $A[T-Q,Q,1] = B[T-Q,Q,0]$ and $(A[Q,Q,0],B)$ (considered as an $|\mc L|\times T$ matrix) is collision-free for $\mc N$.
\end{definition}

The sets $\mc M_T$ and $\mc E_{T,Q}$ can be determined by the independent sets of $\mc N^T$ and  $\mc N^{T+Q}$, respectively, as discussed in more detail in Sec.~\ref{sec:sgcal}. We also call $(\mc M_T, \mc E_{T,Q})$ the  \emph{step-$Q$} scheduling graph. 
When $Q<T$, a necessary condition for $(A,B) \in \mc E_{T,Q}$ is that the last $T-Q$ columns of $A$ and the first $T-Q$ columns of $B$ are the same.
Moreover, $(A,B) \in \mc E_{T,Q}$ if and only if $A = S[T,Q,0]$ and $B=S[T,Q,1]$ for a certain collision-free schedule $S$.
We also write the step-$T$ scheduling graph as $(\mc M_T, \mc E_{T})$.
A necessary and sufficient condition for $(A,B) \in \mc E_{T}$ is that 
 $(A,B)$ as an $|\mc L|\times 2T$ binary matrix represents an independent set of $\mc N^{2T}$.  %

\begin{example}[Multihop line network]
  \label{ex:scheduling_graph}
  We give $(\mc M_1, \mc E_{1})$ of  $\mc N_{4,1}^{\text{line}}$ as an example. Here $\mc M_1$ includes the $4\times 1$ matrices $v$ such that $v$ can be a column of a certain collision-free schedule $S$ of $\mc N_{4,1}^{\text{line}}$. We have $\mc M_1 = \{v_0,v_1,\ldots,v_8\}$, where
  \begin{equation*}
    \begin{bmatrix}
      v_0 & v_1 & \cdots & v_8
    \end{bmatrix}
    =
    \begin{bmatrix}
      0 & 1 & 0 & 0 & 0 & 1 & 1 & 0 & 0 \\
      0 & 0 & 1 & 0 & 0 & 0 & 1 & 1 & 0 \\
      0 & 0 & 0 & 1 & 0 & 0 & 0 & 1 & 1 \\
      0 & 0 & 0 & 0 & 1 & 1 & 0 & 0 & 1
    \end{bmatrix}.
  \end{equation*}
  $\mc E_{1}$ includes all the pairs $(v,v')$ such that $[v,v']$ is equal to two consecutive columns of a certain collision-free schedule, and can be denoted by the adjacency matrix:
  \begin{equation}\label{eq:ee}
    \begin{blockarray}{cccccccccc}
      &v_0&v_1&v_2&v_3&v_4&v_5&v_6&v_7&v_8\\
      \begin{block}{c[ccccccccc]}
	v_0&1&1&1&1&1&1&1&1&1\\
	v_1&1&1&0&1&1&1&0&0&1\\
	v_2&1&1&1&0&1&1&1&0&0\\
	v_3&1&1&1&1&0&0&1&1&0\\
	v_4&1&1&1&1&1&1&1&1&1\\
	v_5&1&1&0&1&1&1&0&0&1\\
	v_6&1&1&0&0&1&1&0&0&0\\
	v_7&1&1&1&0&0&0&1&0&0\\
	v_8&1&1&1&1&0&0&1&1&0\\
      \end{block}
    \end{blockarray}.
  \end{equation}
\end{example}

The following two theorems show that a collision-free schedule of a
network~$\mc N$ is equivalent to a directed path in a scheduling graph
$(\mc M_T, \mc E_{T,Q})$ with $T\geq 2D^*$. 
These results allow us to further investigate the scheduling problem using  scheduling graphs.

\begin{theorem}
\label{thm:cf-1}
Consider a network $\mc N$ and a schedule $S$.  If $S$ is
collision-free for $\mc N$, then for any integers $1\leq Q\leq T$, the
sequence $(S[T,Q,k], k=0,1,\ldots)$ forms a path in
$(\mc M_T,\mc E_{T,Q})$.
\end{theorem}

\begin{IEEEproof}%
  Suppose $\mathcal{S}$ is collision-free. We see that for $k=0,1,\ldots$, $S[T,Q,k]$ is collision-free for $\mc N$ and hence is in $\mc M_T$. Note that $S[T,Q,k]$ and  $S[T,Q,k+1]$ are the first and the last $T$ columns of $S[T+Q,Q,k]$, which is collision-free for $\mc N$. Hence, $(S[T,Q,k],S[T,Q,k+1]) \in \mc E_{T,Q}$. 
  Therefore,
$(S[T,Q,k], k=0,1,\ldots)$ is a path in  $(\mc M_T,\mc E_{T,Q})$.
\end{IEEEproof}  

For $\mc N_{4,1}^{\text{line}}$, any schedule $S$ that forms a path in
$(\mc M_1, \mc E_{1})$ as characterized in
Example~\ref{ex:scheduling_graph} is collision-free. However, 
the converse of Theorem~\ref{thm:cf-1} can only be proved in general when $T$ is sufficiently large. 
The next example shows that for $T<2D^*$, a schedule $S$ that forms a path in $(\mc M_T, \mc E_{T,Q})$ may not be collision-free.

\begin{example}
Consider a network $\mc N_4 = (\mc L, \mc I, D_{\mc L})$, where
$\mc L = \{l_1,l_2,l_3,l_4\}$.
The collision sets of the links are
  \begin{IEEEeqnarray*}{rClrCl}
	\mc I(l_1) & = & \emptyset, &
	\mc I(l_2) & = & \{ \{l_1,l_3\}\}, \\
	\mc I(l_3) & = & \{ \{l_2,l_4\}\}, & \quad
	\mc I(l_4) & = & \emptyset.
\end{IEEEeqnarray*}
The link-wise delay matrix $D_{\mc L}$ is
\begin{equation}
D_{\mc L} =
\begin{bmatrix}
* &*&*&*\\
-1&*&1&*\\
*&-1&*&1\\
*&*&*&*
\end{bmatrix}.
\end{equation}
For this network, the character
 \begin{equation}\label{eq:dstarex}
D^* = \max_{l\in \mc L}\max_{\theta\in \mc I(l)} \max_{l'\in \theta}|D_{\mc L}(l,l')| = 1.
\end{equation}
We illustrate that a schedule $S$ that forms a path in $(\mc M_1, \mc E_{1})$ may not be collision-free. First, we see that $(\mc M_1, \mc E_{1})$ is a complete graph with the vertices set $\{0,1\}^4$.
Consider a schedule $S$ with a submatrix $S'$ formed by three consecutive columns:
\begin{align*}
  S' =\begin{bmatrix}
    1   &0 &  0\\  1& 1& 0\\  0 &1& 1\\  0& 0& 1
  \end{bmatrix}.
\end{align*} 
Because $S'(l_1,0) = S'(l_3,2) = 1$, $S'(l_2,1)$ has a collision. As $S'(l_2,1)=1$, $S$ is not collision-free.
\end{example}

The next theorem proves a converse of Theorem~\ref{thm:cf-1} for blocklength $T \geq 2D^*$. For binary collision, the converse of can be proved for $T\geq D^*$ (see Theorem~\ref{thm:cf-2}).

\begin{theorem}\label{thm:cf-3}
  Consider a network $\mc N=(\mc L, \mc I, D_{\mc L})$ and a schedule
  $S$.  If for certain integers $T$ and $Q$ such that $T\geq 2D^*$ and
  $T\geq Q\geq 1$, the sequence $(S[T,Q,k], k=0,1,\ldots)$ forms a
  path in $(\mc M_T,\mc E_{T,Q})$, then $S$ is collision-free.
\end{theorem}
\begin{IEEEproof}%
  Fix any $(l,t)\in \mc L\times \zz$ such that $S(l,t)=1$, and fix any $\theta \in \mc I(l)$. To prove $S(l,t)$ is collision-free, we need to show that for a certain $l'\in \theta$ $S(l',t+D_{\mc L}(l,l')) = 0$. 
  Find integers $c\geq 0$ and $0\leq d < Q$ such that $D^* = c Q + d$, and
  find integers $k$ and $0\leq t_0 < Q$ such that $t=(k+c)Q+t_0$. For any $l' \in \theta$, we have
	\begin{IEEEeqnarray*}{rCl}
		t' & \triangleq & t+D_{\mc L}(l,l')  \\
		& \in & [t - D^*, t+D^*] \\
        & = & [kQ + t_0 -d, kQ+t_0-d+2D^*]. %
	\end{IEEEeqnarray*}
	In the following, we discuss two cases of $t_0$: $0\leq t_0<d$ and $d \leq t_0 <Q$. %

    When $0\leq t_0< d$, we have $-Q<t_0-d<0$. Hence for any $l' \in \theta$,
    as $T\geq 2D^*$, $(k-1)Q < t' < kQ+T$ (see Fig.~\ref{fig:cf-3}
    (a)). In other words, to verify the collision of $S(l,t)$ with
    respect to $\theta$, we only need to consider $S[T,Q, k-1]$ and
    $S[T,Q,k]$.  As $(S[T,Q, k-1],S[T,Q,k]) \in \mc E_{T,Q}$, we have
    $S[T,Q,k-1] = S'[T,Q,0]$ and $S[T,Q,k] = S'[T,Q,1]$ for a certain
    collision-free schedule $S'$. As $S(l,t) = S'(l,t_0-d+Q+D^*)=1$,
    we have $S(l',t')=S'(l',t_0-d+Q+D^*+D_{\mc L}(l,l')) = 0$ for a
    certain $l'\in \theta$.
    
    When $d \leq t_0 < Q$, we have $0\leq t_0-d<Q$. Hence for any
    $l' \in \theta$, as $T\geq 2D^*$, $kQ\leq t' < (k+1)Q+ T$ (see
    Fig.~\ref{fig:cf-3} (b)). In other words, to verify the
    collision of $S(l,t)$ with respect to $\theta$, we only need to
    consider $S[T,Q, k]$ and $S[T,Q,k+1]$. As
    $(S[T,Q,k],S[T,Q,k+1]) \in \mc E_{T,Q}$, we have
    $S[T,Q,k] = S'[T,Q,0]$ and $S[T,Q,k+1] = S'[T,Q,1]$ for certain
    collision-free schedule $S'$. Therefore, as
    $S(l,t) = S'(l,t_0-d+D^*)=1$, we have
    $S(l',t')=S'(l',t_0-d+D^*+D_{\mc L}(l,l')) = 0$ for a certain
    $l'\in \theta$.

	For both cases, $S(l',t') = 0$ for a certain $l'\in \theta$.
	Therefore, $S(l,t)$ is collision-free. 
\end{IEEEproof}

\begin{figure}
  \centering
  \subfigure[case $0\leq t_0<d$]{
  \begin{tikzpicture}[font=\footnotesize]
    \draw[->] (0,0) -- (8,0) node[right] {time};
    \draw[thick] (0.5,0) -- +(0,0.2) node[above] {$(k-1)Q$};
    \draw[thick] (1.5,0) -- +(0,0.2) node[above] {$kQ$};
    \draw[thick] (4,0) -- +(0,0.2) node[above] {$(k+c)Q$};
    \draw[] (7.6,0) -- +(0,0.2) node[above] {$kQ+T$};
    \draw[] (4.3,0) -- +(0,-0.2) node[below] {$t$};
    \draw[] (1.2,0) -- +(0,-0.2) node[below] {$t-D^*$};
    \draw[] (7.4,0) -- +(0,-0.2) node[below] {$t+D^*$};
   \end{tikzpicture}
   }

  \subfigure[case $d\leq t_0<Q$]{
  \begin{tikzpicture}[font=\footnotesize,
    adot/.style={circle,draw,thick,fill=white,inner sep=0,minimum size=3pt},
    bdot/.style={fill,inner sep=0,minimum size=3pt}]
    \draw[->] (0,0) -- (8,0) node[right] {time};
    \draw[thick] (0.5,0) -- +(0,0.2) node[above] {$(k-1)Q$};
    \draw[thick] (1.5,0) -- +(0,0.2) node[above] {$kQ$};
    \draw[thick] (4,0) -- +(0,0.2) node[above] {$(k+c)Q$};
    \draw[] (7.6,0) -- +(0,0.2) node[above] {$(k+1)Q+T$};
    \draw[] (4.3,0) -- +(0,-0.2) node[below] {$t$};
    \draw[] (1.7,0) -- +(0,-0.2) node[below] {$t-D^*$};
    \draw[] (6.9,0) -- +(0,-0.2) node[below] {$t+D^*$};
  \end{tikzpicture}
   }
  \caption{Illustration of the proof of Theorem~\ref{thm:cf-3}. A thick tick indicts the start position of a submatrix $S[T,Q,k]$, and a thin tick indicts the time.} \label{fig:cf-3}
\end{figure}

\subsection{Periodic Schedules and Scheduling Graphs}
\label{sec:pssg}

Theorem~\ref{thm:cf-1} and Theorem~\ref{thm:cf-3} together show that a
collision-free schedule is equivalent to a directed path in a
scheduling graph $(\mc M_T, \mc E_{T,Q})$ with $T\geq 2D^*$. Hence
we convert the independent set problem on a periodic hypergraph to a path problem on a scheduling graph. Although a scheduling graph has a finite size, 
the number of paths in it is infinite and the length of a path can be unbounded as well. 
We continue to study how to reduce the number and length of the paths based on periodic scheduling, which is rate region achieving as shown in Lemma~\ref{prop:cfr}.
We will establish a relationship between a periodic schedule and a closed path in a scheduling graph, thus providing a characterization of $\cR^{\mc N}$ by cycles in the scheduling graph.

\begin{definition}\label{def:rcp}
  The rate vector $R_P$ of a closed path $P=(A_0,A_1,\ldots,A_k)$ in $(\mc M_T,\mc E_{T,Q})$ is defined as follows:
\begin{equation*}
  R_P = \frac{1}{kQ} \sum_{i=0}^{k-1} A_i\cv 1_Q,
\end{equation*}
where $\cv 1_Q$ is a length-$T$ column vector with the first $Q$ entries equal to $1$ and the remaining $T-Q$ entries equal to $0$. 
\end{definition}
Denote $\mathrm{cycle}(\mc G)$ as the collection of all cycles in a directed graph $\mc G$.
Define
\begin{equation*}
  \mc R^{(\mc M_T,\mc E_{T,Q})} = \conv\{R_C: C \in \mathrm{cycle}(\mc M_T, \mc E_{T,Q})\},
\end{equation*}
where $\conv\mc A$ is the convex hull of a set $\mc A$. 
Since $(\mc M_T,\mc E_{T,Q})$ is finite, $\mathrm{cycle}(\mc M_T,\mc E_{T,Q})$ is finite and hence $\mc R^{(\mc M_T,\mc E_{T,Q})}$ is a closed convex polytope. We will show that when $T\geq 2D^*$, $\cR^{\mc N} = \mc R^{(\mc M_T,\mc E_{T,Q})}$.

\begin{lemma}\label{thm:m1e1}
  For a network $\mc N$ and a collision-free, periodic schedule $S$, the following statements hold:
  \begin{enumerate}
  \item For any period $K$ of $S$, $(S[T,Q,i], i=0,1,\ldots,K)$ forms a closed path in the scheduling graph $(\mc M_T, \mc E_{T,Q})$. 
  \item The rate vector $R_S$ belongs to the set $\mc R^{(\mc M_T,\mc E_{T,Q})}$.
  \end{enumerate}
\end{lemma}
\begin{IEEEproof}%
	By Theorem \ref{thm:cf-1}, $(S[T,Q,i], i=0,1,\ldots)$ forms a path in $(\mc M_T,\mc E_{T,Q})$. As $KQ$ is also a period of $S$, $S[T,Q,i] = S[T,Q,i+K]$. Therefore, the path $(S[T,Q,i], i=1,2,\ldots)$ has a period $K$ and hence $(S[T,Q,i], i=0,1,\ldots,K)$ is a closed path in $(\mc M_T, \mc E_{T,Q})$.

	A closed path can be decomposed into a sequence of (not necessarily distinct) cycles (see, e.g., \cite{gleiss2003circuit}). Suppose $(S[T,Q,i], i=0,1,2,\ldots,K)$ has the decomposition of cycles $C_1,\ldots,C_{K'}$ in $\mathrm{cycle}(\mc M_T, \mc E_{T,Q})$, where $C_i$ is of length $k_i$. Using this decomposition of the closed path, one obtains
	\begin{IEEEeqnarray*}{rCl}
		R_S %
		& = & \frac{1}{K} \sum_{i=1}^{K'}k_i R_{C_i} 
		 \in  \mc R^{(\mc M_T,\mc E_{T,Q})}. \qedhere
	\end{IEEEeqnarray*}
\end{IEEEproof}

\begin{theorem}\label{thm:reg-1}
For a network $\mc N$, we have $\cR^{\mc N} \subset \mc R^{(\mc M_T,\mc E_{T,Q})}$ for integers $T\geq Q \geq 1$.
\end{theorem}

\begin{IEEEproof}%
Consider $R \in \cR^{\mc N}$. By Lemma~\ref{prop:cfr}, for any $\epsilon >0$, there exists a collision-free, periodic schedule $S$ such that $R_S\pgeq R-\epsilon$. By Lemma~\ref{thm:m1e1}, $R_S \in \mc R^{(\mc M_T,\mc E_{T,Q})}$. As $\mc R^{(\mc M_T,\mc E_{T,Q})}$ is closed, we have $R \in  \mc R^{(\mc M_T,\mc E_{T,Q})}$.
\end{IEEEproof}

For the general collision model, the converse of the above theorem ($\mc R^{(\mc M_T,\mc E_{T,Q})} \subset \cR^{\mc N}$) can be proved for blocklength $T\geq 2D^*$. To show the converse, we first show that for a periodic schedule, it is sufficient to check a sufficiently long part of the schedule to verify whether it is collision-free.

\begin{lemma}\label{lemma:cfp-2}
	For a network $\mc N$ and integers $T\geq 2D^*$ and $1\leq Q\leq T$, 
	suppose $S$ is a periodic schedule with period $kQ$ such that $(S[T,Q,i], i=0,\ldots,k)$ is a closed path in $(\mc M_T, \mc E_{T,Q})$. Then $S$ is collision-free for $\mc N$.
\end{lemma}
\begin{IEEEproof}%
  Let $A_i = S[T,Q,i]$ for $i=0,1,\ldots,k-1$.   
Fix any integer $i = a k + b$ where $a \geq  0$ and $0\leq b \leq k-1$. First, $S[T,Q,i] = S[T,Q,b] = A_b \in \mc M_T$. Second, $(S[T,Q,i], S[T,Q,i+1]) = (S[T,Q,b], S[T,Q,b+1]) = (A_b,A_{b+1})\in \mc E_{T,Q}$. Therefore, the sequence $(S[T,Q,i], i=1,2,\ldots)$ is a path in  $(\mc M_T,\mc E_{T,Q})$. As $T\geq 2D^*$, by Theorem~\ref{thm:cf-3}, $S$ is collision-free.
\end{IEEEproof}

\begin{theorem}\label{thm:reg-3}
  For a network $\mc N$ with a general collision profile and any integers $T$ and $Q$ such that $T\geq 2D^*$ and 
and $1\leq Q\leq T$, it holds that 
    $\cR^{\mc N} \supset \mc R^{(\mc M_T,\mc E_{T,Q})}$.
\end{theorem}
\begin{IEEEproof}%
	Fix $R \in  \mc R^{(\mc M_T,\mc E_{T,Q})}$. We can write
	\begin{equation*}
	R = \sum_{C\in \mathrm{cycle}(\mc M_T,\mc E_{T,Q})} \alpha_C R_C,
	\end{equation*}
	where $\alpha_C\geq 0$ and $\sum_{C\in \mathrm{cycle}(\mc M_T,\mc E_T)}\alpha_C = 1$. For a cycle $C=(C_0,C_1,\ldots, C_{k})$ in $(\mc M_T,\mc E_{T,Q})$, we define a schedule $S$ with period $kQ$ such that $S[T,Q,i] = C_i$ for $i=0,1,\ldots,k-1$. By Lemma~\ref{lemma:cfp-2}, $S$ is collision-free and hence $R_C = R_S \in \cR^{\mc N}$. As $\cR^{\mc N}$ is convex (see Lemma~\ref{lemma:convex}), we have $R \in \cR^{\mc N}$.  
\end{IEEEproof}

Theorem~\ref{thm:reg-1} and Theorem \ref{thm:reg-3} together give an explicit characterization of $\cR^{\mc N}$, i.e., when $T\geq 2D^*$, $\cR^{\mc N} = \mc R^{(\mc M_T,\mc E_{T,Q})}$, 
where $\mc R^{(\mc M_T,\mc E_{T,Q})}$ is explicitly determined by the cycles in $(\mc M_T,\mc E_{T,Q})$. Now we see that using a scheduling graph, only a finite number of cycles are required for determining the scheduling rate region.
Moreover, from the proofs of Lemma~\ref{thm:m1e1} and Lemma~\ref{lemma:cfp-2}, we also see that a periodic, collision-free schedule can be formed by  cycles of $(\mc M_T,\mc E_{T,Q})$.

\subsection{Enhanced Results for Binary Network Model}
\label{sec:bin}

For the binary collision model, Theorem~\ref{thm:reg-3} can be proved
for $T\geq D^*$ (see Theorem \ref{thm:reg-2} below).  For the binary
collision model, the collision set $\mc I(l)$ has the property that
for any $\theta \in \mc I(l)$, $|\theta|=1$. In this case, we also
write $\mc I(l)$ as a subset of $\mc L$, and the formula of character $D^*$ given in
\eqref{eq:dstar} can be
simplified as
\begin{equation*}
D^*_{\mc N} = \max_{l\in \mc L}\max_{l'\in \mc I(l)} |D_{\mc L}(l,l')|.
\end{equation*}

The following theorem improves Theorem~\ref{thm:cf-3} for the binary collision model with the lower bound on $T$ improved from $2D^*$ to $D^*$.

\begin{theorem}
	\label{thm:cf-2}
	Consider a network $\mc N=(\mc L, \mc I, D_{\mc L})$ with a binary
    collision profile and a schedule $S$.  If for certain integers $T$
    and $Q$ such that $T\geq D^*$ and $T\geq Q\geq 1$, the sequence
    $(S[T,Q,k], k=0,1,\ldots)$ forms a path in
    $(\mc M_T,\mc E_{T,Q})$, then $S$ is collision-free.
\end{theorem}
\begin{IEEEproof}
  Fix any $(l,t)\in \mc L\times \zz$ such that $S(l,t)=1$, and fix any $l' \in \mc I(l)$. To prove $S(l,t)$ is collision-free, we need to show that $S(l',t+D_{\mc L}(l,l')) = 0$. 
  Find integers $c\geq 0$ and $0\leq d < Q$ such that $D^* = c Q + d$, and find integers $k$ and $0\leq t_0 < Q$ such that $t=(k+c)Q+t_0$. For $l'$, we have
  \begin{IEEEeqnarray*}{rCl}
    t' & \triangleq & t+D_{\mc L}(l,l')  \\
    & \in & [t - D^*, t+D^*] \\
    & = & [kQ + t_0 -d, kQ+t_0-d+2D^*]. %
  \end{IEEEeqnarray*}
  In the following, we discuss two cases of $t_0$: $0\leq t_0<d$ and $d \leq t_0 <Q$. 

    When $0\leq t_0< d$, we have $-Q<t_0-d<0$, and hence $(k-1)Q < t' < kQ+2D^*$.
    See Fig.~\ref{fig:cf-2} (a) for an illustration. 
	Consider three subcases of $t'$:
	\begin{itemize}
    \item  $t'\in [(k-1)Q, kQ+T-1]$. Note that $t \in [kQ,kQ+T-1]$.
      As $(S[T,Q,k-1],S[T,Q,k]) \in \mc E_{T,Q}$, we have
      $S[T,Q,k-1] = S'[T,Q,0]$ and $S[T,Q,k] = S'[T,Q,1]$ for certain collision-free schedule $S'$. Therefore, as $S(l,t) = S'(l,t_0+(1+c)Q)=1$, $S(l',t')=S'(l',t_0+(1+c)Q+D_{\mc L}(l,l')) = 0$.
      
    \item  $t'\in [(k+c)Q, (k+c)Q+T-1]$.  Note that $t \in [(k+c)Q, (k+c)Q+T-1]$ too. As $S[T,Q,k+c] \in \mc M_T$, we have $S[T,Q,k+c] = S'[T,0]$ for certain collision-free schedule $S'$. Therefore, as $S(l,t) = S'(l,t_0)=1$, $S(l',t')=S'(l',t_0+D_{\mc L}(l,l')) = 0$. 

    \item $t'\in [(k+c+1)Q, (k+c+1)Q+T-1]$. Note that $t \in [(k+c)Q, (k+c)Q+T-1]$.
      As $(S[T,Q,k+c],S[T,Q,k+c+1])\in \mc E_{T,Q}$, we have $S[T,Q,k+c] = S'[T,Q,0]$ and $S[T,Q,k+c+1] = S'[T,Q,1]$ for certain collision-free schedule $S'$. Therefore, as $S(l,t) = S'(l,t_0)=1$, $S(l',t')=S'(l',t_0+D_{\mc L}(l,l')) = 0$.
	\end{itemize}
We see that for all the subcases of $t'$, $S(l',t')=0$.
    
    When $d \leq t_0 < Q$, we have $0\leq t_0-d<Q$, and hence 
    $kQ\leq t' < kQ+ T+Q$. See Fig.~\ref{fig:cf-2} (b) for an illustration. 
    Consider three subcases of $t'$:
    \begin{itemize}
    \item $t'\in [kQ, kQ+T-1]$. Note that $t \in [kQ,kQ+T-1]$.
    \item $t'\in [(k+c)Q, (k+c)Q+T-1]$.   Note that $t \in [(k+c)Q, (k+c)Q+T-1]$ too. 
    \item $t'\in [(k+c+1)Q, (k+c+1)Q+T-1]$. Note that $t \in [(k+c)Q, (k+c)Q+T-1]$.
    \end{itemize}
    These subcases can be analyzed similarly as when $0\leq t_0<d$, and hence $S(l',t')=0$.

	For both cases of $t_0$, $S(l',t') = 0$ for any $l' \in \mc I(l)$.
	Therefore, $S(l,t)$ is collision-free. 
      \end{IEEEproof}

      \begin{figure}
  \centering
  \subfigure[case $0\leq t_0<d$]{
  \begin{tikzpicture}[font=\footnotesize]
    \draw[->] (0,0) -- (8,0) node[right] {time};
    \draw[thick] (0.5,0) -- +(0,0.2) node[above] {$(k-1)Q$};
    \draw[thick] (1.5,0) -- +(0,0.2) node[above] {$kQ$};
    \draw[thick] (3.3,0) -- +(0,0.2) node[above] {$(k+c)Q$};
    \draw[] (4.6,0) -- +(0,0.2) node[above] {$kQ+T$};
    \draw[] (6.6,0) -- +(0,0.2) node[above] {$(k+c+1)Q+T$};
    \draw[] (3.5,0) -- +(0,-0.2) node[below] {$t$};
    \draw[] (1.2,0) -- +(0,-0.2) node[below] {$t-D^*$};
    \draw[] (5.8,0) -- +(0,-0.2) node[below] {$t+D^*$};
   \end{tikzpicture}
   }

  \subfigure[case $d\leq t_0<Q$]{
  \begin{tikzpicture}[font=\footnotesize,
    adot/.style={circle,draw,thick,fill=white,inner sep=0,minimum size=3pt},
    bdot/.style={fill,inner sep=0,minimum size=3pt}]
    \draw[->] (0,0) -- (8,0) node[right] {time};
    \draw[thick] (0.5,0) -- +(0,0.2) node[above] {$(k-1)Q$};
    \draw[thick] (1.5,0) -- +(0,0.2) node[above] {$kQ$};
    \draw[thick] (3.3,0) -- +(0,0.2) node[above] {$(k+c)Q$};
    \draw[] (4.6,0) -- +(0,0.2) node[above] {$kQ+T$};
    \draw[] (6.6,0) -- +(0,0.2) node[above] {$(k+c+1)Q+T$};
    \draw[] (3.5,0) -- +(0,-0.2) node[below] {$t$};
    \draw[] (1.7,0) -- +(0,-0.2) node[below] {$t-D^*$};
    \draw[] (5.5,0) -- +(0,-0.2) node[below] {$t+D^*$};
  \end{tikzpicture}
   }
  \caption{Illustration of the proof of Theorem~\ref{thm:cf-2}. A thick tick indicts the start position of a submatrix $S[T,Q,k]$, and a thin tick indicts the time.} \label{fig:cf-2}
\end{figure}

The following lemma improves Lemma~\ref{lemma:cfp-2} for the binary collision model.
The proof is the same as that of Lemma~\ref{lemma:cfp-2} except that  Theorem~\ref{thm:cf-2} is applied instead of Theorem~\ref{thm:cf-3}.

\begin{lemma}\label{lemma:cfp}
For a network $\mc N$ with a binary collision profile and integers $T$ and $Q$ such that $T\geq D^*$ and $1\leq Q\leq T$, 
suppose $S$ is a periodic schedule with period $kQ$ such that $(S[T,Q,i], i=0,\ldots,k)$ is a closed path in $(\mc M_T, \mc E_{T,Q})$. Then $S$ is collision-free for $\mc N$.
\end{lemma}

The following theorem improves Theorem~\ref{thm:reg-3} for the binary collision model.
The proof is the same as that of Theorem~\ref{thm:reg-3} except that Lemma~\ref{lemma:cfp} is applied instead of Lemma~\ref{lemma:cfp-2}.

\begin{theorem}
	\label{thm:reg-2}
	For a network $\mc N$ with a binary collision profile and any integers $T$ and $Q$ such that $T\geq D^*$ and $1\leq Q\leq T$, it holds that 
		$\cR^{\mc N} \supset \mc R^{(\mc M_T,\mc E_{T,Q})}$.
\end{theorem}

\section{Algorithms for Calculating Scheduling Rate Region}
\label{sec:alg}

In this section, our focus is on developing algorithms for calculating the scheduling rate region of a network $\mc N = (\mc L, \mc I, D_{\mc L})$.
Based on the discussion in Sec.~\ref{sec:sg}, we understand that the rate region of
$\mc N$ is determined by cycles of the scheduling graph
$(\mc M_{T},\mc E_{T,Q})$ with a sufficiently large $T$. Therefore, a
straightforward approach to compute the rate region is to apply
the existing cycle enumerating algorithms on the scheduling graphs
(e.g., Johnson's algorithm~\cite{johnson1975finding}).  Though we may
choose the minimum value of $T$ to reduce the computation cost, the
straightforward approach in general incurs a high computation
cost.
Note that $|\mc M_{T}| =O(2^{|\mc L|T})$ and the number of
cycles in $(\mc M_{T},\mc E_{T,Q})$ can exceed
$2^{|\mc M_{T}|}$~\cite{johnson1975finding}. Consequently, cycle
enumerating algorithms designed for generic graphs tend to exhibit a steep increase in computation cost as a function of $|\mc L|T$.

In the existing scheduling researches, to avoid the high the
computational complexity of enumerating all the maximal independent
sets, algorithms have been developed to achieve a subset of the
scheduling rate
region~\cite{lin2006impact,chaporkar2008throughput}. Similarly, due to
the even greater computational challenge of enumerating all cycles in
a scheduling graph, it is also a reasonable approach to enumerate cycles
up to a specific length $k$, which can provide an approximation of the
scheduling rate region. Although the maximum cycle length
theoretically is $O(|\mc M_{T}|)$, both analytical and numerical
evidence suggests that a small value of $k$ can yield a good
approximation of the rate region.

Though an algorithm designed for generic graphs can be applied on a
scheduling graph to enumerate cycles up to a specific length
(e.g.,~\cite{liu2006new}), its running time may not be optimal since
it does not utilize the specific structure of the scheduling graph.
In this section, we present an approach specifically designed for the
step-$T$ scheduling graph $(\mc M_T, \mc E_{T})$. By leveraging a
dominance property of $(\mc M_T, \mc E_{T})$, we refine the
characterization of the scheduling rate region using only subgraphs of
$(\mc M_T, \mc E_{T})$. We derive algorithms that calculate the subset
of the scheduling rate region generated by the cycles of
$(\mc M_T, \mc E_{T})$ up to a specific length.  Numerical evaluations
demonstrate that our algorithms can achieve faster computation
compared to using generic cycle enumeration algorithms directly on
$(\mc M_T, \mc E_{T})$, particularly when dealing with larger
networks.  Moreover, the techniques developed here can be adopted in
the next section for maximizing a linear function of the rate vectors.

We denote $\max_{\pgeq}\mc A$ as the set of maximal elements in the
partially ordered set $(\mc A, \pgeq)$.  In other words,
$\max_{\pgeq}\mc A$ is the smallest subset $\mc B$ in $\mc A$ such
that any element of $\mc A$ is dominated by some elements in
$\mc B$.  A sequence of matrices $A = (A_0,A_1,\ldots)$, where
$A_i\in \{0,1\}^{|\mc L|\times t_i}$, is regarded as a matrix obtained
by juxtaposing $A_0, A_1, \ldots$.  Therefore, the relation $\pgeq$
and $\pleq$ defined on matrices can be applied to pairs of sequence
of matrices.  For two real numbers $a$ and $b$, we define
$a\land b$ as the minimum of $a$ and $b$.  For two matrices $A=(a_{ij})$ and
$B=(b_{ij})$ of the same size, we define
$A\land B = (a_{ij}\land b_{ij})$. When $A$ and $B$ are binary
matrices, $A\land B$ is the matrix resulting from the bitwise AND operation.

\subsection{Calculation of Scheduling Graphs}
\label{sec:sgcal}

Before delving into our approach to the rate region, let's discuss
the calculation of scheduling graphs. We first establish the equivalence between $(\mc M_{T},\mc E_{T,Q})$ and $\mc M_{T+Q}$. According to Definition~\ref{def:schedule_graph}, $\mc M_{T+Q}$ represents the subset of $|\mc L|\times (T+Q)$ binary matrices that correspond to the independent sets of $\mc N^{T+Q}$. On one hand, $\mc E_{T,Q}$ consists of pairs $(A,B) \in \mc M_T\times \mc M_T$ such that $(A[Q,Q,0],B) \in \mc M_{T+Q}$ and the last $T-Q$ columns of $A$ and the first $T-Q$ columns of $B$ are the same. On the other hand, for every $|\mc L|\times (T+Q)$ binary matrix $C\in \mc M_{T+Q}$, both $C[T,Q,0]$ and $C[T,Q,1]$ are elements of $\mc M_T$, resulting in $(C[T,Q,0],C[T,Q,1]) \in \mc E_{T,Q}$. Therefore, the calculation of $(\mc M_{T},\mc E_{T,Q})$ is equivalent to the calculation of $\mc M_{T+Q}$.

Denote $\mc M_{t}^* = \max_{\pgeq}\mc \mc M_{t}$ as the collection of the $|\mc L|\times t$ binary matrices that represent the \emph{maximal} independent sets of $\mc N^{t}$.
When dealing with a binary collision model, the Bron–Kerbosch algorithm and its refinements~\cite{bron1973algorithm,tomita2006worst,cazals2008note,eppstein2013listing} can be employed to enumerate $\mc M_t^*$.
In the case of a general collision model where $\mc N^{t}$ forms a hypergraph, the corresponding problem of finding maximal independent sets has been discussed in  \cite{beame1990parallel,dahlhaus1992efficient,bercea2014computing}. 
The worst-case complexity of the Bron–Kerbosch algorithm is
$O(3^{n/3})$, where $n$ is the number of vertices in the
network~\cite{tomita2006worst}. In our experience, the vertex pivoting technique \cite{tomita2006worst,cazals2008note} can greatly improve the running time of the Bron–Kerbosch
algorithm for $\mc N^{t}$. 

\begin{algorithm}[tb]
  \caption{A algorithm for calculating $(\mc M_T, \mc E_{T,Q})$ from $\mc M_{T+Q}^*$. The function SchedGraph recursively call ADDC to add vertices and edges to $(\mc M_T, \mc E_{T,Q})$.} \label{alg:sg}
  \begin{algorithmic}[1]
    \Function{SchedGraph}{}
      \State \textbf{Input:} $\mc M_{T+Q}^*$, $T$, $Q$
      \State \textbf{Output:} $(\mc M_T, \mc E_{T,Q})$
      \State $(\mc M_T,\mc E_{T,Q}) \leftarrow (\emptyset,\emptyset)$
      \For{each $C \in \mc M_{T+Q}^*$}
        \State ADDC$(C)$
      \EndFor
    \Procedure{ADDC}{}
      \State \textbf{Input:} $C$
      \State \textbf{Output:} update $(\mc M_T, \mc E_{T,Q})$
      \If{$C[T,Q,0]\in \mc M_T$, $C[T,Q,1]\in \mc M_T$ and $(C[T,Q,0],C[T,Q,1])\in \mc E_{T,Q}$} \label{alg:sg:10}
        \State return
      \EndIf
      \State $\mc M_T \leftarrow \mc M_T \cup \{C[T,Q,0], C[T,Q,1]\}$
      \State $\mc E_T \leftarrow \mc E_T \cup \{(C[T,Q,0],C[T,Q,1])\}$
      \If{$C$ is the all-zero matrix}
        \State return
      \EndIf
      \For{each $C'\precneqq C$}
        \State ADDC$(C')$
      \EndFor
    \EndProcedure      
    \EndFunction
  \end{algorithmic}
\end{algorithm}

For the straightforward approach to calculating the scheduling rate region, we require $\mc M_{T+Q}$ rather than just $\mc M_{T+Q}^*$. Given $\mc M_{T+Q}^*$, we can generate $(\mc M_{T},\mc E_{T,Q})$ as follows:
For each $C \in \mc M_{T+Q}^*$, and recursively for each $C' \pleq C$, we add $C'[T,Q,0]$ and $C'[T,Q,1]$ to $\mc M_T$, and we add $(C'[T,Q,0],C'[T,Q,1])$ to $\mc E_{T,Q}$.
In Algorithm~\ref{alg:sg}, we provide pseudocode for calculating $(\mc M_{T},\mc E_{T,Q})$ from $\mc M_{T+Q}^*$. Line~\ref{alg:sg:10} checks whether $C$ has already been added to $(\mc M_{T},\mc E_{T,Q})$ before.

In our approach to scheduling the rate region (to be elaborated in this section), we will use two subgraphs of the step-$T$ scheduling graph $(\mc M_{T},\mc E_{T})$. Therefore, we do not require the computation of the entire scheduling graph based on $\mc M_{2T}^*$.
The first subgraph of interest corresponds to $\mc M_{2T}^*$. If we consider $(A,B) \in \mc E_{T}$ as an $|\mc L|\times 2T$ binary matrix, then $(A,B) \in \mc M_{2T}$. Thus, we can express $\mc E_{T} = \mc M_{2T}$.
Let 
$\mc E^* = \max_{\pgeq}\mc E_{T} = \mc M_{2T}^*$, and let
\begin{IEEEeqnarray*}{rCl}
  \mc M^*_L & = & \{B: (B,B')\in \mc E^* \text{ for certain } B'\}, \\
  \mc M^*_R & = & \{B': (B,B') \in \mc E^* \text{ for certain } B\}. 
\end{IEEEeqnarray*}
As $\mc E^* \subset \mc M^*_L\times \mc M^*_R$, $\mc E^*$ can be represented using an adjacency matrix with rows and columns indexed by elements in $\mc M^*_L$ and $\mc M^*_R$, respectively. We see that $(\mc M^*_L,\mc M^*_R,\mc E^*)$ is just another representation of  $\mc M_{2T}^*$ and serves as a subgraph of $(\mc M_{T},\mc E_{T})$. The second subgraph of interest can be induced by $(\mc M^*_L,\mc M^*_R,\mc E^*)$, which will be discussed later in this section. 

\subsection{Dominance Property}
\label{sec:dom}

According to the definition of collision, if a schedule is collision-free, then the schedule obtained by inactivating some entries is also collision-free. In other words, if $A \in \mc M_T$, then any $A'\pleq A$ is also in $\mc M_T$. 
The similar property applies to edges and paths in $(\mc M_T,\mc E_T)$. 
For two sequences $A = (A_0,A_1,\ldots)$ and $B = (B_0,B_1,\ldots)$ of
the same length (which can be unbounded) with
$A_i,B_i\in \{0,1\}^{|\mc L|\times T}$, we say $A$ \emph{dominates}
$B$ if $A \pgeq B$. 

\begin{lemma}[Basic dominance property for $(\mc M_T,\mc E_T)$]\label{lemma:dom}
  For any $k\geq 0$, if $A =(A_0,A_1,\ldots,A_k)$ is a path in $(\mc M_T,\mc E_T)$, then  any $B =(B_0,B_1,\ldots,B_k)$ with $B\pleq A$ is a path in $(\mc M_T,\mc E_T)$.
\end{lemma}
\begin{IEEEproof}
  For any edge $(A',A'') \in \mc E_T$ and any $(B',B'')$, if $(B',B'') \pleq (A',A'')$, then $(B',B'')$ is also an edge of $(\mc M_T,\mc E_T)$. The lemma can then be proved by checking $(B_i,B_{i+1})\pleq (A_i,A_{i+1})$ for $i=0,1,\ldots,k-1$. 
\end{IEEEproof}

We now define some notations for presenting a main dominance property. Denote $\mc P_k$ as the set of length-$k$ paths and $\mc C_k$ as the set of length-$k$ cycles in $(\mc M_{T}, \mc E_{T})$.
Let
\begin{equation*}
  \mc P_k^* = \max_{\pgeq} \mc P_k \quad \text{and} \quad \mc C_k^* = \max_{\pgeq} \mc C_k.
\end{equation*}
In other words, the elements in $\mc P_k^*$ and $\mc C_k^*$ are  the {maximal} paths and cycles, respectively, with respect to the partial order $\pgeq$.
Note that $\mc C_k\subset \mc P_k$, but $\mc C_k^*$ is not necessarily a subset of $\mc P_k^*$.
We are going to show that maximal paths and cycles are sufficient for characterizing the scheduling rate region.

Let
\begin{equation} \label{eq:rk}
  \mc R_k=\conv\{R_C: C\in \cup_{i=1}^k \mc C_i\},
\end{equation}
where $R_C$ is defined in Definition~\ref{def:rcp} with $Q=T$. It is worth noting that as  $k$ becomes sufficiently large,
 $\mc R_k$ becomes $\mc R^{(\mc M_T,\mc E_{T})}$, which is equal to $\mc R^{\mc N}$ when $T\geq 2D^*$ (or $D^*$ for binary collision).
Therefore, when $T\geq 2D^*$ (or $D^*$ for binary collision) and $k$ is sufficiently large, $\mc R_k$ is the scheduling rate region.
The \emph{lower shadow} of a set $\mc A \subset (\mathbb R^+)^{m\times n}$, denoted as $\dom \mc A$, refers to the collection of all $B\in (\mathbb R^+)^{m\times n}$ that are dominated by some elements in $\mc A$ \cite{bollobas1998modern, schrijver2003combinatorial}. The following lemma states that the lower shadow of $\mc R_k$ is equal to the set $\mc R_k$ itself.

\begin{lemma}\label{lemma:dom}
  For each $k\geq 1$,
  $\dom\mc R_k = \mc R_k$.
\end{lemma}
\begin{IEEEproof}
  As $\mc R_k \subset \dom\mc R_k$, we show $\dom\mc R_k\subset \mc R_k$. For $R \in \dom\mc R_k$, there exists $R' \in \mc R_k$ such that $R'\pgeq R$.
Let $\mc C^* = \cup_{i=1}^k \mc C_i$.
We can write $R' = \sum_{C'\in \mc C^*} \alpha_{C'} R_{C'}$, where $\alpha_{C'}\geq 0$ and $\sum_{C'\in \mc C^*} \alpha_{C'} = 1$. Assume $R(l) < R'(l)$ for a certain $l\in \mc L$. For each $C'\in \mc C^*$, construct a cycle $C$ by setting the entries of the matrices in $C'$ indexed by $l$ to zero.
Let $\alpha_l = \frac{R(l)}{R'(l)}$ and
  \begin{equation*}
    R'' = \alpha_l \sum_{C'\in \mc C^*} \alpha_{C'} R_{C'} + (1-\alpha_l) \sum_{C'\in \mc C^*} \alpha_{C'} R_{C}.
  \end{equation*}
  We have $R'' \in \mc R_k$ as $C'\in \mc C^*$, $R''(l) = R(l)$ and $R''(l') = R'(l')$ for $l'\neq l$.  By repeating the similar procedure for all the other links $l'$ with $R'(l') > R(l')$, we can convert $R''$ to $R$ and hence prove $R \in \mc R_k$.
\end{IEEEproof}

For a path $P=(A_0,A_1,\ldots,A_k)$ of length $k$, we define $\cl(P)$ as the closed path generated by $$(A_0\land A_k, A_1,\ldots,A_{k-1},A_0\land A_k).$$
Hence, the rate vector $R_{\cl(P)}$ is well-defined, following Definition~\ref{def:rcp} with $Q=T$.
The following theorem shows that $\mc R_k$ can be determined by $\mc P_i^*$ and $\mc C_i^*$, $i=1,\ldots,k$.

\begin{theorem}\label{thm:maxcyc}
  For a scheduling graph $(\mc M_{T}, \mc E_{T})$ and any integer $k\geq 1$,
  \begin{IEEEeqnarray*}{rCl}
    \mc R_k & = & \dom\ \conv\{R_{C}:C\in \mc \cup_{i=1}^k\mc C^*_i\}  \\
     & = & \dom\ \conv\{R_{\mathrm{cl}(P)}:P\in \mc \cup_{i=1}^k\mc P^*_i\}.
  \end{IEEEeqnarray*}
\end{theorem}
\begin{IEEEproof}
  To simplify the notation, we use in the proof 
  \begin{IEEEeqnarray*}{rCl}
    \mc A^* & = & \dom\ \conv\{R_{C}:C\in \mc \cup_{i=1}^k\mc C^*_i\}, \\
    \mc B^* & = & \dom\ \conv\{R_{\mathrm{cl}(P)}:P\in \mc \cup_{i=1}^k\mc P^*_i\}.
  \end{IEEEeqnarray*}
  By definition, $\mc R_k \subset \mc A^*$.
  
  For $A=(A_0,\ldots,A_{k-1},A_0)\in \mc C_k^*$, there exists $P=(B_0,\ldots,B_k)\in \mc P_k^*$ such that $B\pgeq A$. As $B_0\pgeq A_0$ and $B_k\pgeq A_0$, $B_0\land B_k\pgeq A_0$. So, $\cl(P) \pgeq A$, and hence $R_A \pleq \conv\{R_{\mathrm{cl}(P)}:P\in \mc \mc P^*_k\}$. Therefore, $\mc A^*\subset \mc B^*$. 

  Last, for any $P=(B_0,\ldots,B_k)\in \mc P_k^*$, if $\cl(P)$ is a cycle, then $R_{\cl(P)} \in \mc R_k$. When $k=1$, $\cl(P)$ must be a cycle. When $k> 1$, 
if $\cl(P)$ is not a cycle, it can be decomposed into multiple cycles, each of which is of length strictly less than $k$. Hence, $R_{\cl(P)} \in \mc R_{k-1} \subset \mc R_k$. The proof is completed by $\mc B^* \subset \dom(\mc R_k) = \mc R_k$ where the equality follows from Lemma~\ref{lemma:dom}. 
\end{IEEEproof}

\subsection{An Incremental Approach for Rate Region}
\label{sec:amp}

We are motivated to study the calculation of $\mc R_k$ due to  it's relation to the rate region. 
Based on Theorem~\ref{thm:maxcyc}, we will derive an approach to calculate $\mc R_k$ using $(\mc M^*_L,\mc M^*_R,\mc E^*)$.

As $\mc P^*_1 = \mc E^*$, according to Theorem~\ref{thm:maxcyc}, we have
\begin{equation}\label{eq:r1}
  \mc R_1 =\dom\ \conv \{R_{\cl(P)}: P\in \mc E^*\}.
\end{equation}
We can use an incremental approach to calculate $\mc R_k$ for $k\geq 2$. 
Denote $\cv 1$ as a column vector with all entries equal to $1$, where the length is known from the context. 
For $A\in \mc M_L^*$ and $B\in \mc M_R^*$, let
\begin{equation}\label{eq:r2ab}
  \mc W_2(A,B) = \max_{\pgeq} \{ (B_1\land A_2)\cv 1: (A,B_1), (A_2,B)\in \mc E^*\}. 
\end{equation}
For $k\geq 3$, $A\in \mc M_L^*$ and $B\in \mc M_R^*$, let
\begin{equation}\label{eq:rkab}
  \mc W_k(A,B) = \max_{\pgeq} \bigcup_{B'\in \mc M_R^*} \left(\mc W_{k-1}(A,B') + 
    \{(B'\land A')\cv 1: (A',B)\in \mc E^* \} \right),
\end{equation}
where the addition of two sets $\mc A + \mc B$ is defined as $\{\cv a + \cv b: \cv a \in \mc A, \cv b\in \mc B\}$. 
The next theorem justifies the use of $\mc W_k(A,B)$ to characterize the rate region $\mc R_k$. The proof of this theorem will be provided at the end of this subsection.

\begin{theorem}\label{thm:uu}
  Consider the step-$T$ scheduling graph $(\mc M_T, \mc E_T)$.   For $k\geq 1$, we have 
  ${\mc R}_k = \dom\ \conv\ \mc R_1^* \cup \mc R_2^* \cup \cdots \cup \mc R_k^*$, where
  \begin{equation*}
    \mc R_1^* = \frac{1}{T}\max_{\pgeq} \{(A\land B)\cv 1: (A,B)\in \mc E^*\},
  \end{equation*}
  and for $i\geq 2$
  \begin{equation}
    \label{eq:rkstar}
    \mc R_i^* = \frac{1}{iT} \max_{\pgeq} \bigcup_{A\in \mc M_L^*,B\in \mc M_R^*} \left({\mc W}_i(A,B)+\{(A\land B)\cv 1\}\right).
  \end{equation}
\end{theorem}

Before presenting the proof of Theorem~\ref{thm:uu}, we discuss the algorithms for calculating $\mc W_k(A,B)$ and $\mc R_k^*$.
According to Theorem~\ref{thm:uu}, using $\mc R_1^*, \ldots, \mc R_k^*$, the vertex representation of the
convex polytope $\mc R_k$ can be derived, which can then be converted to the half-space representation~\cite{grunbaum2003convex}.

\subsubsection{Algorithm for Rate Region Calculation}

Algorithm~\ref{alg:rab} provides the pseudocode for calculating
$\mc W_k(A,B)$ incrementally using \eqref{eq:rkab}, and Algorithm~\ref{alg:region} provides the pseudocode for calculating $\mc R_k^*$ using the formula in Theorem~\ref{thm:uu}. These algorithms assume that $(\mc M^*_L,\mc M^*_R,\mc E^*)$ has already been calculated (as mentioned in Sec.\ref{sec:sgcal}). The explanations for these two algorithms are as follows. To simplify the notation, we write
\begin{equation*}
  \mc W_{k} \triangleq (\mc W_{k}(A,B),A\in \mc M_L^*, B\in \mc M_R^*).
\end{equation*}

In Algorithm~\ref{alg:rab}, three functions are provided: W2AB, WAB and MAXADD. 
The W2AB function calculates $\mc W_2$ using \eqref{eq:r2ab}, and 
the WAB function calculates $\mc W_{k}$ from $\mc W_{k-1}$ using \eqref{eq:rkab}. Line~\ref{alg:1:7} and Line~\ref{alg:1:11} call the
function MAXADD to add a vector $\cv u$ to a set $\mc S$ that
contains all the existing maximal vectors.  If $\cv u$ is
dominated by some vector in $\mc S$, $\mc S$ remains unchanged. If some
vectors in $\mc S$ are dominated by $\cv u$, they are deleted from
$\mc S$, followed by adding $\cv u$ to $\mc S$.

The computation cost of WAB depends on the size of $\mc W_k(A,B)$. Let
\begin{equation*}
  W_k = \max_{A\in \mc M_L^*, B\in \mc M_R^*}|\mc W_k(A,B)|.
\end{equation*}
According to the definitions in \eqref{eq:r2ab} and \eqref{eq:rkab}, we have $W_2 \leq |\mc M_L^*||\mc M_R^*|$, and for $k>2$, $W_k\leq |\mc M_L^*||\mc M_R^*| W_{k-1}$. 
The computation cost from $W_{k-1}$ to $W_{k}$ using WAB can be estimated as
\begin{equation*}
  O(|\mc M_L^*|^2|\mc M_R^*|^2 W_{k-1}(W_k+|\mc L|T))
\end{equation*}
integer and logical operations.

\begin{algorithm}[tb]
  \caption{The pseudocode for calculating $\mc W_{k}$ includes two functions: W2AB, which calculates $\mc W_{2}$, and WAB, which calculates $\mc W_{k}$ from $\mc W_{k-1}$ for any $k>2$. Additionally, there is a helper function called MAXADD, which is called by both W2AB and WAB to add an element to a set and output the maximal subset.} \label{alg:rab}
  \begin{algorithmic}[1]
    \Function{W2AB}{}
      \State \textbf{Input:} $\mc M_L^*,\mc M_R^*,\mc E^*$
      \State \textbf{Output:} $\mc W_{2}$
      \For{each $A\in \mc M_L^*$ and $B\in \mc M_R^*$}
        \State $\mc W_2(A, B) \leftarrow \emptyset$
        \For{each $B_1$ s.t. $(A,B_1)\in \mc E^*$ and $A_2$ s.t. $(A_2,B)\in \mc E^*$ }
          \State $\mc W_2(A,B) \leftarrow$ MAXADD($\mc W_2(A,B)$, $(B_1\land A_2)\cv 1$) \label{alg:1:7}
        \EndFor
      \EndFor
      \State return $\mc W_{2}$
    \EndFunction
    \Function{WAB}{}
      \State \textbf{Input:} $\mc W_{k-1}, \mc M_L^*,\mc M_R^*,\mc E^*$
      \State \textbf{Output:} $\mc W_{k}$
      \For{each $A\in \mc M_L^*$ and $B\in \mc M_R^*$}
        \State $\mc W_k(A, B) \leftarrow \emptyset$
        \For{each $B'\in \mc M_R^*$ and $A'\in \mc M_L^*$ s.t. $(A',B)\in \mc E^*$}
            \For{each $\cv r\in \mc W_{k-1}(A,B')$}
              \State $\mc W_k(A,B) \leftarrow$ MAXADD($\mc W_k(A,B)$, $\cv r +(B'\land A')\cv 1$) \label{alg:1:11}
            \EndFor
        \EndFor
      \EndFor
      \State return $\mc W_{k}$
    \EndFunction
    \Function{MAXADD}{}
      \State \textbf{Input:} a set $\mc S$ of maximal vectors, a vector $\cv u$
      \State \textbf{Output:} $\max_{\pgeq} \mc S \cup \{\cv u\}$
      \For{each $\cv r \in \mc S$}
        \If{$\cv u\pleq \cv r$}
          \State return
        \EndIf
        \If{$\cv r\pleq \cv u$}
          \State $\mc S \leftarrow \mc S\setminus \{\cv r\}$
        \EndIf
      \EndFor
      \State $\mc S \leftarrow \mc S \cup \{\cv u\}$
      \State return $\mc S$
    \EndFunction
  \end{algorithmic}
 \end{algorithm}

In Algorithm~\ref{alg:region}, two functions are provided: RateRegion and CALRR.
The RateRegion function takes an integer $k_{\max}$ as input and calculates $\mc R_k^*$ for $k=1,2,\ldots,k_{\max}$ as output. The function calls W2AB and WAB to obtain $\mc W_k$ for $k=2,\dots,k_{\max}$ and then calculates $\mc R_k^*$ by calling CALRR on $\mc W_k$, using the formula provided in Theorem~\ref{thm:uu}.

The value of $k_{\max}$ has an impact on both the computation cost and the subset of the rate region obtained. 
The computation cost of CALRR for $\mc W_k$ is $O(|\mc M_L^*|^2|\mc M_R^*|^2W_k^2)$ integer and logic operations. Assuming $W_k\geq |\mc L|T$, the overall computation cost of RateRegion is
$O(k_{\max}|\mc M_L^*|^2|\mc M_R^*|^2W_{k_{\max}}^2)$. In the worst case, $W_k$ grows exponentially with $k$, and hence using a larger value of $k_{\max}$ may significantly increase the computation cost. However, using a larger value of $k_{\max}$ allows for obtaining a larger subset of the rate region. %

Furthermore, it is worth noting that as $k$ increases, $\mc R_k$ converges to $\mc R^{(\mc M_T,\mc E_{T})}$. In other words, for sufficiently large values of $k$, evaluating $\mc R_k^*$ does not result in an increase in the
polytope $\mc R_k$. Let $k^*$ denote the smallest
value of $k$ such that $\mc R_k=\mc R^{(\mc M_T,\mc E_{T})}$.  An
upper bound on $k^*$ is the largest cycle length in
$(\mc M_T,\mc E_{T})$, which is not greater than $|\mc M_T|$.
However, it is important to highlight that $k^*$ can be much smaller than the largest cycle length in $(\mc M_T,\mc E_{T})$. This will be demonstrated later, showing that the convergence can occur much earlier than expected based on the largest cycle length.

\begin{algorithm}[tb]
  \caption{The pseudocode for calculating $\mc R_k^*$ for $k=1, 2,\ldots,k_{\max}$ includes two functions: RateRegion and CALRR. The RateRegion function calculates $\mc R_k^*$ for $k=1, 2,\ldots,k_{\max}$, while the CALRR function is called by RateRegion to calculate $\mc R_i^*$ using $\mc W_i$.}\label{alg:region}
  \begin{algorithmic}[1]
    \Function{RateRegion}{}
      \State \textbf{Input:} $(\mc M^*_L, \mc M^*_R,\mc E^*)$, integer $k_{\max}$
      \State \textbf{Output:} $(\mc R_k^*,k=2,\ldots,k_{\max})$
      \State $\mc R_1^* \leftarrow \emptyset$
      \For{each $A\in \mc M_L^*$ and $B\in \mc M_R^*$ s.t. $(A,B)\in \mc E^*$}
        \State ${\mc R}_1^* \leftarrow$ MAXADD(${\mc R}_1^*$, $\frac{1}{T}(B\land A)\cv 1$)
      \EndFor
      \State $\mc W_2 \leftarrow $ W2AB$(\mc M^*_L, \mc M^*_R,\mc E^*)$
      \State $\mc R_2^* \leftarrow \frac{1}{2T}$ CALRR$(\mc W_2,\mc M_L^*,\mc M_R^*)$
      \For{$k$ from $3$ to $k_{\max}$}
        \State $\mc W_k \leftarrow $ WAB$(\mc W_{k-1},\mc M^*_L, \mc M^*_R, \mc E^*)$
        \State $\mc R_k^* \leftarrow \frac{1}{Tk}$ CALRR$(\mc W_k,\mc M_L^*,\mc M_R^*)$
      \EndFor
      \State return $(\mc R_k^*,k=1,2,\ldots,k_{\max})$
    \EndFunction
    \Function{CALRR}{}
      \State \textbf{Input:} $\mc W_k$, $\mc M_L^*$, $\mc M_R^*$
      \State \textbf{Output:} $\tilde{\mc R}_k$
      \State $\tilde{\mc R}_k \leftarrow \emptyset$      
      \For{each $A\in \mc M_L^*$ and $B\in \mc M_R^*$}
        \For{each $\cv r \in \mc W_k(A,B)$ }
          \State $\tilde{\mc R}_k \leftarrow$ MAXADD($\tilde{\mc R}_k$, $\cv r +(B\land A)\cv 1$)
        \EndFor
      \EndFor
      \State return $\tilde{\mc R}_k$
    \EndFunction
  \end{algorithmic}
\end{algorithm}

\subsubsection{Proof of Theorem~\ref{thm:uu}}

The following lemma gives an incremental approach to enumerate a \emph{superset} of $\mc P_k^*$ incrementally.

\begin{lemma}\label{le:pp}
  For any $k\geq 2$, $\mc P_k^*$ is a subset of
\begin{IEEEeqnarray*}{rCl}
  \{ (B_0,\ldots,B_{k-2}, B_{k-1}\land B_{k-1}' ,B_k): (B_{k-1}',B_{k}) & \in & \mc E^*, \\
  (B_0,\ldots,B_{k-1}) & \in & \mc P_{k-1}^*\}.
\end{IEEEeqnarray*}
\end{lemma}
\begin{IEEEproof}
For any $(A_0,A_1,\ldots,A_k) \in \mc P_k^*$, there exist $(B_0,B_1,\ldots,B_{k-1})\in \mc P_{k-1}^*$ and $(B_{k-1}',B_{k})\in \mc E^*$ such that
\begin{IEEEeqnarray*}{rCl}
  (B_0,B_1,\ldots,B_{k-1}) & \pgeq & (A_0,A_1,\ldots,A_{k-1}), \\
  (B_{k-1}',B_{k}) & \pgeq & (A_{k-1},A_k).
\end{IEEEeqnarray*}
We see $(B_0,\ldots,B_{k-2},B_{k-1}\land B_{k-1}'
,B_k)$ is a path of length $k$ and dominates $(A_0,A_1,\ldots,A_k)$. As the latter is maximal, we further have $B_i=A_i$ for $i=0,1,\ldots,k-2,k$ and $B_{k-1}\land B_{k-1}' = A_{k-1}$.  
\end{IEEEproof}

Let $\mc H_1 = \mc E^*$. 
For $k\geq 2$, let
\begin{equation*}
  \mc H_k  =  \{(A_1,B_1\land A_2,\ldots,B_{k-1}\land A_k,B_k): (A_i,B_i)\in \mc E^*, i=1,\ldots,k\}.
\end{equation*}

\begin{lemma}\label{lemma:css}
  For $k\geq 1$, each element of $\mc H_k$ is a path in $(\mc M_T, \mc E_T)$, and $\mc P_k^* \subset \mc H_k$. Moreover, for $k\geq 1$,
  \begin{equation*}
    \mc R_k = \dom \ \conv \{R_{\mathrm{cl}(P)}:P\in \mc \cup_{i=1}^k\mc H_i\}.
  \end{equation*}
\end{lemma}
\begin{IEEEproof}
  If $(A_0,A_1,\ldots,A_k)\in \mc H_k$, then $(A_{i-1},A_i)\in \mc E_T$, $i=1,\ldots,k$. Hence,
  each element of $\mc H_k$ is a path in $(\mc M_T, \mc E_T)$. The lemma holds directly when $k=1$.
  Now we consider $k\geq 2$. 
By Lemma~\ref{le:pp}, $\mc P_2^* \subset \mc H_2$. As 
\begin{IEEEeqnarray*}{rCl}
  \mc H_k & = & \{ (B_0,\ldots,B_{k-2}, B_{k-1}\land B_{k-1}' ,B_k): \\
  & & (B_{k-1}',B_{k}) \in \mc E^*,  (B_0,\ldots,B_{k-1}) \in \mc H_{k-1}\},
\end{IEEEeqnarray*}
by induction, $\mc P_k^* \subset \mc H_k$.
By Theorem~\ref{thm:maxcyc},
\begin{IEEEeqnarray*}{rCl}
  {\mc R}_k & = & \dom\ \conv\{R_{\mathrm{cl}(P)}:P\in \mc \cup_{i=1}^k\mc P^*_i\} \\
  & \subset & \dom \ \conv \{R_{\mathrm{cl}(P)}:P\in \mc \cup_{i=1}^k\mc H_i\}.
\end{IEEEeqnarray*}
For any $P\in \mc \cup_{i=1}^k\mc H_i$, $\mathrm{cl}(P)$ is a cycle of length at most $k$, and hence $R_{\mathrm{cl}(P)} \in \dom\ \conv\{R_{C}:C\in \mc \cup_{i=1}^k\mc C^*_i\} = \mc R_k$. The proof is complete.
\end{IEEEproof}

Now we are ready to prove Theorem~\ref{thm:uu}.

\begin{IEEEproof}[Proof Theorem~\ref{thm:uu}]
  The case for $k=1$ is proved by \eqref{eq:r1}.
  We first prove by induction that for $k\geq 2$, for any $(A_1,B_1), \ldots, (A_k,B_k) \in \mc E^*$, 
  \begin{equation}\label{eq:cis}
    \sum_{i=1}^{k-1} (B_i\land A_{i+1})\cv 1 \in \dom\ \mc W_k(A_1,B_{k}).
  \end{equation}
  First, \eqref{eq:cis} holds when $k=2$ by the definition of $\mc W_2(A,B)$ in \eqref{eq:r2ab}.  For $k\geq 3$, suppose \eqref{eq:cis} holds for $k-1$. Then for any $(A_1,B_1), \ldots, (A_k,B_k) \in \mc E^*$, 
  \begin{IEEEeqnarray*}{rCl}
    \sum_{i=1}^{k-1} (B_i\land A_{i+1})\cv 1
    & = &  \sum_{i=1}^{k-2} (B_i\land A_{i+1})\cv 1 + (B_{k-1}\land A_k)\cv 1 \\
    & \in & \dom\ \mc R_{k-1}(A_1,B_{k-1}) + (B_{k-1}\land A_k)\cv 1 \\
    & \in & \dom\ \mc W_{k}(A_1,B_k).
  \end{IEEEeqnarray*}

Let
\begin{equation*}
  \tilde{\mc R}_k =  \{R_{\cl(P)}: P \in \mc H_k\}.
\end{equation*}
By Lemma~\ref{lemma:css}, ${\mc R}_k = \dom\ \conv (\cup_{i=1}^k \tilde{\mc R}_i)$ for $k\geq 1$. As $\mc R_1 = \tilde{\mc R}_1$, the theorem is proved if we can show that for $k\geq 2$,
  \begin{equation*}
    \dom\ \tilde{\mc R}_k  =\dom\ \mc R_k^* = 
    \frac{1}{Tk} \dom \bigcup_{A\in \mc M_L^*,B\in \mc M_R^*} \left({\mc R}_k(A,B)+(A\land B)\cv 1\right).
  \end{equation*}

For each $R \in \tilde{\mc R}_k$, there exists $(A_1,B_1), \ldots, (A_k,B_k)\in \mc E^*$ such that 
  \begin{equation*}
    Tk R  = (A_1\land B_k) \cv 1 + \sum_{i=1}^{k-1} (B_i\land A_{i+1})\cv 1.
  \end{equation*}
  By \eqref{eq:cis}, $\sum_{i=1}^{k-1} (B_i\land A_{i+1})\cv 1 \in \dom \mc W_k(A_1,B_{k})$, and hence $R\in \frac{1}{Tk} \dom (\mc W_k(A_1,B_{k}) + (A_1\land B_k) \cv 1) = \dom\ \mc R_k^*$.

  Fix $A\in \mc M_L^*$ and $B\in \mc M_R^*$. For $R\in {\mc R}_k(A,B)$, there exist $A_2\ldots,A_k \in \mc M_L^*$, $B_1,\ldots,B_{k-1} \in \mc M_R^*$ such that $(A,B_1), (A_2,B_2),\ldots, (A_{k-1},B_{k-1}), (A_k,B) \in \mc E^*$ and $R = \sum_{i=1}^{k-1} (B_i\land A_{i+1})\cv 1$. Hence, $\frac{1}{Tk}(R+ (A\land B)\cv 1) = \frac{1}{Tk}\left(\sum_{i=1}^{k-1} (B_i\land A_{i+1})\cv 1 + (A\land B)\cv 1\right) \in \tilde{\mc R}_k$. 
\end{IEEEproof}

\subsection{Reduced Scheduling Graph}
\label{sec:aep}

Now, we provide an alternative representation of the dominance property. By utilizing $(\mc M_L^*,\mc M_R^*, \mc E^*)$, 
we will derive a subgraph of $(\mc M_T, \mc E_T)$ and demonstrate that this subgraph can fulfill the same role as $(\mc M_T, \mc E_T)$ in characterizing the scheduling rate region. 
Let 
\begin{IEEEeqnarray*}{rCl}
  \mc V & = & \{ B\land B': B\in \mc M_R^*, B' \in \mc M_L^*\},\\
  \mc F & = & \{(B_1\land A_2,B_2\land A_3): B_1\in \mc M_R^*, (A_2,B_2)\in \mc E^*, A_3\in \mc M_L^*\}.
\end{IEEEeqnarray*}
The pair $(\mc V, \mc F)$ forms a directed graph that serves as a subgraph of $(\mc M_T, \mc E_T)$.

In Table~\ref{tab:2_hop}, we evaluate the sizes of  $\mc M_1, \mc E_1, \mc V, \mc F$ for the line network $\mc N_{L,2}^{\text{line}}$.
The following example illustrates a network with $|\mc M_T|$ and $|\mc E_T|$ of exponential functions of the number of links, while $\mc V$ and $\mc F$ have a constant size. 

\begin{table}[tb]
  \centering
  \caption{Evaluations of sizes of $(\mc M_1, \mc E_1)$ and $(\mc V, \mc F)$ for the line network with the $2$-hop collision model $\mc N_{L,2}^{\text{line}}$.} \label{tab:2_hop}
	\begin{tabular}{lrrrrrrrrrrrrr}
		\toprule $L$
		& $4$ & $5$& $6$& $7$ &$8$& $9$&$10$&$11$
		\\ \midrule
		$|\mc M_1|$ & $ 9 $ & $15  $ & $ 25  $ & $ 40 $ &  $ 64 $  &  $ 104 $  &  $169  $ & $ 273 $ 
		\\
		$|\mc E_1|$ &  $ 49 $ & $ 121 $ & $ 304 $ &  $ 676 $  & $ 1480 $   & $ 3481 $ & $ 8245 $ & $ 18769 $
		\\
		$|\mc V|$ & $ 9 $ & $ 9 $ & $ 16  $ & $ 30 $ &  $ 49 $  &  $ 72 $  &  $ 100 $ & $ 156 $  \\
          $|\mc F|$& $49  $ & $ 49 $ & $ 120  $ & $324  $ &  $ 800 $  &  $1681  $  &  $ 3074 $ & $ 6241 $  \\
		\bottomrule
	\end{tabular}
\end{table}

\begin{example}[Single collision network]\label{ex:1link}
  Consider a network of $L$ links with the link set $\mc L=\{l_1,\ldots,l_L\}$ and a binary collision model, where $\mc I(l_1) = \{l_2\}$ and $\mc I(l_i) = \emptyset$ for $i>1$.
  The delay matrix $D_{\mc L}$ has $D_{\mc L}(l_1,l_2) = 1$. We denote this network as $\mc N_{L}^{1\text{-c}}$, which has the character $D^* = 1$.
  The scheduling graph $(\mc M_1, \mc E_1)$ of this network has $\mc M_1 = \{0,1\}^{L}$. For $A, B\in \mc M_1$, $(A,B)\in \mc E_1$ if either i) $A(l_1)=0$ or ii) $A(l_1)=1$ and $B(l_2)=0$. Therefore, $|\mc M_1|=2^{L}$ and $|\mc E_1| = 2^{2L-1}+2^{2L-2}$, which increases exponentially with $L$.
  The reduced representation of $(\mc M_1, \mc E_1)$ has
  \begin{equation*}
    \mc E^* = \left\{(\cv 1,\cv v_2), (\cv v_1,\cv 1) \right\},
  \end{equation*}
  where $\cv 1$ is the all-$1$ vector of $L$ entries, $\cv v_1$ and $\cv v_2$ are obtained from $\cv 1$ by setting the first and second entry to $0$, respectively. Further $\mc M_L^* = \{\cv v_1, \cv 1\}$ and $\mc M_R^*= \{\cv v_2,\cv 1\}$. 
  Let's calculate $(\mc V, \mc F)$ for the single collision network $\mc N_{L}^{1\text{-c}}$. First, $\mc V = \{\cv 1, \cv v_1,\cv v_2, \cv v_1\land \cv v_2\}$. Then, $\mc F = \{(\cv v_1,\cv v_1), (\cv v_1,\cv 1), (\cv 1,\cv v_1\land \cv v_2), (\cv 1, \cv v_2), (\cv v_1\land \cv v_2,\cv v_1), (\cv v_1\land \cv v_2,\cv 1), (\cv v_2, \cv v_1\land \cv v_2), (\cv v_2, \cv v_2) \}$. We see that though the size of the scheduling graph $(\mc M_1, \mc E_1)$ of $\mc N_{L}^{1\text{-c}}$ is exponential in $L$, $(\mc V, \mc F)$ has a constant size. 
\end{example}

The graph $(\mc V, \mc F)$, called the \emph{reduced scheduling graph}, captures the essential connections and relationships from $(\mc M_T, \mc E_T)$. It is worth noting that a cycle in $(\mc V, \mc F)$ is also a cycle in $(\mc M_T, \mc E_T)$. However, it is not necessarily true that a cycle in $(\mc M_T, \mc E_T)$ is dominated by a cycle in $(\mc V, \mc F)$.
The following theorem demonstrates the possibility of characterizing the scheduling rate region using cycles in $(\mc V, \mc F)$.

\begin{theorem}\label{thm:k}
For a scheduling graph $(\mc M_T, \mc E_T)$, for $k\geq 1$,
\begin{equation*}
  \mc R_k = \dom\ \conv \{R_C: C \text{ is a length-$i$ cycle in } (\mc V,\mc F), i\leq k\}.
\end{equation*}
Therefore, $\mc R_k = \mc R^{(\mc M_T, \mc E_T)}$ when $k$ is the largest cycle length in $(\mc V, \mc F)$.
\end{theorem}
\begin{IEEEproof}
  To simplify the notation, let
  \begin{equation*}
    \mc A_k = \dom\ \conv \{R_C: C \text{ is a length-$i$ cycle in } (\mc V,\mc F), i\leq k\}.
  \end{equation*}
Consider a length-$k$ cycle $V = (V_0,V_1,\ldots,V_{k-1},V_k=V_0)$ of $(\mc V,\mc F)$.
There exist $(A_i,B_i) \in \mc E^*$ such that $(A_i,B_i)\pgeq (V_{i-1},V_i)$ for $i=1,\ldots,k$. As $P \triangleq (A_1,B_1\land A_2,\ldots,B_{k-1}\land A_k,B_k) \pgeq V$ and $P\in \mc H_k$, we have $R_V \pleq R_{\cl(P)} \in \mc R_k$, where the inclusion follows from Lemma~\ref{lemma:css}. Hence $\mc A_k \subset \mc R_k$. 

We prove by induction that $\mc R_k \subset \mc A_k$. First, $\mc R_1 \subset \mc A_1$ as $\mc E^* \subset \mc F$. For $k\geq 2$, assume that $\mc R_{k-1} \subset \mc A_{k-1}$. 
For each $P\in \mc H_k$, $\cl(P)$ is a closed path in $(\mc V, \mc F)$. If $\cl(P)$ is a cycle, then $R_{\cl(P)}\in \mc A_k$ by the definition of $\mc H_k$ and $\mc F$.
If $\cl(P)$ is not a cycle, then it can be decomposed into multiple cycles of length strictly less than $k$. Then $R_{\cl(P)} \in \mc R_{k-1}\subset \mc A_{k-1} \subset \mc A_k$ by induction.
Hence for both cases, $R_{\cl(P)} \in \mc A_k$. Last, by Lemma~\ref{lemma:css}, $\mc R_k \subset \mc A_k$. 
\end{IEEEproof}

Theorem~\ref{thm:k} shows that the largest cycle length in $(\mc V, \mc F)$ is a sufficient value of $k$ such that $\mc R_k = \mc R^{(\mc M_T, \mc E_T)}$, and this length is shorter than the largest cycle length in $(\mc M_T, \mc E_T)$.
Consequently, $(\mc V, \mc F)$ can be used for calculating the scheduling rate region with reduced computational complexity compared to using $(\mc M_T, \mc E_T)$. The complete rate region can be obtained by employing Johnson's algorithm~\cite{johnson1975finding} to
enumerate cycles in $(\mc V, \mc F)$.
To calculate the subset of the rate region $\mc R_k$, one can enumerate cycles in $(\mc V, \mc F)$ up to length $k$.
In the following subsection, we compare these approaches for calculating $\mc R_k$ with Algorithm~\ref{alg:region}.

\subsection{Numerical Evaluation}

We compare the different approaches for calculating the scheduling
rate region by numerical evaluations on the networks
$\mc N_{L,2}^{\text{line}}$ for $L=4$ to $11$. Since each network has
$D^*=1$, the only scheduling graph is $(\mc M_1, \mc E_1)$.
There are two main scenarios in the numerical evaluation:
\begin{enumerate}
\item Calculating the entire scheduling rate region by enumerating cycles:
  \begin{itemize}
  \item SR-1: Enumerate cycles in in original scheduling graph $(\mc M_1, \mc E_1)$.
  \item SR-2: Enumerate cycles in the reduced scheduling graph $(\mc V, \mc F)$.
  \end{itemize}
\item 
  Calculating a subset of the rate region $\mc R_k$:
  \begin{itemize}
  \item Rk-1: Enumerate cycles up to length $k$ in the original scheduling graph $(\mc M_1, \mc E_1)$.
  \item Rk-2: Enumerate cycles up to length $k$ in the reduced scheduling graph $(\mc V, \mc F)$.
  \item Rk-3: Use Algorithm~\ref{alg:region} to calculate $\mc R_k$.
  \end{itemize}
\end{enumerate}
All the approaches use $\mc N_{L,2}^{\text{line}}$ as the input. The operations of these approaches are summarized as follows: 
\begin{itemize}
\item SR-1: 
  \begin{enumerate}
  \item Evaluate $\mc M_2^*$ using the Bron–Kerbosch algorithm with
    vertex
    pivoting~\cite{bron1973algorithm,tomita2006worst,cazals2008note,eppstein2013listing}.
  \item Generate $(\mc M_1,\mc E_1)$ using Algorithm~\ref{alg:sg}. 
  \item Enumerate cycles of $(\mc M_1, \mc E_1)$ using Johnson's algorithm~\cite{johnson1975finding}.  
  \end{enumerate}
\item SR-2: 
  \begin{enumerate}
  \item Evaluate $\mc M_2^*$ using the Bron–Kerbosch algorithm with
    vertex pivoting.
  \item Generate $(\mc M^*_L,\mc M^*_R,\mc E^*)$ from $\mc M_2^*$. 
  \item Generate $(\mc V, \mc F)$ from $(\mc M^*_L,\mc M^*_R,\mc E^*)$.
  \item Enumerate cycles of $(\mc V, \mc F)$ using Johnson's algorithm.
  \end{enumerate}
\item Rk-1: %
  \begin{enumerate}
  \item Evaluate $\mc M_2^*$ using the Bron–Kerbosch algorithm with
    vertex
    pivoting.
  \item Generate $(\mc M_1,\mc E_1)$ using Algorithm~\ref{alg:sg}. 
  \item Enumerate cycles of $(\mc M_1, \mc E_1)$ up to length $4$.
  \end{enumerate}
\item Rk-2: %
  \begin{enumerate}
  \item Evaluate $\mc M_2^*$ using the Bron–Kerbosch algorithm with
    vertex pivoting.
  \item Generate $(\mc M^*_L,\mc M^*_R,\mc E^*)$ from $\mc M_2^*$. 
  \item Generate $(\mc V, \mc F)$ from $(\mc M^*_L,\mc M^*_R,\mc E^*)$.
  \item Enumerate cycles of $(\mc V, \mc F)$ up to length $4$.
  \end{enumerate}
\item Rk-3: %
  \begin{enumerate}
  \item Evaluate $\mc M_2^*$ using the Bron–Kerbosch algorithm with
    vertex pivoting.
  \item Generate $(\mc M^*_L,\mc M^*_R,\mc E^*)$ from $\mc M_2^*$. 
  \item Execute RateRegion in Algorithm~\ref{alg:region} with $k_{\max} = 4$.
  \end{enumerate}
\end{itemize}

In our evaluation, we implemented all these approaches using the Julia programming language. Johnson's algorithm and the algorithm for enumerating cycles up to a certain length were obtained from the Julia Graphs package~\cite{juliagraphs2021}. To measure the execution time accurately, we utilized the Julia BenchmarkTools package~\cite{juliabt2023}, which runs a function multiple times to obtain a more stable estimate of the running time.
Table~\ref{tab:com} presents a comparison of the execution times for these approaches. Based on our evaluation, we have made the following observations:

When calculating the entire rate region by enumerating all the cycles, SR-2 proves to be more efficient than SR-1. However,  the computational costs of both approaches increase rapidly as the network length $L$ increases. As a result, when $L$ reaches 6 for SR-1 and 7 for SR-2, the memory requirements become substantial, causing the program to crash on our computer. As a result, we were unable to obtain the running time for larger networks using these two approaches. This limitation highlights the challenge of enumerating all the cycles in larger networks, as the computational and memory requirements become increasingly demanding.

All three approaches, Rk-1, Rk-2, and Rk-3, are capable of calculating $\mc R_4$ for networks with $L=11$.
Among these approaches, Rk-2 is more efficient than Rk-1 for all the considered networks.
While Rk-2 outperforms Rk-3 for smaller networks, the advantage of Rk-3 becomes evident as the network size increases.
It is worth noting that when compared to Johnson's algorithm for enumerating all cycles, enumerating cycles up to a certain length tends to be slower when the length is relatively large~\cite{liu2006new}. This is the case for $\mc N_{4,2}^{\text{line}}$, where SR-1 is faster than Rk-1.

In conclusion, we would like to remark that $\mc R_4$ serves as the rate region, although we did not provide a formal proof in this paper. If evaluated, it would become apparent that $\mc R_4=\mc R_5=\mc R_6=\cdots$. However, relying solely on evaluating up to $\mc R_{|\mc V|}$ is not a feasible method to prove the rate region, as it would be impractical for larger networks.
In another work, we have demonstrated that $\mc R_4$ aligns with an upper bound on the rate region of $\mc N_{L,2}^{\text{line}}$, obtained by leveraging a graphical property of the periodic graph associated with the network.

\begin{table*}[tb]
  \centering
  \caption{Comparison of three methods for calculating the rate region of $\mc N_{L,2}^{\text{line}}$. All the methods are implemented in Julia, and executed on a computer with a 2.3 GHz Quad-Core CPU, 8 GB memory and Julia 1.9.2.} \label{tab:com}
  \begin{tabular}{crrrrrrrrrrrrrrrr}
    \toprule $ $
   Approach & $\mc N_{4,2}^{\text{line}}$ & $\mc N_{5,2}^{\text{line}}$& $\mc N_{6,2}^{\text{line}}$ & $\mc N_{7,2}^{\text{line}}$ & $\mc N_{8,2}^{\text{line}}$ & $\mc N_{9,2}^{\text{line}}$& $\mc N_{10,2}^{\text{line}}$ & $\mc N_{11,2}^{\text{line}}$\\ \midrule
    SR-1 & 0.928 ms &	5.538 s & - & -& -& -& -& -  \\
    SR-2 & 0.765 ms & 0.765 ms &	15.06 s & -& -& -& -& - \\
    \midrule
    Rk-1 & 0.981 ms &	4.321 ms &	21.38 ms &	131.5 ms &	794.1 ms &	3.518 s	& 13.16 s &	49.952 s \\
    Rk-2 & 0.685 ms &	0.857 ms &	2.850 ms &	15.93 ms &	127.6 ms &	466.7 ms &	2.939 s &  	8.462 s \\
    Rk-3 & 1.242 ms &	1.388 ms &	5.976 ms &	22.79 ms &	101.6 ms &	246.8 ms &	1.002 s &	3.204 s \\
    \bottomrule
  \end{tabular}
\end{table*}

\section{Algorithms for Maximizing a Linear Function of Rate Vectors}
\label{sec:algm}

In this section, our focus is on maximizing a linear function of rate vectors. This problem arises in network utility maximization, where the goal is to find a scheduling rate vector that maximizes a weighted sum (see,
e.g., \cite{jain2003impact,lin2006}). For instance, one common objective is to maximize the sum rate of all the links in the network. While it is technically possible to solve the maximization problem given the rate region, as we discussed earlier, evaluating the rate region can be challenging.
To address this issue, we propose an algorithm that maximizes a linear function without explicitly calculating the rate region. This approach allows us to find an optimal scheduling rate vector without relying on the explicit determination of the rate region.

Previous works have considered this optimization problem for
scheduling with delays~\cite{grokop2011interference,
  chitre2012throughput}. However, these works provide only
 approximate solutions to the optimization problem. 
They treat the step-$1$ scheduling graph
$(\mc M_T, \mc E_{T,1})$ as a state transition graph and employ a
dynamic programming approach similar to the Viterbi algorithm to
optimize the state sequence. Their objective is to find an optimal path (not cycle)
of a given length $k$ in the step-$1$ scheduling graph. In other
words, they find the optimal rate vector in $\dR^{\mc N^{T+k}}$, which is not necessarily equal to $\cR^{\mc N}$ even when $k=|\mc M_T|$. However, as $k$ tends to infinity, $\dR^{\mc N^{T+k}}$ converges to $\cR^{\mc N}$ (as discussed in Sec.~\ref{sec:framed}). 

In this section, we propose an approach to accurately compute the
optimal value of a linear function on the rate vectors using the step-$T$ scheduling graph $(\mc M_T, \mc E_{T})$ and the dominance property in Sec.~\ref{sec:sgcal}.
Our algorithm identifies an optimal cycle of a given length $k$ in $(\mc M_T, \mc E_{T})$ and has a computation cost that is linear in $k$.  

\subsection{Problem Formulation and Simplification}

Consider a network $\mc N= (\mc L, \mc I, D_{\mc L})$, and a linear
function $f:\mathbb{R}^{|\mc L|} \rightarrow \mathbb{R}$.
In \eqref{eq:rk}, we defined $\mc R_k$, which is the subset of the rate region generated by the cycles up to length $k$ in the step-$T$ scheduling graph $(\mc M_T,\mc E_T)$. In this context, we study how to calculate
the optimal value $\max_{R\in \mc R_k} f(R)$ and identify a
periodic schedule that achieves this optimal value.

As $f$ is linear, we can express it as
$f(R) = \sum_{l\in \mc L} c_l R(l)$, where $c_l$ represents fixed
linear combination coefficients. For our analysis, we will
specifically focus on maximizing a linear function $f$ in which all
the coefficients $c_l$ are positive. For linear functions with a
mixture of positive and negative coefficients, the problem can be
transformed into an equivalent problem with only positive
coefficients, as discussed below.

Consider a linear function $f$ with a coefficient $c_{l_0}\leq 0$. If $R^*\in \mc R_k$ maximizes $f(R)$, then the rate vector obtained by setting $R^*(l_0)$ to $0$ (which is also in $\mc R_k$ due to Lemma~\ref{lemma:dom}) is also optimal. This is because inactivating link $l_0$ generates no collision with other links.
Therefore, we can reduce the problem by removing link $l_0$ as follows: First let $(\mc L', \mc I')$ be the directed graph or hypergraph obtained by removing $l_0$ from $(\mc L, \mc I)$. In other words, $\mc L' = \mc L\setminus \{l_0\}$ and for $l\neq l_0$, $\mc I'(l)$ is obtained by excluding all $\theta\in \mc I(l)$ with $l_0\in \theta$. Second, let $D_{\mc L'}$ be the submatrix of $D_{\mc L}$ obtained by removing the row and the column indexed by $l_0$. Last, define $f'(R) = \sum_{l\in \mc L'} c_l R(l)$. The new optimization problem is to maximize $f'$ for the network $\mc N' \triangleq (\mc L', \mc I', D_{\mc L'})$.
By repeating this procedure, we can continue removing links with negative coefficients until $f'$ consists only of positive coefficients.

\subsection{An Incremental Approach for Optimizing a Linear Function}

Theorem~\ref{thm:uu} provides an approach for maximizing a linear function $f:\mathbb{R}^{|\mc L|} \rightarrow \mathbb{R}$ with positive coefficients. Specifically, since $f(R_1)\geq f(R_2)$ for any $R_1\pgeq R_2\in \mathbb{R}^{|\mc L|}$,  we know that $\max_{R\in \mc R_k} f(R) = \max_{i=1,\ldots,k} \max_{R\in \mc R_k^*} f(R)$. However, the complexity of this algorithm, however, can be exponential in $k$ due to the size of $\mc W_k(A,B)$. If we are only interested in the optimal value of $f$, we can simplify the algorithm by replacing the set $\mc W_k(A,B)$ with a real value. 

Define 
\begin{equation*}
   U_1^* = \max_{R\in \mc R_1} f(R)  = \max_{P\in \mc E^*}f(R_{\cl(P)}),
 \end{equation*}
 where the second equality follows from \eqref{eq:r1}.
For $A \in \mc M_L^*$ and $B \in \mc M_R^*$, define 
\begin{equation}\label{eq:u2ab}
    U_2(A,B) = \max \{f((B_1\land A_2)\cv 1): (A,B_1), (A_2,B)\in \mc E^* \},
\end{equation}
and for $k\geq 3$, define
\begin{equation}\label{eq:ukab}
  U_k(A,B) = \max \{U_{k-1}(A,B') + f((B'\land A')\cv 1): (A',B)\in \mc E^*, B'\in \mc M_R^* \}.
\end{equation}
The next theorem gives an incremental algorithm for optimization a linear function over $\mc R_k$. 
  
\begin{theorem}\label{thm:alr}
  Let $f:\mathbb{R}^{|\mc L|} \rightarrow \mathbb{R}$ be a linear function with positive linear combination coefficients.
  For $k\geq 1$, $\max_{R\in \mc R_k} f(R) = \max\{U_1^*, U_2^*, \ldots, U_k^*\}$, where for $i\geq 2$,
  \begin{equation}\label{eq:ustar}
    U_i^* = \frac{1}{iT} \max_{A\in \mc M_L^*,B\in \mc M_R^*} U_i(A,B) + f((A\land B)\cv 1).
  \end{equation}
\end{theorem}
\begin{IEEEproof}
  As the linear combination coefficients in $U$ are non-negative, by Theorem~\ref{thm:uu},
  \begin{equation*}
    \max_{R\in \mc R_k} f(R) = \max\left\{ \max_{R\in \mc R_1^*} f(R) = U_1^*, \max_{R\in \mc R_i^*} f(R), i=2,\ldots,k\right\},
  \end{equation*}
  where for $i\geq 2$,
\begin{IEEEeqnarray*}{rCl}
  \max_{R\in \mc R_i^*} f(R)
  & = & 
  \frac{1}{iT} \max_{R\in \bigcup_{A\in \mc M_L^*,B\in \mc M_R^*} \left({\mc R}_i(A,B)+(A\land B)\cv 1\right)} f(R) \\
  & = & \frac{1}{iT} \max_{A\in \mc M_L^*,B\in \mc M_R^*} \max_{R\in W_i(A,B)}f(R)+ f((A\land B)\cv 1).
\end{IEEEeqnarray*}
We show that $\max_{R\in \mc W_i(A,B)}f(R) = U_i(A,B)$ for $i\geq 2$ by induction.
First, by the definition in \eqref{eq:r2ab} and \eqref{eq:u2ab}, $U_2(A,B) = \max_{R\in \mc W_2(A,B)}f(R)$. 
For $i > 2$, assume that $\max_{R\in \mc R_{i-1}(A,B)}f(R) = U_{i-1}(A,B)$.
By the definition in \eqref{eq:rkab},
\begin{IEEEeqnarray*}{rCl}
  \max_{R\in \mc W_i(A,B)}f(R) &=& \max_{B'\in \mc M_R^*} \max_{A':(A',B)\in \mc E^* } \max_{R\in \mc R_{i-1}(A,B')} f(R) + f((B'\land A')\cv 1) \\
  & = & \max_{B'\in \mc M_R^*} \max_{A':(A',B)\in \mc E^* } U_{k-1}(A,B')+f((B'\land A')\cv 1) \\
  & = & U_k(A,B).
\end{IEEEeqnarray*}
The proof is completed by $\max_{R\in \mc R_i^*} f(R) = U_i^*$.
\end{IEEEproof}

In the subsequent subsections, algorithms are presented for computing the optimal value of the linear function $f(R)$ as well as identifying a cycle that attains this optimal value. 
 
\subsection{Algorithm for Optimal Value}

Algorithm~\ref{alg:uab} provides the pseudocode for calculating $\mc U_k$, and Algorithm~\ref{alg:rate} provides the pseudocode for determining the optimal value of $f(R)$ in $\mc R_k$.
The structure of Algorithm~\ref{alg:uab} and  Algorithm~\ref{alg:rate} remains similar to that of Algorithm~\ref{alg:rab} and Algorithm~\ref{alg:region}, respectively. The main difference lies in the computation of $U_k(A,B)$ instead of $\mc W_k(A,B)$.

Algorithm~\ref{alg:uab} and  Algorithm~\ref{alg:rate} assume that $(\mc M^*_L,\mc M^*_R,\mc E^*)$ has already been calculated (see Sec.\ref{sec:sgcal}). These two algorithms are described below. To simplify the notation, we denote
\begin{equation*}
   U_{k} = ( U_{k}(A,B),A\in \mc M_L^*, B\in \mc M_R^*).
\end{equation*}
In Algorithm~\ref{alg:uab}, two functions are provided: U2AB and UAB.
The U2AB function calculates $U_2$ using \eqref{eq:u2ab}, while the  UAB function calculates $ U_{k}$ from $ U_{k-1}$ using \eqref{eq:ukab}.
The computation cost of both U2AB and UAB is $O(|\mc M_L^*|^2|\mc M_R^*|^2|\mc L|T)$, accounting for the integer and logical operations involved in the calculation.

\begin{algorithm}[tb]
  \caption{The pseudocode for calculating $U_k$ consists of two functions: U2AB and UAB. The function U2AB is responsible for computing $U_2$, while the function UAB is used to calculate $U_{k}$ from $U_{k-1}$, and this process is applicable for any value of $k\geq 2$.} \label{alg:uab}
  \begin{algorithmic}[1]
    \Function{U2AB}{}
      \State \textbf{Input:} $\mc M^*_L, \mc M^*_R,\mc E^*$
      \State \textbf{Output:} $ U_{2}$ %
      \For{each $A\in \mc M_L^*$ and $B\in \mc M_R^*$}
        \State $ U_2(A, B) \leftarrow 0$
        \For{each $B_1$ s.t. $(A,B_1)\in \mc E^*$ and $A_2$ s.t. $(A_2,B)\in \mc E^*$}
          \State $R = f((B_1\land A_2)\cv 1)$
          \If{$R > U_2(A,B)$}
            \State $U_2(A, B) \leftarrow R$
          \EndIf
        \EndFor
      \EndFor
      \State return $U_{2}$ %
    \EndFunction
    \Function{UAB}{}
      \State \textbf{Input:} $ U_{k-1}, \mc M^*_L, \mc M^*_R,\mc E^*$
      \State \textbf{Output:} $ U_{k}$ %
      \For{each $A\in \mc M_L^*$ and $B\in \mc M_R^*$}
        \State $ U_k(A, B) \leftarrow 0$
        \For{each $B'\in \mc M_R^*$ and $A'\in \mc M_L^*$ s.t. $(A',B)\in \mc E^*$}
          \State $R = U_{k-1}(A,B') + f((B'\land A')\cv 1)$
          \If{$R > U_{k}(A,B)$}
            \State $U_k(A, B) \leftarrow R$
          \EndIf
        \EndFor
      \EndFor
      \State return $U_{k}$ %
    \EndFunction
  \end{algorithmic}
\end{algorithm}

In Algorithm~\ref{alg:rate}, two functions are provided: OptimalRate and UStar. The OptimalRate function takes an integer $k_{\max}$ as the input, and calculates $U_{k}^* $ for $k=1,2,\ldots,k_{\max}$ as the output. The OptimalRate function calls U2AB and UAB to obtain $U_{k}$ for $k=2,\dots,k_{\max}$ and then calculates $U_k^*$ by calling UStar on $U_{k}$, applying the formula in Theorem~\ref{thm:alr}.
The computation cost of UStar for $U_k^*$ is $O(|\mc M_L^*||\mc M_R^*||\mc L|T)$ floating point operations.
The overall computation cost of OptimalRate is $O(k_{\max}|\mc M_L^*|^2|\mc M_R^*|^2|\mc L|T)$. To get the optimal value over the entire rate region, it is sufficient to use $k_{\max} = |\mc M_L^*||\mc M_R^*|$, so that the computational complexity is  $O(|\mc M_L^*|^3|\mc M_R^*|^3|\mc L|T)$.

\begin{algorithm}[tb!]
  \caption{The pseudocode for calculating $U_k^*$ for $k=1,2,\ldots,k_{\max}$ consists of two functions: OptimalRate and UStar. The OptimalRate function is responsible for determining the optimal rate $f(R)$, while the UStar function  is called by OptimalRate to calculate $U_i^*$ from $U_i$.}\label{alg:rate}
  \begin{algorithmic}[1]
    \Function{OptimalRate}{}
      \State \textbf{Input:} $f$, $(\mc M^*_L, \mc M^*_R, \mc E^*)$ and integer $k_{\max}$
      \State \textbf{Output:} $\max_{R\in \mc R_k}f(R), k=2,\ldots,k_{\max}$
      \State $U_1^*\leftarrow 0$
      \For{each $A\in \mc M_L^*$ and $B\in \mc M_R^*$ s.t. $(A,B)\in \mc E^*$}
        \State $R \leftarrow \frac{1}{T}(B\land A)\cv 1$
        \If{$R>U_1^*$}
          \State $U_1^*\leftarrow R$
        \EndIf
      \EndFor
      \State $U_2 \leftarrow $ U2AB$(\mc M^*_L, \mc M^*_R,\mc E^*)$
      \State $U_2^* \leftarrow $ UStar$(U_2, \mc M^*_L, \mc M^*_R,2)$
      \For{$k$ from $3$ to $k_{\max}$}
        \State $U_k \leftarrow $ UAB$(U_{k-1},(\mc M^*_L, \mc M^*_R, \mc E^*))$
        \State $U_k^* \leftarrow $ UStar$(U_k, \mc M^*_L, \mc M^*_R,k)$
      \EndFor
      \State return $U_k^*, k=2,\ldots,k_{\max}$%
    \EndFunction
    \Function{UStar}{}
      \State \textbf{Input:} $U_k$, $\mc M^*_L$, $\mc M^*_R$, $k$
      \State \textbf{Output:} $\max_{R\in \mc R_k}f(R)$
      \State $U_k^*\leftarrow 0$
      \For{each $A\in \mc M_L^*$ and $B\in \mc M_R^*$}
        \State $R = \frac{1}{kT}(U_k(A,B)+f((A\land B)\cv 1))$
        \If{$R > U_k^*$}
          \State $U_k^*\leftarrow R$
        \EndIf
      \EndFor
      return $U_k^*$
    \EndFunction
  \end{algorithmic}
\end{algorithm}

\subsection{Algorithm for Optimizer}

In addition to determining the optimal value $U_k^*$, it is also
essential to identify the optimizer, which refers to the cycle that
achieves the optimal value.  Such a cycle can be utilized to construct
an optimal periodic schedule.

Assume that $U_k^*$ and the $U_k$ are already calculated for $k=1,\ldots,k_{\max}$, which can be done by Algorithm~\ref{alg:uab} and  Algorithm~\ref{alg:rate}. 
First, 
the optimizer of $U_1^*$ can be find by enumerating all elements $P$ in $\mc E^*$ until we find $P$ such that $f(R_{\cl(P)}) = U_1^*$.

The case with $k\geq 2$ can be solved using backward searching. Enumerate $A_1\in \mc M_L^*$ and $B_k\in \mc M_R^*$ until we find $A_1$ and $B_k$ such that 
\begin{equation*}
  U_k^* = U_k(A_1,B_k) + f((A_1\land B_k)\cv 1).
\end{equation*}
The existence of such $A_1$ and $B_k$ is guaranteed by the definition of $U_k^*$ (see \eqref{eq:ustar}).
Then for $i=k, k-1,\ldots,3$, enumerate $A_i\in \mc M_L^*$ and $B_{i-1}\in \mc M_R^*$ until we find $A_i$ and $B_{i-1}$ such that $(A_i,B_i)\in \mc E^*$ and
\begin{equation*}
  U_i(A_1,B_i) = U_{i-1}(A_1,B_{i-1}) + f((B_{i-1}\land A_{i})\cv 1).
\end{equation*}
The existence of such $A_i$ and $B_{i-1}$ is guaranteed by the definition of $U_i(A,B)$ (see \eqref{eq:ukab}).
Last, enumerate  $B_1\in \mc M_R^*$ and $A_2\in \mc M_L^*$ such that $(A_1,B_1), (A_2,B_2)\in \mc E^*$ and 
\begin{equation*}
  U_2(A_1,B_2) = f((B_1\land A_2)\cv 1).
\end{equation*}
The existence of such $A_2$ and $B_{1}$ is guaranteed by the definition of $U_2(A,B)$ (see \eqref{eq:u2ab}).

According to the above construction, $(A_1,B_1),\ldots,(A_k,B_k) \in \mc E^*$ and
\begin{equation*}
  U_k^* = f((A_1\land B_k)\cv 1) + \sum_{i=3}^kf((B_{i-1}\land A_{i})\cv 1) + f((B_1\land A_2)\cv 1).
\end{equation*}
Therefore, $(B_1\land A_2,B_{2}\land A_3,\ldots,B_{k-1}\land A_k, B_k\land A_1,B_1\land A_2)$ is a  $k$-cycle in $(\mc M_T, \mc E_T)$ that achieves the optimal value $U_k^*$.

We give the pseudocode of this backward searching algorithm in Algorithm~\ref{alg:bs}, where a function called OptimalCycle is provided.
The whole procedure is as follows:
\begin{enumerate}
\item Calculate  $U_k^*$ and the $U_k$ for $k=1,\ldots,k_{\max}$ by Algorithm~\ref{alg:uab} and  Algorithm~\ref{alg:rate}.
\item Determine $k^*$ such that $U_{k^*}^* = \max_{k=1,\ldots,k_{\max}} U_k^*$.
\item Execute the OptimalCycle function to identify an optimizer for $U_{k^*}^*$. 
\end{enumerate}
The extra computation cost of OptimalCycle is $O(k^*|\mc M_L^*||\mc M_R^*||\mc L|T)$.

\begin{algorithm}[tb!]
  \caption{The pseudocode for identifying a cycle that achieves $U_k^*$ where $k\geq 2$. The values of $U_i$, $i=2,\ldots,k$ are known. The output is a cycle $C$ such that $U_k^* = f(R_C)$.}\label{alg:bs}
  \begin{algorithmic}[1]
    \Function{OptimalCycle}{}
      \State \textbf{Input:} $U_k^*$, $k$
      \State \textbf{Output:} $C = (C_1,\ldots,C_k,C_1)$
      \For{each $A\in \mc M_L^*$ and $B\in \mc M_R^*$}
        \If{$U_k(A,B)+f(A\land B)\cv 1)=U_k^*$}
          \State $A_1 \leftarrow A$, $B_k\leftarrow B$
          \State $C_k \leftarrow B_k\land A_1$
          \State break
        \EndIf
      \EndFor
      \For{$i$ from $k$ down to $3$}
        \For{each $A$ s.t. $(A,B_i)\in \mc E^*$ and $B\in \mc M_R^*$}
          \If{$U_{i-1}(A_1,B)+f(B\land A)\cv 1)=U_i(A_1,B_i)$}
            \State $A_i \leftarrow A$, $B_{i-1}\leftarrow B$
            \State $C_{i-1} \leftarrow B_{i-1}\land A_i$
            \State break
          \EndIf
        \EndFor
      \EndFor
      \For{each $A$ and $B$ s.t. $(A,B_2), (A_1,B)\in \mc E^*$}
        \If{$f(A\land B)\cv 1)=U_2(A_1,B_2)$}
          \State $A_2 \leftarrow A$, $B_1\leftarrow B$
          \State $C_1 \leftarrow B_1\land A_2$
          \State break
        \EndIf
      \EndFor
      \State return $(C_1,\ldots,C_k,C_1)$
    \EndFunction
  \end{algorithmic}
\end{algorithm}

\section{Concluding Remarks}
\label{sec:con}

This work introduces a graphical framework for wireless network
scheduling with discrete signal propagation delays. It extends the
existing independent set-based scheduling framework, commonly used in
traditional scheduling with guard intervals to prevent collisions. To
gain a better understanding of the advantages and the feasibility of
scheduling with delays in the real world, several further research
directions can be explored:
\begin{enumerate}
\item Outer bounds on the scheduling rate region: Due to the high computational cost involved in calculating the complete scheduling rate region, it may only be feasible to compute a subset of it in practical cases. Evaluating the quality of the computed subset can be facilitated by establishing an outer bound on the rate region.
  
\item Practical scheduling approaches: The algorithms proposed in this paper make ideal assumptions, such as assuming synchronization of all network nodes to a common clock and having complete and accurate delay and collision information. Further research is needed to relax these assumptions and develop practical scheduling approaches that can handle real-world network scenarios.
  
\item Network flow control: The presented framework opens up possibilities for systematically studying end-to-end communication flows in wireless networks with delays. Scheduling with delays should be jointly optimized with routing and congestion control mechanisms to ensure efficient and reliable network operation. 

\item Real-world demonstrations: Conducting practical experiments and demonstrations can help assess the performance and practical advantages of such scheduling techniques in real-world wireless network environments. This includes evaluating the impact on throughput, latency, energy efficiency, and overall system performance.
\end{enumerate}

\appendix[Physical Network Model]
\label{sec:phy}

We introduce a physical model of wireless networks in~\cite{tse2005fundamentals}, and discuss how to apply the results for our network model to the physical model.
We denote this physical network model as $\mc N^{\mathrm{phy}}$, which 
has the $N$ nodes that share the same communication channel of bandwidth $W$.
Denote by $P_i$ the transmitting power of node $i$, and by $h_{ij}$ the channel gain from the node $i$ to the node $j$, where $1\leq i \neq j \leq N$. So when the node $i$ transmits, the node $j$ can receive the signal power $h_{ij}P_i$. 
Denote by $R_{ij}^{\mathrm{code}}$ the coding rate from the node $i$ to the node $j$, where $1\leq i \neq j \leq N$. Here we assume $h_{ij}$, $R_{ij}^{\mathrm{code}}$ and $P_i$ do not change over time.

Based on the physical model $\mc N^{\mathrm{phy}}$, we can derive a network model $\mc N$ in~Sec.~\ref{sec:dnm}. The link set $\mc L = \{l_{ij}\triangleq(i,j): R_{ij}^{\mathrm{code}}>0\}$. For any $l_{ij}\in \mc L$ and $\theta\subset \mc L$, we say $\theta$ is in the collision set $\mc I(l_{ij})$ if
\begin{equation*}
  \frac{1}{2} \log \left( 1 + \mathrm{SINR} \right) \leq R_{ij}^{\mathrm{code}}, 
\end{equation*}
where the signal-to-interference-and-noise ratio $\mathrm{SINR} = \frac{h_{ij}P_{i}}{\sum_{l \in \theta} h_{\tx_l j}P_{\tx_l}+N_0W}$, and $N_0$ is the power spectral density of the white noise process. $\mc N^{\mathrm{phy}}$ and $\mc N$ share the same delay matrix.

If a collision-free schedule $S$ of $\mc N$ is applied to $\mc N^{\mathrm{phy}}$, for each link $(i,j)\in \mc L$, rate $R_{ij}^{\mathrm{code}}$ can be achieved for any active timeslot. Hence, we obtain an achievable rate vector $(R_{ij}, 1\leq i \neq j \leq N)$ for $\mc N^{\mathrm{phy}}$, where
\begin{equation*}
  R_{ij} =
  \begin{cases}
    R_{ij}^{\mathrm{code}}R^{\mc N}_{S}(l_{ij}), & l_{ij} \in \mc L, \\
    0, & \text{otherwise}.
  \end{cases}
\end{equation*}
Therefore, the rate region of ${\mc N}$ induces an achievable rate region of 
$\mc N^{\mathrm{phy}}$.


\begin{thebibliography}{10}
\providecommand{\url}[1]{#1}
\csname url@samestyle\endcsname
\providecommand{\newblock}{\relax}
\providecommand{\bibinfo}[2]{#2}
\providecommand{\BIBentrySTDinterwordspacing}{\spaceskip=0pt\relax}
\providecommand{\BIBentryALTinterwordstretchfactor}{4}
\providecommand{\BIBentryALTinterwordspacing}{\spaceskip=\fontdimen2\font plus
\BIBentryALTinterwordstretchfactor\fontdimen3\font minus
  \fontdimen4\font\relax}
\providecommand{\BIBforeignlanguage}[2]{{%
\expandafter\ifx\csname l@#1\endcsname\relax
\typeout{** WARNING: IEEEtran.bst: No hyphenation pattern has been}%
\typeout{** loaded for the language `#1'. Using the pattern for}%
\typeout{** the default language instead.}%
\else
\language=\csname l@#1\endcsname
\fi
#2}}
\providecommand{\BIBdecl}{\relax}
\BIBdecl

\bibitem{jma21scheduling}
J.~Ma, Y.~Liu, and S.~Yang, ``Rate region of scheduling a wireless network with
  discrete propagation delays,'' in \emph{IEEE INFOCOM 2021 - IEEE Conference
  on Computer Communications}, 2021, pp. 1--10.

\bibitem{proakis2003intersymbol}
J.~G. Proakis, ``Intersymbol interference in digital communication systems,''
  \emph{Wiley Encyclopedia of Telecommunications}, 2003.

\bibitem{hsu2009st}
C.-C. Hsu, K.-F. Lai, C.-F. Chou, and K.-J. Lin, ``{ST-MAC}: Spatial-temporal
  {MAC} scheduling for underwater sensor networks,'' in \emph{Proc. IEEE
  {INFOCOM} 2009}, Rio de Janeiro, Brazil, Apr 2009, pp. 1827--1835.

\bibitem{guan2011mac}
Y.~Guan, C.-C. Shen, and J.~Yackoski, ``{MAC} scheduling for high throughput
  underwater acoustic networks,'' in \emph{Proc. IEEE WCNC 2011}, Cancún,
  Mexico, Mar 2011, pp. 197--202.

\bibitem{grokop2011interference}
L.~H. Grokop, D.~N.~C. Tse, and R.~D. Yates, ``Interference alignment for
  line-of-sight channels,'' \emph{IEEE Transactions on Information Theory},
  vol.~57, no.~9, pp. 5820--5839, 2011.

\bibitem{chitre2012throughput}
M.~Chitre, M.~Motani, and S.~Shahabudeen, ``Throughput of networks with large
  propagation delays,'' \emph{IEEE J. Ocean. Eng.}, vol.~37, no.~4, pp.
  645--658, Jul 2012.

\bibitem{anjangi16}
P.~{Anjangi} and M.~{Chitre}, ``Unslotted transmission schedules for practical
  underwater acoustic multihop grid networks with large propagation delays,''
  in \emph{Proc. IEEE UComms 2016}, Lerici, Italy, Aug 2016, pp. 1--5.

\bibitem{bai2017throughput}
W.~Bai, M.~Motani, and H.~Wang, ``On the throughput of linear unicast
  underwater networks,'' in \emph{Proc. IEEE GLOBECOM 2017}, Singapore, Dec
  2017, pp. 1--6.

\bibitem{ma2019hybrid}
J.~Ma and S.~Yang, ``A hybrid physical-layer network coding approach for
  bidirectional underwater acoustic networks,'' in \emph{Proc. MTS/IEEE
  OCEANS'19}, Marseille, France, Jun 2019, pp. 1--8.

\bibitem{hajek1988link}
B.~Hajek and G.~Sasaki, ``Link scheduling in polynomial time,'' \emph{IEEE
  Trans. Inf. Theory}, vol.~34, no.~5, pp. 910--917, Sept 1988.

\bibitem{ephremides1990scheduling}
A.~Ephremides and T.~V. Truong, ``Scheduling broadcasts in multihop radio
  networks,'' \emph{IEEE Trans. Commun}, vol.~38, no.~4, pp. 456--460, April
  1990.

\bibitem{tassiulas92stability}
L.~{Tassiulas} and A.~{Ephremides}, ``Stability properties of constrained
  queueing systems and scheduling policies for maximum throughput in multihop
  radio networks,'' \emph{IEEE Trans. Autom. Control}, vol.~37, no.~12, pp.
  1936--1948, Dec 1992.

\bibitem{sharma2006complexity}
G.~Sharma, R.~R. Mazumdar, and N.~B. Shroff, ``On the complexity of scheduling
  in wireless networks,'' in \emph{Proc. ACM MobiCom'06}, Sep 2006, pp.
  227--238.

\bibitem{le2013pmac}
S.~N. Le, Y.~Zhu, Z.~Peng, J.-H. Cui, and Z.~Jiang, ``{PMAC}: a real-world case
  study of underwater {MAC},'' in \emph{Proc. ACM WUWNet'13}, Kaohsiung,
  Taiwan, Nov 2013, pp. 1--8.

\bibitem{fan22isit}
Y.~Fan, Y.~Liu, and S.~Yang, ``Continuity of link scheduling rate region for
  wireless networks with propagation delays,'' in \emph{2022 IEEE International
  Symposium on Information Theory (ISIT)}.\hskip 1em plus 0.5em minus
  0.4em\relax IEEE, 2022.

\bibitem{zeng2014shark}
H.~Zeng, Y.~T. Hou, Y.~Shi, W.~Lou, S.~Kompella, and S.~F. Midkiff,
  ``{SHARK-IA}: An interference alignment algorithm for multi-hop underwater
  acoustic networks with large propagation delays,'' in \emph{Proceedings of
  the international conference on underwater networks \& systems}, 2014, pp.
  1--8.

\bibitem{zhao2021message}
N.~Zhao, N.~Yao, and Z.~Gao, ``A message transmission scheduling algorithm
  based on time-domain interference alignment in {UWAN}s,'' \emph{Peer-to-Peer
  Networking and Applications}, vol.~14, no.~3, pp. 1058--1070, 2021.

\bibitem{wan2015ofdm}
L.~Wan, H.~Zhou, X.~Xu, Y.~Huang, S.~Zhou, Z.~Shi, and J.-H. Cui, ``Adaptive
  modulation and coding for underwater acoustic ofdm,'' \emph{IEEE Journal of
  Oceanic Engineering}, vol.~40, no.~2, pp. 327--336, 2015.

\bibitem{o2019perspective}
J.~F. O’Hara, S.~Ekin, W.~Choi, and I.~Song, ``A perspective on terahertz
  next-generation wireless communications,'' \emph{Technologies}, vol.~7,
  no.~2, p.~43, 2019.

\bibitem{arikan1984some}
E.~Arikan, ``Some complexity results about packet radio networks (corresp.),''
  \emph{IEEE Transactions on Information Theory}, vol.~30, no.~4, pp. 681--685,
  1984.

\bibitem{jain2003impact}
K.~Jain, J.~Padhye, V.~N. Padmanabhan, and L.~Qiu, ``Impact of interference on
  multi-hop wireless network performance,'' in \emph{Proceedings of the 9th
  annual international conference on Mobile computing and networking}, 2003,
  pp. 66--80.

\bibitem{lin2006}
X.~{Lin}, N.~B. {Shroff}, and R.~{Srikant}, ``A tutorial on cross-layer
  optimization in wireless networks,'' \emph{J. Sel. Areas Commun.}, vol.~24,
  no.~8, pp. 1452--1463, Jul 2006.

\bibitem{wu2007scheduling}
X.~Wu, R.~Srikant, and J.~R. Perkins, ``Scheduling efficiency of distributed
  greedy scheduling algorithms in wireless networks,'' \emph{IEEE transactions
  on mobile computing}, vol.~6, no.~6, pp. 595--605, 2007.

\bibitem{chaporkar2008throughput}
P.~Chaporkar, K.~Kar, X.~Luo, and S.~Sarkar, ``Throughput and fairness
  guarantees through maximal scheduling in wireless networks,'' \emph{IEEE
  Transactions on Information theory}, vol.~54, no.~2, pp. 572--594, 2008.

\bibitem{mceliece1994performance}
R.~J. McEliece and K.~N. Sivarajan, ``Performance limits for channelized
  cellular telephone systems,'' \emph{IEEE Transactions on Information Theory},
  vol.~40, no.~1, pp. 21--34, 1994.

\bibitem{sarkar1998hypergraph}
S.~Sarkar and K.~N. Sivarajan, ``Hypergraph models for cellular mobile
  communication systems,'' \emph{IEEE Transactions on Vehicular Technology},
  vol.~47, no.~2, pp. 460--471, 1998.

\bibitem{li2012maximal}
Q.~Li and R.~Negi, ``Maximal scheduling in wireless ad hoc networks with
  hypergraph interference models,'' \emph{IEEE Transactions on Vehicular
  Technology}, vol.~61, no.~1, pp. 297--310, 2012.

\bibitem{zhang2016radio}
H.~Zhang, L.~Song, and Z.~Han, ``Radio resource allocation for device-to-device
  underlay communication using hypergraph theory,'' \emph{IEEE Transactions on
  Wireless Communications}, vol.~15, no.~7, pp. 4852--4861, 2016.

\bibitem{ganesan2021some}
A.~Ganesan, ``On some distributed scheduling algorithms for wireless networks
  with hypergraph interference models,'' \emph{IEEE Transactions on Information
  Theory}, vol.~67, no.~5, pp. 2952--2957, 2021.

\bibitem{tse2005fundamentals}
D.~Tse and P.~Viswanath, \emph{Fundamentals of wireless communication}.\hskip
  1em plus 0.5em minus 0.4em\relax Cambridge university press, 2005.

\bibitem{johnson1975finding}
D.~B. Johnson, ``Finding all the elementary circuits of a directed graph,''
  \emph{SIAM Journal on Computing}, vol.~4, no.~1, pp. 77--84, Mar 1975.

\bibitem{karypis1997multilevel}
G.~Karypis, R.~Aggarwal, V.~Kumar, and S.~Shekhar, ``Multilevel hypergraph
  partitioning: Applications in vlsi domain,'' \emph{Proc. ACM/IEEE Design
  Automation Conf}, vol.~7, no.~1, pp. 526--529, 1997.

\bibitem{alon1999independent}
N.~Alon, U.~Arad, and Y.~Azar, ``Independent sets in hypergraphs with
  applications to routing via fixed paths,'' in \emph{Randomization,
  Approximation, and Combinatorial Optimization. Algorithms and
  Techniques}.\hskip 1em plus 0.5em minus 0.4em\relax Springer, 1999, pp.
  16--27.

\bibitem{harrod2011modeling}
S.~Harrod, ``Modeling network transition constraints with hypergraphs,''
  \emph{Transportation Science}, vol.~45, no.~1, pp. 81--97, 2011.

\bibitem{gallo1993directed}
G.~Gallo, G.~Longo, S.~Pallottino, and S.~Nguyen, ``Directed hypergraphs and
  applications,'' \emph{Discrete applied mathematics}, vol.~42, no. 2-3, pp.
  177--201, 1993.

\bibitem{orlin1984some}
J.~B. Orlin, ``Some problems on dynamic/periodic graphs,'' in \emph{Progress in
  Combinatorial Optimization}.\hskip 1em plus 0.5em minus 0.4em\relax W. R.
  Pulleyblank, Ed. Orlando, FL: Academic Press, Jan 1984, pp. 273--293.

\bibitem{beckenbach2019matchings}
I.~L. Beckenbach, ``Matchings and flows in hypergraphs,'' Ph.D. dissertation,
  Freie Universit\"{a}t Berlin, 2019.

\bibitem{moon1965cliques}
J.~W. Moon and L.~Moser, ``On cliques in graphs,'' \emph{Israel journal of
  Mathematics}, vol.~3, no.~1, pp. 23--28, 1965.

\bibitem{gleiss2003circuit}
P.~Gleiss, J.~Leydold, and P.~Stadler, ``Circuit bases of strongly connected
  digraphs,'' \emph{Discussiones Mathematicae Graph Theory}, vol.~2, no.~23,
  pp. 241--260, Oct 2003.

\bibitem{lin2006impact}
X.~Lin and N.~B. Shroff, ``The impact of imperfect scheduling on cross-layer
  congestion control in wireless networks,'' \emph{IEEE/ACM transactions on
  networking}, vol.~14, no.~2, pp. 302--315, 2006.

\bibitem{liu2006new}
H.~Liu and J.~Wang, ``A new way to enumerate cycles in graph,'' in
  \emph{AICT-ICIW'06}.\hskip 1em plus 0.5em minus 0.4em\relax IEEE, 2006, pp.
  57--57.

\bibitem{bron1973algorithm}
C.~Bron and J.~Kerbosch, ``Algorithm 457: finding all cliques of an undirected
  graph,'' \emph{Commun. ACM}, vol.~16, no.~9, pp. 575--577, Sept 1973.

\bibitem{tomita2006worst}
E.~Tomita, A.~Tanaka, and H.~Takahashi, ``The worst-case time complexity for
  generating all maximal cliques and computational experiments,''
  \emph{Theoretical computer science}, vol. 363, no.~1, pp. 28--42, 2006.

\bibitem{cazals2008note}
F.~Cazals and C.~Karande, ``A note on the problem of reporting maximal
  cliques,'' \emph{Theoretical computer science}, vol. 407, no. 1-3, pp.
  564--568, 2008.

\bibitem{eppstein2013listing}
D.~Eppstein, M.~L{\"o}ffler, and D.~Strash, ``Listing all maximal cliques in
  large sparse real-world graphs,'' \emph{Journal of Experimental Algorithmics
  (JEA)}, vol.~18, pp. 1--21, 2013.

\bibitem{beame1990parallel}
P.~Beame and M.~Luby, ``Parallel search for maximal independence given minimal
  dependence,'' in \emph{Proceedings of the first annual ACM-SIAM symposium on
  Discrete algorithms}, 1990, pp. 212--218.

\bibitem{dahlhaus1992efficient}
E.~Dahlhaus, M.~Karpinski, and P.~Kelsen, ``An efficient parallel algorithm for
  computing a maximal independent set in a hypergraph of dimension 3,''
  \emph{Information Processing Letters}, vol.~42, no.~6, pp. 309--313, 1992.

\bibitem{bercea2014computing}
I.~O. Bercea, N.~Goyal, D.~G. Harris, and A.~Srinivasan, ``On computing maximal
  independent sets of hypergraphs in parallel,'' in \emph{Proceedings of the
  26th ACM symposium on Parallelism in algorithms and architectures}, 2014, pp.
  42--50.

\bibitem{bollobas1998modern}
B.~Bollob{\'a}s, \emph{Modern graph theory}.\hskip 1em plus 0.5em minus
  0.4em\relax Springer Science \& Business Media, 1998, vol. 184.

\bibitem{schrijver2003combinatorial}
A.~Schrijver \emph{et~al.}, \emph{Combinatorial optimization: polyhedra and
  efficiency}.\hskip 1em plus 0.5em minus 0.4em\relax Springer, 2003, vol.~24,
  no.~2.

\bibitem{grunbaum2003convex}
B.~Gr{\"u}nbaum, \emph{Convex polytopes}, ser. Graduate Texts in Mathematics,
  V.~Kaibel, V.~Klee, and G.~M. Ziegler, Eds.\hskip 1em plus 0.5em minus
  0.4em\relax Springer, 2003, vol. 221.

\bibitem{juliagraphs2021}
\BIBentryALTinterwordspacing
``Julia{G}raphs/{G}raphs.jl: an optimized graphs package for the {J}ulia
  programming language,'' 2023. [Online]. Available:
  \url{https://github.com/JuliaGraphs/Graphs.jl}
\BIBentrySTDinterwordspacing

\bibitem{juliabt2023}
\BIBentryALTinterwordspacing
``Julia{CI}/benchmarktools.jl: a benchmarking framework for the {J}ulia
  language,'' 2023. [Online]. Available:
  \url{https://github.com/JuliaCI/BenchmarkTools.jl}
\BIBentrySTDinterwordspacing

\end{thebibliography}
\end{document}